\documentclass[12pt]{amsart}  


%
\usepackage{graphicx}
\usepackage{here} 
\usepackage{array} 


\usepackage{vmargin}
\usepackage{amssymb}
\usepackage{mathrsfs}
\usepackage[all]{xy}
\usepackage[usenames,dvipsnames]{color}
\usepackage{soul}
\RequirePackage[colorlinks,linkcolor=blue,citecolor=LimeGreen,urlcolor=red]{hyperref} 
\usepackage{amsmath}



\newtheorem{theorem}{Theorem}[section]

\newtheorem{cor}[theorem]{Corollary}

\newtheorem{lemma}[theorem]{Lemma}

\newtheorem{prop}[theorem]{Proposition}
\newtheorem{remark}[theorem]{Remark}

\numberwithin{equation}{section}


\newcommand{\R}{\mathbb{R}}

\newcommand{\N}{\mathbb{N}}

\newcommand{\T}{\mathbb{T}}


\newcommand{\abs}[1]{\left|#1\right|}
\newcommand{\eps}{\varepsilon}
\newcommand{\norm}[1]{\left\|#1\right\|}

\renewcommand{\leq}{\leqslant}
\renewcommand{\geq}{\geqslant}
\renewcommand{\bar}{\overline}
\renewcommand{\tilde}{\widetilde}
\newcommand{\pa}[1]{\left(#1\right)}
\newcommand{\cro}[1]{\left[#1\right]}
\newcommand{\br}[1]{\left\{#1\right\}}
\newcommand\restr[2]{{
  \left.\kern-\nulldelimiterspace 
  #1 
  \right|_{ #2} 
  }}

\newcommand{\eu}{\mathrm{e}}
\newcommand{\dual}[2]{\langle #1, #2\rangle}

\makeatletter
\def\namedlabel#1#2{\begingroup
    #2%
    \def\@currentlabel{#2}%
    \phantomsection\label{#1}\endgroup
}
\makeatother



\setcounter{tocdepth}{1} 	



\def\signmarc{\bigskip \begin{center} {\sc
Marc Briant\par\vspace{3mm}
Universit\'e de Paris,\par 
MAP5, CNRS UMR 8145 \par 
F-75006 Paris, France \par
\vspace{3mm}
e-mail:} \tt{briant.maths@gmail.com} \end{center}}

\def\signarnaud{\bigskip \begin{center} {\sc
Arnaud Debussche\par\vspace{3mm}
ENS Rennes,\par
Avenue Robert Schumann,\par
35170 Bruz, France\par
\vspace{3mm}
e-mail:} \tt{arnaud.debussche@ens-rennes.fr} \end{center}}

\def\signjulien{\bigskip \begin{center} {\sc
Julien Vovelle\par\vspace{3mm}
UMPA UMR 5669 CNRS,\par 
ENS de Lyon site Monod,\par
46, allée d'Italie,\par 
9364 Lyon Cedex 07, France \par 
\vspace{3mm}
e-mail:} \tt{julien.vovelle@ens-lyon.fr} \end{center}}

\begin{document} 

\title[Boltzmann equation with external forces]{The Boltzmann equation with an external force on the torus: Incompressible Navier-Stokes-Fourier hydrodynamical limit}
\author{Marc Briant, Arnaud Debussche, Julien Vovelle}

\begin{abstract}
We study the Boltzmann equation with external forces, not necessarily deriving from a potential, in the incompressible Navier-Stokes perturbative regime. On the torus, we establish Cauchy theories that are independent of the Knudsen number in Sobolev spaces. The existence is proved around a time-dependent Maxwellian that behaves like the global equilibrium both as time grows and as the Knudsen number decreases. We combine hypocoercive properties of linearized Boltzmann operators with linearization around a time-dependent Maxwellian that catches the fluctuations of the characteristics trajectories due to the presence of the force. This uniform theory is sufficiently robust to derive the incompressible Navier-Stokes-Fourier system with an external force from the Boltzmann equation. Neither smallness, nor time-decaying assumption is required for the external force, nor a gradient form, and we deal with general hard potential and cut-off Boltzmann kernels. 
As a by-product the latest general theories for unit Knudsen number when the force is sufficiently small and decays in time are recovered.
\end{abstract}

\maketitle

\vspace*{10mm}

\textbf{Keywords:} Boltzmann equation with external force, Hydrodynamical limit, Incompressible Navier-Stokes equation, Hypocoercivity, Knudsen number. 


\tableofcontents


\section{Introduction}\label{sec:intro}

The Boltzmann equation is used to model rarefied gas dynamics when particles undergo elastic binary collisions, when one studies the gas from a mesoscopic point of view. It describes the time evolution of $f=f(t,x,v)$: the distribution of the particles constituing the gas in position $x$ and velocity $v$. The equation can be derived from Newton's law under the assumption of rarefied gases \cite{CIP} and it reads
\begin{equation}\label{eq:BE}
\tau\partial_t F + v\cdot\nabla_x F = \frac{1}{\mathrm{Kn}}Q(F,F).
\end{equation}
The parameter $\mathrm{Kn}$ is a physical parameter, called the Knudsen number, that gauges the continuity of the gas. Physically speaking, a small Knudsen number indicates that fluid equations are more accurate to describe the gas. The parameter $\tau$ in \eqref{eq:BE} is a relaxation time.
\par For given ranges of the parameters $\tau$ and $\mathrm{Kn}$, one can show that the physical observables - mass, momentum and energy - of the solution $F$ converge are well approached by solutions of acoustic equations or Euler equation or incompressible Navier-Stokes equations, among others. We refer to \cite{Villani02,StRay,Gol14} for a deep discussions on the matter. We will consider the regime
\[
\tau=\mathrm{Kn}=\eps,
\]
with $\eps\to 0$. Describing by the decomposition $F = \mu + \eps f$ the fluctuations of amplitude $\eps$ of the solution $F$ around a global equilibrium $\mu$, we expect an asymptotic description of $f$ in terms of the incompressible Navier-Stokes-Fourier equations:
\begin{align}
\partial_t u - \nu \Delta u + u\cdot \nabla u + \nabla p = 0, \nonumber
\\ \nabla \cdot u = 0, \label{eq:NS}
\\ \partial_t \theta - \kappa \Delta \theta + u\cdot \nabla \theta = 0, \nonumber
\end{align}
together with the Boussinesq relation
\begin{equation}\label{eq:Boussinesq}
\nabla(\rho + \theta) = 0.
\end{equation}
It is interesting to mention that due to initial conservation laws for the Boltzmann equation, the Boussinesq equation actually imposes $\rho+\theta =0$, which in turns gives $\eqref{eq:NS}$ \cite{Gol14}.
\par The resulting perturbative Boltzmann equation is
\begin{equation}\label{eq:perturbed BE}
\partial_t f  + \frac{1}{\eps}v\cdot \nabla_x f = \frac{1}{\eps^2}L[f] + \frac{1}{\eps}\tilde{Q}(f,f)
\end{equation}
where $L$ is a linear operator. 
\par Describing the evolution of the macroscopic parameters, the density, the momentum and the energy associated to $f$ as $\eps$ tends to $0$ has been the subject of numerous works starting from the \textit{a priori} very weak convergence given by the Bardos-Golse-Levermore program \cite{BardosGolseLevermore91} and using a wide range of tools from spectral theory in Fourier space \cite{EllPin,BarUka} to the setting of renormalized solutions \cite{GolSt1,GolSt2}. We point out \cite{Gu4,Bri3} in particular as they rely on two different manifestations of a very important property of the Boltzmann linear operator $L$: its hypocoercivity, which will play a central part in our study. Note that one may differentiate here the perturbative approach of References \cite{BarUka,Gu4,Bri3}, for example, from the approach ``in the large'' of \cite{BardosGolseLevermore89,BardosGolseLevermore91,BardosGolseLevermore93,BardosGolseLevermore98,BardosGolseLevermore00,
LionsMasmoudi01,GolseLevermore02,GolSt1,GolSt2,LevermoreMasmoudi10,Arsenio12}.

\bigskip
The present article focuses on the Boltzmann equation when the gas under consideration is evolving on the $d$-dimensional torus $\T^d$ and is influenced by an external force $\vec{E_t}(x)$. We would like to derive the incompressible Navier-Stokes-Fourier hydrodynamical limit of the latter. In this setting, the Boltzmann equation reads, for $(t,x,v)$ in $[0,T_{\mbox{\footnotesize{max}}})\times\T^d\times\R^d$,
\begin{equation}\label{eq:BE force}
\partial_t F(t,x,v) + \frac{1}{\eps} v\cdot\nabla_x F(t,x,v) + \eps \vec{E_t}(x) \cdot \nabla_vF(t,x,v) = \frac{1}{\eps} Q(F,F)(t,x,v).
\end{equation}
The bilinear operator $Q(g,h)$ is given under its symmetric form:
$$Q(g,h) =  \frac{1}{2}\int_{\R^d\times \mathbb{S}^{d-1}}\Phi\left(|v - v_*|\right)b\left( \mbox{cos}\theta\right)\left[h'g'_*+h'_*g' - hg_*-h_*g\right]dv_*d\sigma,$$
where $f'$, $f_*$, $f'_*$ and $f$ are the values taken by $f$ at $v'$, $v_*$, $v'_*$ and $v$ respectively. Define:
$$\left\{ \begin{array}{rl}&\displaystyle{v' = \frac{v+v_*}{2} +  \frac{|v-v_*|}{2}\sigma} \vspace{2mm} \\ \vspace{2mm} &\displaystyle{v' _*= \frac{v+v_*}{2}  -  \frac{|v-v_*|}{2}\sigma} \end{array}\right., \: \mbox{and} \quad \mbox{cos}\:\theta = \left\langle \frac{v-v_*}{\abs{v-v_*}},\sigma\right\rangle .$$
All along this paper we consider the Boltzmann equation with assumptions
\begin{itemize}
\item[(H1)] \textit{Hard potential} or \textit{Maxwellian potential} ($\gamma=0$), that is to say there is a constant $C_\Phi >0$ such that
\begin{equation}\label{HypH1}
\Phi(z) = C_\Phi z^\gamma \:,\:\: \gamma \in [0,1].
\end{equation}
\item[(H2)] Strong Grad's \textit{angular cutoff} \cite{Grad58}, expressed here by the fact that we assume the non-negative function $b$ to be $C^1$ with the following controls
\begin{equation}\label{HypH2}
\forall z \in [-1,1], \: b(z), \abs{b'(z)} \leq C_b.
\end{equation}
\end{itemize}

\begin{remark} We may relax \eqref{HypH1} into $\Phi(z) \asymp z^\gamma$, in the sense that $C^1_\Phi z^\gamma\leq\Phi(z)\leq C^2_\Phi z^\gamma$ for all $z$, where $C^1_\Phi$ and $C^2_\Phi$ are two positive constants. We may also assume, instead of \eqref{HypH2}, that
\begin{equation}\label{GradCutOff}
\sup_{z\in(-1,1)} b(z)<+\infty,
\end{equation}
and that the non-degeneracy hypothesis
\begin{equation}\label{NDb}
\inf_{\sigma_2,\sigma_3\in\mathbb{S}^{d-1}}\int_{\sigma_3\in\mathbb{S}^{d-1}}\min\{b(\sigma_1\cdot\sigma_3),b(\sigma_2\cdot\sigma_3)\}d\sigma_3>0,
\end{equation} 
is satisfied. Under \eqref{HypH2}, we can use \cite{BarMou} to get a spectral gap estimate on the linearized operator $L$ (see \eqref{spectral gap L}), while, under \eqref{GradCutOff}-\eqref{NDb}, this is \cite{Mou} that can be applied.
\end{remark}


There are two direct observations one can make comparing $\eqref{eq:BE force}$ to the standard Boltzmann equation $\eqref{eq:BE}$. Firstly, the conservation of momentum and energy do not hold, and we are only left with the \textit{a priori} mass conservation
\begin{equation}\label{eq:conservation mass}
\forall t \in [0,T_{\mbox{\footnotesize{max}}}), \quad \frac{d}{dt}\int_{\T^d\times\R^d} F(t,x,v)dxdv =0.
\end{equation}
Secondly, the global equilibrium of the Boltzmann equation
\begin{equation}\label{eq:mu}
\forall v \in R^d,\quad \mu(v) = \frac{1}{(2\pi)^{d/2}}e^{-\frac{\abs{v}^2}{2}},
\end{equation}
which satisfies $Q(\mu,\mu) =0$ is no longer a stationary solution to $\eqref{eq:BE force}$. However, as $\eps$ vanishes we expect the dynamics of the Boltzmann equation with external force to converge towards $\mu(v)$. We aim at constructing an existence and uniqueness theory in Sobolev spaces for solutions to $\eqref{eq:BE force}$ uniformly in $\eps$. We shall look for solutions in a perturbative setting, mimicking the classical decomposition $F= \mu +\eps f$, that will catch the hydrodynamical regime of the incompressible Navier-Stokes-Fourier with external force. More precisely, we intend to show that if, at initial time, $F_0$ is sufficiently close to $\mu(v)$, then so is $F(t)$. Moreover the perturbations of the mass, momentum and energy:
\begin{align}
\rho_\eps(t,x) &= \eps^{-1}\int_{\R^d} \cro{F(t,x,v) - \mu(v)}dv \label{eq:mass}
\\u_\eps(t,x) &= \eps^{-1}\int_{\R^d} v\cro{F(t,x,v) - \mu(v)}dv \label{eq:momentum}
\\\theta_\eps(t,x) &= \eps^{-1}\int_{\R^d} \frac{\abs{v}^2-d}{\sqrt{2d}}\cro{F(t,x,v) - \mu(v)}dv \label{eq:temperature}
\end{align}
converge to $(\rho,u,\theta)$, which are Leray solutions to the following Navier-Stokes-Fou\-rier's system \eqref{eq:NS force}-\eqref{eq:Fourier force}: 
\begin{align}
\partial_t u - \nu \Delta u + u\cdot \nabla u + \nabla p = \frac{\vec{E_t}(x)}{2}, \nonumber
\\ \nabla \cdot u = 0, \label{eq:NS force}
\\ \partial_t \theta - \kappa \Delta \theta + u\cdot \nabla \theta = 0, \label{eq:Fourier force}
\end{align}
together with the Boussinesq relation $\eqref{eq:Boussinesq}$. Leray solutions of the latter means weak solutions integrated against test functions with null divergence. 
We show that $(\rho,u,\theta)$ are solutions in this weak sense, but, due to the estimates in Sobolev spaces with high indexes that we obtain (\textit{cf.} Theorem~\ref{theo:hydro lim}), these solutions are classical, regular solutions close to the equilibrium state $(1,0,1)$..


\bigskip
The present hydrodynamical problem has not been addressed yet in the mathematical litterature, and even the works with $\eps=1$ on the Boltzmann equation with an external force $\eqref{eq:BE force}$ in full generality are scarce. The main issue being that ve\-lo\-ci\-ty derivatives can grow very rapidly. To our knowledge only \cite{DUYZ2008} deals with general $\vec{E_t}(x)$: they solve the perturbative Cauchy theory around $\mu$ in Sobolev spaces for $\eps=1$ as long as the force $\vec{E_t}(x)$ is small and decreases to $0$ as time increases (the latter assumption is removable if $d\geq 5$ or if one solely deals with linear terms). 
\par There have been several studies for $\eps=1$ when the force comes from a potential $\vec{E_t}(x) = \nabla_x V_t(x)$. The latest result in this setting seems to be \cite{Kim2014} and deals with large potential in an $L^\infty$ framework, we refer to the references therein for the potential force framework. This framework is however irrelevant to derive Navier-Stokes-Fourier system with force because one can only construct Leray solutions from Boltzmann-type equations and such solutions do not see gradient terms.
\par At last we would like to present a related issue, still for $\eps=1$, that is when the force is nonlinear: $\vec{E_t}(x)= \vec{E_t}[f](x)$. This happens in electromagnetism for instance. Several results have been obtained in these settings in Sobolev space for perturbation of the global equilibrium. The advantage of this nonlinearity is a feedback that keeps the smallness of the force along the flow. We point out Vlasov-Poisson-Boltzman equations \cite{Guo2002,DYZ2012,DYZ2013,XXZ2013,XXZ2017} or  Vlasov-Maxwell-Boltzmann equations \cite{DLY2013}, the strategies of which will prove themselves useful in our methods. See also \cite{ArsenioSaintRaymond2019} for a non-perturbative approach to those systems.
\par One of the main issue when dealing with the Boltzmann equation with an external force comes from the fact that the perturbative regime $F = \mu + f$ gives rise to a differential equation in $f$ that includes the term $\vec{E_t}(x)(x)\cdot v f$. The latter generates a loss of weight in standard Sobolev estimates. The latest result we are aware of for general non-potential forces comes from \cite{DUYZ2008}, where the authors work in the whole spatial domain $\R^d$ close to $\mu$ and $\vec{E_t}(x)$ is assumed to be small and time-decaying like $(1+t)^{-\alpha}$ if $3\leq d < 5$. Moreover their collision operator must satisfy the hard spheres assumption $\gamma=1$ and $b(\cos\theta) =1$. Working with a hard sphere kernel was mandatory in \cite{DUYZ2008} to compensate the loss of weight in $v$, by means of the negative feedback of the linear Boltzmann operator, which generates a gain of $1+\abs{v}^\gamma$ (see \eqref{spectral gap L}). Note however, that when only studying the semigroup generated by the linear part of the pertubative regime they do not need any time-decay for $\vec{E_t}$. It is important to understand that $\mu$ is no longer a stationary state when $\vec{E_t}(x)\neq 0$ so one hope that $\mu$ shall be stable when the force is very small or in the limit $\eps$ tends to $0$ where formal Chapman-Enskog expansion easily shows that the first order term must be $\mu$. To deal with non small force we propose a different regime.
\par The idea we have arises from the time-dependent norms proposed in \cite{DYZ2013,XXZ2017,DLY2013}, that  compensate the increase of weight due to the nonlinearity of the external force. On the other hand, in a completely different setting, \cite{AcevesSanchezCesbron2019} linked the external force in a fractional Vlasov-Fokker-Planck equation to a new equilibrium that evolves with the external force. The new equilibrium can be explicitely written for the fractional Vlasov-Fokker-Planck equation whereas the non-local part of the Boltzmann equation seems to prevent such a direct treatment. However, we try to combine the two point of views described above : we cannot explicitely extract a new equilibrium for the Boltzmann equation with external force so we fake it by studying the equation around a Maxwellian distribution that depends on $\vec{E_t}(x)$. Such an approach sees the external force as a fluctuation of the classical characteristics of the Boltzmann equation rather than a direct interaction on the solution. We are therefore able to relax the hard sphere assumption, as the loss of weight generated by the external force is effectively compensated, although not by the non-positivity of the linear operator, but thanks to the negative feedback offered by the fluctuation of the Maxwellian. When $\vec{E_t}(x)$ is time-decaying, not only our strategy works for general hard potential kernels with angular cut-off, but it also enables to treat large forces. The core of the proof relies on the construction of twisted Sobolev norms, in the spirit of \cite{MouNeu,Bri3}, which pushes out the hypocoercivity of the Boltzmann linear equation. Namely, the commutator $[v\cdot \nabla_x,\nabla_v] = -\nabla_x$ offers a full negative feedback on $x$ derivatives and one thus would like to work with functional of the form
$$\norm{f}^2= a\norm{f}^2_{L^2_{x,v}}+ b\norm{\nabla_xf}^2_{L^2_{x,v}}+c\norm{\nabla_vf}^2_{L^2_{x,v}} + d\langle \nabla_x f, \nabla_v f\rangle_{L^2_{x,v}}$$
and equivalently in $H^s_{x,v}$ regularity.The mixed term in the twisted norm uses the commutator property and is sufficient in the classical case $\vec{E_t}=0$. The presence of the external force, however, requires a more subtle use of this mixed part that will have to compensate much more terms arising from pure spatial derivatives. We shall see the interplay between the negative feedback offered by the commutator on one side and the one offered by the fluctuation on the other side. The main issue being the negative feedback coming only from the orthogonal part of the solution when dealing with pure $x$ derivatives. Using commutator for fixed pure $x$-derivatives proved itself sufficient for the classical Boltzmann equation $\vec{E_t}=0$ but in our case they have to be dealt with at the same time.
\par Unfortunately, when $\vec{E_t}(x)$ does not display a time-decaying property we can only use this strategy on fixed time intervals $[0,T_0]$, for any $T_0 >0$, but not globally in time. However, and of important note, in the hydrodynamical regime $\eps \to 0$ our method provides solutions close to the global maxwellian $\mu(v)$.

All these thoughts make us look at the perturbative regime around
\begin{equation}\label{eq:M}
\forall (t,v) \in [0,+\infty)\times\R^d, \quad M(t,v) = \frac{e^{-\eps^{1+\eu}\frac{A}{1+t}}}{(2\pi)^{d/2}}e^{-\frac{\abs{v}^2}{2}\pa{1+\eps^{1+\eu}\frac{a}{1+t}}} = \mu e^{-\eps^{1+\eu}\frac{A+a\frac{\abs{v}^2}{2}}{1+t}}
\end{equation}
where $A$ and $a$ stand for positive constant that we shall define in due time. Of core importance, $\eu$ has to belong to $(0,1)$. We refer to Remark \ref{rem:e<1} and Remark \ref{rem:e>0} to understand that when $\eu=0$ then the fluctuation $M$ is not close enough to $\mu$ to perform a relevant hydrodynamical limit whereas when $\eu=1$ the fluctuation goes to fast towards $\mu$ compared to the variations of the characteristics.
\par We study the perturbative regime
\begin{equation}\label{eq:perturbation}
\forall (t,x,v)\in \R^+\times \T^d \times \R^d, \quad F(t,x,v) = M(t,x) + \eps M^{\frac{1}{2}}f(t,x,v)
\end{equation}
which leads to the following perturbative equation
\begin{equation}\label{eq:perturbative BE}
\partial_t f + \frac{1}{\eps}v\cdot\nabla_x f + \eps \vec{E_t}(x)\cdot\nabla_vf + \eps \mathcal{E}(t,x,v) f = \frac{1}{\eps^2}L[f] + \frac{1}{\eps}\Gamma[f,f] -2\mathcal{E}(t,x,v)M^{1/2}
\end{equation}
where $L$ and $\Gamma$ are respectively the standard linear and bilinear perturbative Boltzmann operators around $M$
\begin{eqnarray*}
L[f] &=& \frac{2}{\sqrt{M}}Q(M,\sqrt{M}f)
\\ \Gamma[f,f] &=& \frac{1}{\sqrt{M}}Q(\sqrt{M} f,\sqrt{M}f)
\end{eqnarray*}
and we defined the perturbative force term
\begin{equation}\label{eq:E ronde}
\forall (t,x,v)\in \R^+\times \T^d \times \R^d, \quad \mathcal{E}(t,x,v) = \frac{1}{2}\pa{\eps^\eu\frac{A+a\frac{\abs{v}^2}{2}}{(1+t)^{2}} - \pa{1+\eps^{1+\eu}\frac{a}{1+t}} \vec{E_t}(x)\cdot v}. 
\end{equation}

We conjecture that, by use of the maxwellian regularising properties of the compact part of $L$, one could directly solve the Cauchy problem around a global maxwel\-lian $F = \mu +\eps \sqrt{\mu} f$ for $f$ in $H^s_{x,v}\pa{e^{\eps^2\frac{A+a\abs{v}^{1+0}}{1+t}}}$ when $\vec{E_t}$ or $\eps$ are sufficiently small. However, it would implies some technicalities we did not want to tackle in the present manuscript, where we are only interested in the limit when $\eps$ vanishes: working in $L^2_{x,v}$ framework makes usual properties of $L[f]$ directly applicable, thus our proofs only emphasizes the characteristics fluctuations. Moreover, the strategy we use enables to compensate a quadratic loss of weight $\abs{v}^2$ (rather than the sole $\abs{v}$ specific to the present problem), which may suit further investigations for more complex forces.

\section{Main results}

\subsection{Notations}

For $j=(j_1,\dots,j_d)$ and $l=(l_1,\dots,l_d)$ multi-indexes we define
$$\partial^j_l f = \frac{\partial^{\abs{l}}}{\partial x_1^{l_1}\cdots \partial x_d^{l_d}}\frac{\partial^{\abs{j}}}{\partial v_1^{j_1}\cdots \partial v_d^{j_d}}f.$$
And we define the multi-index: $\forall i \in \br{1,\dots,d},\: \delta_i=(\delta_{ik})_{1\leq k\leq d}$.
For clarity purposes, and as it plays a central role for the linearized Boltzmann operator we shall use the shorthand notation
$$
\forall \beta \in \R,\quad \norm{f}_{L^2_\beta}=  \pa{\int_{\T^d\times\R^d} f(t)^2\langle v \rangle^\beta \:dxdv}^{1/2},\quad \langle v \rangle:=(1+|v|^2)^{1/2}.
$$
Finally, we shall index by $x$, $v$ or $(x,v)$ the norms that will be used::
$$\norm{f}_{L^2_x} = \pa{\int_{\T^d}f(x,v)^2dx}^{\frac{1}{2}},\quad \norm{f}_{L^2_v} = \pa{\int_{\R^d}f(x,v)^2dv}^{\frac{1}{2}}, $$
$$\norm{f}_{L^2_{x,v}} = \pa{\int_{\T^d\times\R^d}f(x,v)^2dxdv}^{\frac{1}{2}}.$$
The same notations apply for Sobolev spaces $H^s_x$ (only $x$-derivatives), $H^s_v$ (only $v$-derivatives) and $H^s_{x,v}$ (both derivatives).
\par In what follows any positive constant depending on a parameter $\alpha$ will be denoted $C_\alpha$. Note that we will not keep track on the dependencies over $d$, $\gamma$ or $b(\cos \theta)$.


\subsection{Results on Cauchy theories}

When $\vec{E_t}$ is a given force we shall prove the following Cauchy problem. We recall Definition $\eqref{eq:M}$ of the fluctuation of a global Maxwellian
$$\forall (t,v) \in [0,+\infty)\times\R^d, \quad M(t,v) = \frac{e^{-\eps^{1+\eu}\frac{A}{1+t}}}{(2\pi)^{d/2}}e^{-\frac{\abs{v}^2}{2}\pa{1+\eps^{1+\eu}\frac{a}{1+t}}} \quad\mbox{with}\quad e\in(0,1).$$
We get a Cauchy theory under the perturbative regime around a given $M$.

\begin{theorem}\label{theo:Cauchy deterministic}
Let the Boltzmann operator satisfies hypotheses $(H1)-(H2)$ and let $s$ be in $\N$. Further assume that $\vec{E_t}$ verifies 
\begin{equation}\label{HypE}
\norm{\vec{E_t}(x)}_{L^\infty_tW^{s,\infty}_x} \leq C_E,\qquad\forall t\geq 0,\int_{\T^d}{\vec{E_t}(x)}dx=0.
\end{equation}
There exists $s_0\in \N^*$ such that for any $s\geq s_0$ the following holds. Let $T_0 >0$ and $C_\mathrm{in}\geq 0$. There exists $\eps_{T_0,C_\mathrm{in},E,s} >0$ such that if $\eps=1$ or $0<\eps <\eps_{T_0,C_\mathrm{in},E,s}$, there exists a norm 
$$
\norm{\cdot}_{\mathcal{H}^s_{x,v}} \sim \sum\limits_{\abs{l}\leq s}\norm{\partial_l^0\cdot}_{L^2_{x,v}} + \eps \sum\limits_{\underset{\abs{j}\geq 1}{\abs{j}+ \abs{l}\leq s}}\norm{\partial_l^j\cdot}_{L^2_{x,v}}
$$
and $A_{T_0,E,s}$, $a_{T_0,E,s}$, $\delta_{T_0,E,s}$,  $C_{T_0,E,s}>0$ such that if $F_{in} = M + \eps \sqrt{M}\:f_{in}$ with
$$
\norm{f_{in}}_{\mathcal{H}^s_{x,v}} \leq \delta_{T_0,E,s} \quad\mbox{and}\quad \abs{\int_{\T^d\times\R^d}\pa{\begin{array}{c} 1 \\ v \\ \abs{v}^2\end{array}}f_{in}(x,v)\sqrt{M}|_{t=0}\:dxdv} \leq C_\mathrm{in} \eps^{\eu},
$$
then there exists a unique solution $F = M + \eps \sqrt{M}\:f$ on $[0,T_0)$ to the Boltzmann equation with external force $\eqref{eq:BE force}$ and it satisfies
$$\forall t \in [0,T_0),\quad \norm{f(t)}_{\mathcal{H}^s_{x,v}} \leq \max\br{\norm{f_{in}}_{\mathcal{H}^s_{x,v}}, C_{T_0,E,s}}.$$
All the constants can be computed explicitly and are independent of $\eps$.
\end{theorem}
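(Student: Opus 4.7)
The plan is to derive uniform-in-$\eps$ a priori estimates for \eqref{eq:perturbative BE} in a twisted Sobolev norm of the form
$$\norm{f}^2_{\mathcal{H}^s_{x,v}}\sim\sum_{\abs{l}\leq s}\alpha_l\norm{\partial^0_lf}^2_{L^2_{x,v}}+\eps^2\sum_{\underset{\abs{j}\geq 1}{\abs{j}+\abs{l}\leq s}}\beta_{j,l}\norm{\partial^j_lf}^2_{L^2_{x,v}}+\eps\sum_{i,j,l}\gamma_{j,l,i}\dual{\partial^j_lf}{\partial^{j-\delta_i}_{l+\delta_i}f}_{L^2_{x,v}},$$
with positive coefficients to be calibrated, and then to close them on $[0,T_0]$ via a continuation argument starting from a local (fixed-$\eps$) Cauchy theory. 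The mixed brackets follow the hypocoercive philosophy of \cite{MouNeu,Bri3}: differentiating them in time activates the commutator $[v\cdot\nabla_x,\nabla_v]=-\nabla_x$ and supplies a strictly negative feedback on pure $x$-derivatives that the skew-adjoint transport alone cannot provide.

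When we differentiate $\norm{f}^2_{\mathcal{H}^s_{x,v}}$ along the flow of \eqref{eq:perturbative BE}, the transport $\eps^{-1}v\cdot\nabla_x$ is skew-adjoint, $\eps^{-2}L$ produces the hypocoercive feedback $-\lambda\eps^{-2}\norm{\partial^j_lf^\perp}^2_{L^2_\gamma}$ via the spectral gap \eqref{spectral gap L}, and $\eps^{-1}\Gamma[f,f]$ is absorbed by standard continuity estimates for $\Gamma$ against $M$ in $\mathcal{H}^s_{x,v}$. The two new contributions are the force $\eps\vec{E_t}\cdot\nabla_vf$, which loses a $v$-derivative, and $\eps\mathcal{E}f$, which is unbounded in $v$ at order $\abs{v}^2$. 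Crucially, the definition \eqref{eq:M} ensures that $\mathcal{E}$ contains the strictly positive contribution $\eps^\eu(A+a\abs{v}^2/2)/(1+t)^2$, so that integrating $\eps\mathcal{E}f\cdot f$ produces the negative feedback
$$-\frac{\eps^{1+\eu}a}{2(1+t)^2}\norm{\abs{v}\partial^j_lf}^2_{L^2_{x,v}};$$
choosing $a=a_{T_0,E,s}$ large enough relative to $\norm{\vec{E_t}}_{L^\infty_tW^{s,\infty}_x}$ dominates the $\eps\dual{\vec{E_t}\cdot vf}{f}$ loss uniformly on $[0,T_0]$. The velocity-derivative loss in $\eps\vec{E_t}\cdot\nabla_vf$ is harmless because $\eps\nabla_vf$ already carries weight $\eps$ in $\mathcal{H}^s_{x,v}$, so the resulting term is $O(1)$ and absorbed by the $\lambda\eps^{-2}$ gap via Young's inequality provided $\eps$ is small. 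The source $-2\mathcal{E}M^{1/2}$ contributes $O(\eps^\eu)$ in $H^s_{x,v}$ and explains the non-vanishing threshold $C_{T_0,E,s}$.

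The hypocoercive feedback controls only $f^\perp$, so the macroscopic part $\pi f\in\ker L$ must be handled separately: the initial moment hypothesis of size $C_\mathrm{in}\eps^\eu$, together with mass conservation \eqref{eq:conservation mass} and the fact that $M-\mu=O(\eps^{1+\eu})$, confines the moments to order $\eps^\eu$ on $[0,T_0]$, and a Poincar\'e inequality on the zero-mean part closes everything into a differential inequality
$$\frac{d}{dt}\norm{f}^2_{\mathcal{H}^s_{x,v}}\leq-K\norm{f}^2_{\mathcal{H}^s_{x,v}}+C\norm{f}^3_{\mathcal{H}^s_{x,v}}+C_{T_0,E,s}\eps^{2\eu},$$
from which a standard bootstrap yields the claim for $\norm{f_{in}}_{\mathcal{H}^s_{x,v}}\leq\delta_{T_0,E,s}$; the case $\eps=1$ follows the same template, smallness being unnecessary since $\mathcal{E}$ is then of order $1$. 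The main obstacle is the treatment of pure spatial derivatives $\partial^0_lf$ with $\abs{l}\geq 1$: the spectral gap provides no feedback on their kernel part, while the force couples kernel and orthogonal components at all orders through commutators $\partial^0_l(\vec{E_t}\cdot\nabla_vf)$. Simultaneously calibrating the large $a$ needed to dominate the force against the smallness required so that the mixed brackets preserve the equivalence $\norm{\cdot}_{\mathcal{H}^s_{x,v}}\sim\sum\norm{\partial^0_l\cdot}_{L^2_{x,v}}+\eps\sum\norm{\partial^j_l\cdot}_{L^2_{x,v}}$ and so that cross-terms do not destroy coercivity is the delicate interplay that fixes the final constants $A_{T_0,E,s}$, $a_{T_0,E,s}$, $\delta_{T_0,E,s}$, and $\eps_{T_0,C_\mathrm{in},E,s}$.
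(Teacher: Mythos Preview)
Your overall strategy matches the paper's: a twisted hypocoercive norm with mixed $\langle\partial^0_{l+\delta_i}f,\partial^{\delta_i}_l f\rangle$ brackets, the negative weighted feedback coming from the multiplicative term $\eps\mathcal{E}f$ (with $a$ chosen large on $[0,T_0]$ to beat the $\vec{E_t}\cdot v$ part), control of the macroscopic moments via the evolution laws \eqref{eq:conservation mass} plus Poincar\'e--Wirtinger, and a final Gr\"onwall/bootstrap. This is exactly the architecture of Sections~\ref{sec:estimates operators}--\ref{a priori deterministic} and the short proof in Section~\ref{sec:proof deterministic}.

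There is, however, one concrete mistake. You assert that the source $-2\mathcal{E}M^{1/2}$ contributes $O(\eps^\eu)$, leading to a forcing term $C_{T_0,E,s}\eps^{2\eu}$ in your differential inequality. This is false: by \eqref{eq:E ronde}, $\mathcal{E}$ contains the piece $-\tfrac{1}{2}(1+\eps^{1+\eu}a/(1+t))\vec{E_t}(x)\cdot v$, which is $O(1)$ in $\eps$. The paper's Lemma on $S_6$ accordingly produces an $\eps$-independent constant $C_{\Lambda,\eta_4}$, and the final a priori inequality is
\[
\frac{d}{dt}\norm{f}^2_{\mathcal{H}^s_{x,v}} \leq -(\lambda-C_s\norm{f}^2_{\mathcal{H}^s_{x,v}})\norm{f}^2_{H^s_\gamma} + C_s,
\]
with $C_s$ independent of $\eps$. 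This is why the threshold $C_{T_0,E,s}$ in the theorem's conclusion is \emph{not} small with $\eps$; your inequality would imply it is, which is too strong. The fix is immediate (just replace $\eps^{2\eu}$ by an $O(1)$ constant), and the bootstrap still closes.

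A second point you gloss over: the case $\eps=1$ does \emph{not} follow the same template. For small $\eps$ the problematic cross-terms $\sum_{j,k}\norm{\partial^{\delta_k}_{l+\delta_i-\delta_j}f}^2$ carry an $\eps^{1-\eu}$ prefactor and are absorbed by the Boltzmann spectral gap; for $\eps=1$ this prefactor is useless, and the paper instead absorbs these terms into the fluctuation feedback $-\Lambda\norm{\cdot}^2_{L^2_2}$ by taking $\Lambda$ large (see the two separate cases in Propositions~\ref{prop:estimate Qli}--\ref{prop:sobolev dot deterministic} and Remark~\ref{rem:difference eps=1}). Your sketch should acknowledge this dichotomy, since the choice of constants $(p,q,r,B_{j,l})$ and the order in which $\Lambda$ is fixed genuinely differ between the two regimes.
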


Let us make a few comments about the theorem above.

\begin{remark}\label{rem:Cauchy deterministic}
\begin{itemize}
\item The $\mathcal{H}^s_{x,v}$-norm is defined by $\eqref{eq:Hsxv}$ and the equivalence of norm is independent of $\eps$.
\item The hypothesis of cancellation of the first Fourier coefficient $\widehat{\vec{E_t}}(0)$ in $\eqref{HypE}$ is not really necessary\footnote{However, as such, it ensures a condition of quasiconservation
of the total momentum in Navier-Stokes equation, which is used to obtain the estimate $\eqref{Moment1OK}$}. Since the Boltzmann operator is commuting with translations in $v$, the change of variable
\[
v'=v+\eps w_t,\quad w_t:=\int_0^t\widehat{\vec{E_s}}(0)ds
\]
operates a reduction to the case where $\widehat{\vec{E_t}}(0)=0$. In the case of a general forcing term satisfying only the bound by $C_E$ in \eqref{HypE}, the result of Theorem~\ref{theo:Cauchy deterministic} holds true, except that the decomposition of $F$ has to be modified into the following expansion:
\[
F(t,x,v)= M(t,v-\eps w_t) + \eps \sqrt{M}(v-\eps w_t)\:f(t,x,v-\eps w_t),
\]
where $f$ is solution to \eqref{eq:perturbative BE} with a forcing term $\vec{E_t}^\prime:=\vec{E_t}-\widehat{\vec{E_t}}(0)$.

\item We get a local existence result for $\eps=1$ for a non-small, non time-decreasing force and with more general kernels than considered formerly (only hard spheres has been obtained to our knowledge \cite{DUYZ2008}). This is the first result of this kind we are aware of. Moreover, we get close-to-global Maxwellian $\mu$ existence for small $C_E$, recovering and extending on the torus the latest results (see Remark \ref{rem:small eps}).
\item We agree that in the case $\eps=1$ when $C_E$ or $T_0$ are taken larger and larger, the fluctuation $M$ is getting closer to $0$ and our problem thus boils down to the perturbative study around a vacuum state. However, as proven in Corollary \ref{cor:Cauchy deterministic}, in the regime of small epsilon we obtain a perturbative theory around the classical Maxwellian $\mu$.
\item We do not have to impose any decay in time on the force. On the contrary a polynomial decay is used, for instance, in \cite{DUYZ2008}. As explained in the introduction, this comes from the use of a time-dependent Maxwellian as reference state. We think that our strategy is applicable if one looks at solution $F = \mu + \eps \sqrt{\mu} h$ with $h$ belonging to $H^s_{x,v}\pa{e^{\eps^{1+\eu}\frac{A+a\abs{v}^{1+0}}{1+t}}}$ for small $\eps$, by use of a gain of integrability of $K= L + \nu$. It would have been more technical to treat this case, and we thus decided to take a clearer approach which is sufficient to deal with the issue of the hydrodynamical limit.
\item Note that the hypothesis
\begin{equation}\label{HypSmallGlobalMoments}
\abs{\int_{\T^d\times\R^d}\pa{\begin{array}{c} 1 \\ v \\ \abs{v}^2\end{array}}f_{in}(x,v)\sqrt{M}|_{t=0}\:dxdv} \leq C_\mathrm{in} \eps^{\eu}
\end{equation}
says that some global moments of $f_\mathrm{in}$ in both the space and velocity variables are small with $\eps$. Actually, we may replace the right-hand side $C_\mathrm{in} \eps^{\eu}$ of \eqref{HypSmallGlobalMoments} by any quantity that tends to $0$ with $\eps$. In the following Remark~\ref{rk:WellPreparedInitialData}, we comment the incidence of \eqref{HypSmallGlobalMoments} on the initial data for the solution to $\eqref{eq:NS force}$-$\eqref{eq:Boussinesq}$.

\end{itemize}
\end{remark}

As a corollary of Theorem \ref{theo:Cauchy deterministic} we obtain a global perturbative Cauchy theory close to $\mu$.

\begin{cor}\label{cor:Cauchy deterministic}
Under the assumptions of Theorem \ref{theo:Cauchy deterministic} there exists $\eps_{T_0,E,s}$, $\delta_{T_0,E,s}$ and $C_{T_0,E,s}$ such that if $0 < \eps \leq \eps_{T_0,E,s}$ and $F_\mathrm{in} = \mu + \eps \sqrt{\mu}\:f_\mathrm{in}$ with
$$\norm{f_\mathrm{in}}_{\mathcal{H}^s_{x,v}} \leq \delta_{T_0,E,s}\quad\mbox{and}\quad \int_{\T^d\times\R^d}\pa{\begin{array}{c} 1 \\ v \\ \abs{v}^2\end{array}}f_\mathrm{in}(x,v)\sqrt{\mu}\:dxdv = 0,$$
then there exists a unique solution $F = \mu + \eps \sqrt{\mu}\:f$ on $\R^+$ to the Boltzmann equation with external force $\eqref{eq:BE force}$ and it satisfies
$$\forall t \in [0,T_0),\quad \norm{f(t)}_{\mathcal{H}^s_{x,v}} \leq \max\br{\norm{f_\mathrm{in}}_{\mathcal{H}^s_{x,v}}+\eps^\eu C_{T_0,E,s}, C_{T_0,E,s}}.$$
Again, all the constants could be computed explicitly and are independent of $\eps$.
\end{cor}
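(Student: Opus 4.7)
The strategy is to deduce the corollary from Theorem~\ref{theo:Cauchy deterministic} by rewriting the perturbation $F_\mathrm{in}=\mu+\eps\sqrt{\mu}\,f_\mathrm{in}$, a fluctuation around the global Maxwellian $\mu$, as a fluctuation around the reference state $M(0,\cdot)$ used in that theorem. The crucial observation is that from definition \eqref{eq:M},
$$M(t,v)=\mu(v)\exp\!\left(-\eps^{1+\eu}\frac{A+a\abs{v}^2/2}{1+t}\right),$$
so uniformly in $t\in[0,T_0]$ one has $M(t,\cdot)-\mu = \eps^{1+\eu}\,r_\eps(t,\cdot)$, where $r_\eps$ is bounded, together with its derivatives, by a Gaussian times a polynomial in $v$ whose coefficients depend only on $A$, $a$, $T_0$, $s$. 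Hence swapping $\mu$ for $M(0,\cdot)$ as reference Maxwellian in any decomposition generates errors of size at most $\eps^\eu$ in the $\mathcal{H}^s_{x,v}$ norm.

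Concretely I set
$$\tilde{f}_\mathrm{in} := \frac{F_\mathrm{in}-M(0,\cdot)}{\eps\sqrt{M(0,\cdot)}} = \sqrt{\mu/M(0,\cdot)}\,f_\mathrm{in} + \frac{\mu - M(0,\cdot)}{\eps\sqrt{M(0,\cdot)}},$$
and check that $\tilde f_\mathrm{in}$ satisfies the two hypotheses of Theorem~\ref{theo:Cauchy deterministic}. The prefactor $\sqrt{\mu/M(0,\cdot)}$ is $1+O(\eps^{1+\eu})$ and the second summand is $O(\eps^\eu)$ in $\mathcal{H}^s_{x,v}$ by the observation above, giving $\norm{\tilde f_\mathrm{in}}_{\mathcal{H}^s_{x,v}}\leq \norm{f_\mathrm{in}}_{\mathcal{H}^s_{x,v}}+C\eps^\eu$, which can be forced below $\delta_{T_0,E,s}$ upon shrinking the initial smallness threshold and $\eps$. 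For the moment condition, the vanishing of the $1,v,\abs{v}^2$-moments of $f_\mathrm{in}\sqrt{\mu}$ reduces the quantity to be bounded to $\eps^{-1}\abs{\T^d}\int_{\R^d}(1,v,\abs{v}^2)(\mu-M(0,v))\,dv$; a direct Gaussian computation (parity for the $v$-moment; Taylor expansion of $e^{-\eps^{1+\eu}A}(1+\eps^{1+\eu}a)^{-d/2}$ and of its $\abs{v}^2$-weighted analogue for the others) shows this vector is of order $\eps^\eu$, producing the required hypothesis with some explicit $C_\mathrm{in}$.

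Applying Theorem~\ref{theo:Cauchy deterministic} to $\tilde{f}_\mathrm{in}$ then yields a unique solution $F=M+\eps\sqrt{M}\,f$ on $[0,T_0)$ with $\norm{f(t)}_{\mathcal{H}^s_{x,v}}\leq \max\{\norm{\tilde f_\mathrm{in}}_{\mathcal{H}^s_{x,v}},C_{T_0,E,s}\}$. Setting $h:=(F-\mu)/(\eps\sqrt{\mu})=\sqrt{M/\mu}\,f+(M-\mu)/(\eps\sqrt{\mu})$ expresses the same solution in the form $F=\mu+\eps\sqrt{\mu}\,h$ demanded by the corollary; the identity $M-\mu=\eps^{1+\eu}r_\eps$ yields $\norm{h(t)}_{\mathcal{H}^s_{x,v}}\leq\norm{f(t)}_{\mathcal{H}^s_{x,v}}+C'\eps^\eu$ uniformly in $t\in[0,T_0)$, and combining the three inequalities (absorbing the additive $\eps^\eu$-corrections into the $\max$) gives the stated estimate. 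The main delicate point is not any individual step but the bookkeeping of constants: I must ensure that every $\eps^\eu$-error introduced by the change of reference Maxwellian can be absorbed into the thresholds $\delta_{T_0,E,s}$, $\eps_{T_0,E,s}$ and $C_{T_0,E,s}$ of Theorem~\ref{theo:Cauchy deterministic}, while keeping all constants explicit and independent of $\eps$.
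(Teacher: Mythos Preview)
Your proposal is correct and follows essentially the same approach as the paper: both rely on the key estimate $\norm{\mu-M}_{H^s_{x,v}}\leq C_s\eps^{1+\eu}$ to transfer the perturbative decomposition around $\mu$ to one around $M$, apply Theorem~\ref{theo:Cauchy deterministic}, and then transfer back, picking up additive $\eps^\eu$ corrections at each step. Your version is more explicit than the paper's (in particular you spell out the verification of the moment hypothesis \eqref{HypSmallGlobalMoments}, which the paper leaves implicit), but the underlying argument is the same.
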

\begin{remark}\label{rem:small eps} Two remarks are important at this point.
\begin{itemize}
\item We emphasize that taking $\eps$ sufficiently small could be seen as requiring $C_E$ to be sufficiently small in our estimates. The strenght of our result is that the resulting Incompressible Navier-Stokes limit can display non small force $\vec{E_t}$.
\item We point out that $T_0=+\infty$ is not reached in our study because of the negative return of our fluctuation that only works for finite $T_0$ (see Proposition \ref{prop:estimates E ronde}). However, adapting our proofs when $\norm{\vec{E_t}} \leq \frac{C_E}{(1+t)^\alpha}$ with $\alpha >1$ around $M = \mu e^{\eps^{1+\eu}\frac{A+a\abs{v}^2}{(1+t)^{\alpha-1}}}$ gives that not only Proposition \ref{prop:estimates E ronde} is true for $T_0=+\infty$ but also Corollary \ref{cor:Cauchy deterministic} holds globally in time and yields a polynomial time decay (see Remark \ref{rem:time decay}): these are the results of \cite{DUYZ2008} when $\eps=1$ which we recover on the torus.
\end{itemize}
\end{remark}


\subsection{Result on the hydrodynamical limit}

Corollary \ref{cor:Cauchy deterministic} states that if one relabels the solution $f_\eps = \eps^{-1}(F-\mu)\mu^{-\frac{1}{2}}$ then we have uniform bounds on $f_\eps$ in Sobolev spaces and an existence theorem that does not depend on $\eps$ for the initial data. This yields the following convergence result.

\begin{theorem}\label{theo:hydro lim}
Let $T_0 >0$ and $F_\eps$ be the solution built in Corollary \ref{cor:Cauchy deterministic} on $[0,T_0]$ and define $f_\eps = \eps^{-1}\frac{F-\mu}{\sqrt{\mu}}$. Then the sequence $\pa{f_\eps}_{\eps>0}$ converges (up to an extraction) weakly-* in $L^\infty_{[0,T_0]}H^s_{x,v}$ towards an infinitesimal Maxwellian:
\begin{equation}\label{CVtoNS}
\lim\limits_{\eps \to 0} f_\eps(t,x,v) = \cro{\rho(t,x) + v\cdot u(t,x) + \frac{\abs{v}^2-d}{2}\theta(t,x)}\sqrt{\mu},
\end{equation} 
where $(\rho,u,\theta)$ solves the incompressible Navier-Stokes-Fourier system in the sense of Leray with force $\eqref{eq:NS force}$ together with the Boussinesq equation $\eqref{eq:Boussinesq}$.
\end{theorem}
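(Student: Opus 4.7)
Corollary~\ref{cor:Cauchy deterministic} provides a uniform bound on $(f_\eps)_{\eps>0}$ in $L^\infty_{[0,T_0]}\mathcal{H}^s_{x,v}$, so Banach--Alaoglu extraction yields $f_\eps \rightharpoonup^* f_0$ in $L^\infty_{[0,T_0]}H^s_{x,v}$. Since $M = \mu(1+O(\eps^{1+\eu}))$ in every weighted space at stake, the closely related unknown $g_\eps$ defined by $F_\eps = M + \eps\sqrt{M}g_\eps$ shares the same weak-$*$ limit $f_0$ and satisfies \eqref{eq:perturbative BE}. Multiplying that equation by $\eps^2$ one obtains
\begin{equation*}
L[g_\eps] = \eps^2\partial_t g_\eps + \eps v\cdot\nabla_x g_\eps + \eps^3\vec{E_t}\cdot\nabla_v g_\eps + \eps^3\mathcal{E}g_\eps - \eps\Gamma[g_\eps,g_\eps] + 2\eps^2\mathcal{E}M^{1/2},
\end{equation*}
whose right-hand side tends to $0$ in $\mathcal{D}'$ as $\eps\to 0$. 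Continuity of $L$ then forces $L[f_0]=0$, and the classical description of $\mathrm{Ker}(L)$ gives the infinitesimal Maxwellian form \eqref{CVtoNS} for some macroscopic fields $\rho,u,\theta$ on $[0,T_0]\times\T^d$.

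\textbf{Step 2 -- moment equations and Boussinesq.}
To characterize $(\rho,u,\theta)$ I test \eqref{eq:perturbative BE} against the collision invariants $\sqrt{M}$, $v_j\sqrt{M}$ and $\frac{|v|^2-d}{\sqrt{2d}}\sqrt{M}$; self-adjointness of $L$ together with $L[\mathrm{Ker}(L)]=0$ annihilates the stiff relaxation term and produces local balance laws for $(\rho_\eps,u_\eps,\theta_\eps)$ defined in \eqref{eq:mass}--\eqref{eq:temperature}. Writing $g_\eps = \Pi g_\eps + g_\eps^\perp$ with $\Pi$ the $L^2_v$-projection onto $\mathrm{Ker}(L)$, the spectral gap of $L$ combined with \eqref{eq:perturbative BE} yields $g_\eps^\perp = \eps L^{-1}\Pi^\perp(v\cdot\nabla_x \Pi g_\eps) + o(1)$, and a Chapman--Enskog computation then produces, at leading order, the incompressibility $\nabla\cdot u=0$ and the Boussinesq relation $\nabla(\rho+\theta)=0$, and at next order the diffusive terms with the classical formulas for $\nu$ and $\kappa$ in terms of $L^{-1}$ applied to the traceless part of $v\otimes v$ and to $v(|v|^2-d-2)/2$. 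The external force $\eps\vec{E_t}\cdot\nabla_v F$, tested against $v_j\sqrt{M}$ and integrated by parts in $v$, yields exactly $\tfrac{1}{2}\vec{E_t}(x)$ in the momentum balance, matching \eqref{eq:NS force}.

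\textbf{Step 3 -- strong compactness and Leray identification (main obstacle).}
The core difficulty is passing to the limit in $u_\eps\cdot\nabla u_\eps$, $u_\eps\cdot\nabla\theta_\eps$, and in the projections of $\eps^{-1}\Gamma[g_\eps,g_\eps]$ onto collision invariants, for which weak convergence of the macroscopic moments does not suffice. The $\mathcal{H}^s_{x,v}$ bound controls $(\rho_\eps,u_\eps,\theta_\eps)$ in $L^\infty_{[0,T_0]}H^s_x$, while the moment equations give a uniform bound on $\partial_t (\rho_\eps,u_\eps,\theta_\eps)$ in $H^{s-2}_x$ after projection onto divergence-free and Boussinesq-compatible subspaces; the well-preparedness built into Corollary~\ref{cor:Cauchy deterministic}, which cancels the global moments of $f_{\mathrm{in}}$ exactly, excludes any residual fast acoustic oscillation. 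Aubin--Lions then yields strong convergence of $(\rho_\eps,u_\eps,\theta_\eps)$ in $C^0_{[0,T_0]}H^{s-1}_x$, amply sufficient to pass to the limit in every nonlinearity. Testing the limit equations against divergence-free, respectively Boussinesq-compatible test functions removes the pressure and identifies $(\rho,u,\theta)$ as Leray solutions of \eqref{eq:NS force}--\eqref{eq:Fourier force}--\eqref{eq:Boussinesq}; the regularity of the limit, inherited from the uniform $H^s_{x,v}$ bound, shows that these solutions are in fact classical, close to the equilibrium state $(1,0,1)$.
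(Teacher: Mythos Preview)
Your Steps 1--2 are broadly aligned with the paper's argument, though the paper works directly with $f_\eps$ rather than switching to $g_\eps$, and simply cites the literature for the moment computations. The genuine gap is in Step~3. You assert that ``the well-preparedness built into Corollary~\ref{cor:Cauchy deterministic}, which cancels the global moments of $f_{\mathrm{in}}$ exactly, excludes any residual fast acoustic oscillation.'' This is incorrect: the hypothesis of Corollary~\ref{cor:Cauchy deterministic} is only the vanishing of the \emph{global} integrals $\int_{\T^d\times\R^d}(1,v,|v|^2)f_\mathrm{in}\sqrt{\mu}\,dxdv$, whereas suppression of acoustic waves requires the \emph{pointwise} constraints $\nabla\cdot u_\mathrm{in}=0$ and $\rho_\mathrm{in}+\theta_\mathrm{in}=0$. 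The paper reserves those stronger conditions explicitly for Remark~\ref{rk:WellPreparedInitialData}, where strong convergence is discussed; they are \emph{not} assumed in the theorem. Without them, the moment equations contain the stiff term $\eps^{-1}\nabla_x(\rho_\eps+\theta_\eps)$ in $\partial_t u_\eps$, so the unprojected time derivatives are of order $\eps^{-1}$ and your Aubin--Lions step does not yield strong convergence of the full $(\rho_\eps,u_\eps,\theta_\eps)$.

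The paper's own proof is considerably less ambitious than your Step~3 and avoids this issue entirely: it rewrites the equation as $BE_\eps(f_\eps)=\eps^2\mathcal{E}\mu^{1/2}+\eps^3[\cdots]$, extracts the weak-$*$ limit, reads off $L[\bar f]=0$, and then \emph{delegates} the passage to the limit in the nonlinear moment hierarchy (at orders $\eps^{-1}$ and $\eps^{-2}$, tested against divergence-free functions) to the existing literature \cite{BGL,Gol14} and \cite[Section~8]{Bri3}, where the handling of fast acoustic oscillations is already carried out. The only new computation is the identification of the force contribution $\tfrac12\vec{E_t}$ in the right-hand side at order $\eps^{-2}$. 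Note also that the theorem claims only weak-$*$ convergence and Leray solutions; you are attempting to prove strong convergence, which is more than what is stated and, per Remark~\ref{rk:WellPreparedInitialData}, genuinely requires the additional well-preparedness assumption you do not have.
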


\begin{remark}\label{rk:WellPreparedInitialData} If the data are well prepared in the sense that $f_\mathrm{in}$ is of the form 
\begin{equation}\label{finKerL}
f_\mathrm{in}(x,v)= \cro{\rho_\mathrm{in}(x) + v\cdot u_\mathrm{in}(x) + \frac{\abs{v}^2-d}{2}\theta_\mathrm{in}(x)}\sqrt{\mu},
\end{equation}
with $\nabla\cdot u_\mathrm{in}=0$ and $\rho_\mathrm{in}+\theta_\mathrm{in}=0$ and 
\begin{equation}\label{CsqceHypSmallGlobalMoments}
\int_{\T^d}\rho_\mathrm{in}(x)dx=0,\quad\int_{\T^d}u_\mathrm{in}(x)dx=0
\end{equation}
(\eqref{CsqceHypSmallGlobalMoments} is a consequence of \eqref{HypSmallGlobalMoments}), then we expect (by adaptation of the arguments of the proof of \cite[Theorem~2.5]{Bri3}) to have strong convergence in \eqref{CVtoNS} in the space $C([0,T];H^s_xL^2_v)$, and $(\rho,u,\theta)$ to be a regular solution to $\eqref{eq:NS force}$-$\eqref{eq:Boussinesq}$ on $[0,T_0]$ with initial datum $(\rho_\mathrm{in},u_\mathrm{in},\theta_\mathrm{in})$.
\end{remark}



\section{Properties and estimates on the external operator and the Boltzmann operator}\label{sec:estimates operators}

In the present section we focus on the linear operator $L[f]$, the multiplicative $\mathcal{E}(t,x,v)$ and then on all the operators appearing in the perturbed equation $\eqref{eq:perturbative BE}$.


\subsection{Estimates on the external operator $\mathcal{E}(t,x,v)$}\label{subsec:external operator}

In this section, we will give some estimates on $\mathcal{E}(t,x,v)$ and fix the constants $a$, $A$ and $\alpha$ for the rest of the manuscript. A consequence on the notations used in the paper is that we will drop the specific dependence of constants on the values $a$, $A$ and $\alpha$. Since $\vec{E_t}(x)$ is a datum of the problem on which no smallness assumption is required, we will also drop the possible dependency on $E$: $C_{a,A,\alpha,E}=C$, except in the Proposition below so that the reader can clearly see the improvement we can make when $\vec{E_t}$ decreases in time, that is $C_E \sim (1+t)^{-\alpha}$ as mentionned in Remark \ref{rem:small eps}.

\begin{prop}\label{prop:estimates E ronde} Let $\mathcal{E}$ be defined by \eqref{eq:E ronde} and let $s$ be any integer.
Under the assumption
\[
\norm{\vec{E_t}(x)}_{L^\infty_tW^{s,\infty}_x} \leq C_E<+\infty,
\] 
for any $T_0$ and $\Lambda >0$, there exists $A_{T_0,\Lambda}$ and $a_{T_0,\Lambda} >0$ such that the following properties hold. Positivity:
\begin{equation}\label{eq:positivity E ronde}
\forall (t,x,v) \in [0,T_0) \times \T^d \times \R^d,\quad \mathcal{E}(t,x,v) \geq \eps^\eu\Lambda\pa{1+\abs{v}^2} - \frac{1}{4\eps^\eu}\mathbf{1}_{\eps < 1}.
\end{equation}
Pure spatial derivative estimates: if $s\geq 1$
\begin{equation}\label{eq:spatial derivative E ronde}
\exists C_{T_0,s,\Lambda} >0,\:\forall \abs{l} \geq 1,\:\forall (t,x,v) \in [0,T_0) \times \T^d \times \R^d, \quad \abs{\partial^0_l \mathcal{E}(t,x,v)} \leq C_{T_0,s,\Lambda}\abs{v}.
\end{equation}
Second derivative estimates: if $s\geq 2$
\begin{equation}\label{eq:2nd derivative E ronde}
\exists C_{T_0,s,\Lambda} >0,\:\forall \abs{j} \geq 1 \:\mbox{and}\: \abs{j}+\abs{l} \geq 2,\:\forall (t,x,v) \in [0,T_0) \times \T^d \times \R^d, \quad \abs{\partial^j_l \mathcal{E}} \leq C_{T_0,s,\Lambda}.
\end{equation}
Higher order derivatives in $v$: if $s \geq 3$
\begin{equation}\label{eq:higher order E ronde}
\forall \abs{j}\geq 2\:\mbox{and}\:\abs{j}+\abs{l} \geq 3, \quad \abs{\partial^j_l \mathcal{E}(t,x,v)} = 0.
\end{equation}
\end{prop}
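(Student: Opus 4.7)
The proof naturally splits into (a) the elementary derivative bounds \eqref{eq:spatial derivative E ronde}, \eqref{eq:2nd derivative E ronde}, \eqref{eq:higher order E ronde}, and (b) the tuning of the constants $a,A>0$ to enforce the positivity \eqref{eq:positivity E ronde}. I decompose $\mathcal{E}=\mathcal{E}_1+\mathcal{E}_2$, where
\[
\mathcal{E}_1(t,v)=\frac{\eps^\eu}{2(1+t)^2}\pa{A+\frac{a|v|^2}{2}},\qquad \mathcal{E}_2(t,x,v)=-\frac{1}{2}\pa{1+\frac{\eps^{1+\eu}a}{1+t}}\vec{E_t}(x)\cdot v.
\]
The piece $\mathcal{E}_1$ is independent of $x$ and at most quadratic in $v$, while $\mathcal{E}_2$ is linear in $v$ and inherits all of its $x$-regularity directly from $\vec{E_t}$. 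Inspection gives \eqref{eq:higher order E ronde} at once: any $v$-derivative of order $\geq 3$ kills both pieces, and every mixed derivative with $|j|\geq 2$ and $|l|\geq 1$ annihilates $\mathcal{E}_1$ (via the $x$-derivative) and $\mathcal{E}_2$ (via $\partial_v^j$ of a $v$-linear expression). The bound \eqref{eq:2nd derivative E ronde} then follows because the only surviving contributions are $\partial^2_{v_iv_j}\mathcal{E}_1$, a bounded constant on $[0,T_0)$, and $\partial_l^{\delta_i}\mathcal{E}_2$, which factors as a bounded temporal weight times $(\partial_x^l E_{t,i})(x)$, controlled by $\frac{1}{2}(1+\eps^{1+\eu}a)C_E$ thanks to the hypothesis $\norm{\vec{E_t}}_{L^\infty_tW^{s,\infty}_x}\leq C_E$. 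The pure spatial estimate \eqref{eq:spatial derivative E ronde} is proved in the same manner, with an extra factor $|v|$ coming from the inner product in $\mathcal{E}_2$.

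The core content is the positivity \eqref{eq:positivity E ronde}. Using $1/(1+t)^2\geq 1/(1+T_0)^2$ on $[0,T_0)$ and the crude bound $|\vec{E_t}\cdot v|\leq C_E|v|$, I obtain the pointwise lower bound
\[
\mathcal{E}\geq \frac{\eps^\eu A}{2(1+T_0)^2}+\frac{\eps^\eu a|v|^2}{4(1+T_0)^2}-\frac{(1+\eps^{1+\eu}a)C_E}{2}|v|.
\]
The central maneuver is then Young's inequality $|v|\leq \lambda/2 + |v|^2/(2\lambda)$ with $\lambda$ calibrated so that the $|v|^2$ contribution produced by Young absorbs \emph{only} the portion of the quadratic term from $\mathcal{E}_1$ exceeding $\eps^\eu\Lambda$. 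This requires $a>4\Lambda(1+T_0)^2$ and fixes $\lambda$ proportional to $(1+\eps^{1+\eu}a)C_E(1+T_0)^2/[\eps^\eu(a-4\Lambda(1+T_0)^2)]$, leaving a coefficient exactly $\eps^\eu\Lambda$ in front of $|v|^2$ and a residual $v$-independent error of size $(1+\eps^{1+\eu}a)^2 C_E^2(1+T_0)^2/[4\eps^\eu(a-4\Lambda(1+T_0)^2)]$.

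The main (mild) technical point is that this residual error diverges as $\eps^{-\eu}$ when $\eps\to 0$, precisely the rate of the indicator term $-\mathbf{1}_{\eps<1}/(4\eps^\eu)$ on the right-hand side of \eqref{eq:positivity E ronde}; this matching is not a coincidence but the very reason for that term. For $\eps\leq 1$, the bound $(1+\eps^{1+\eu}a)^2\leq (1+a)^2$ makes the residual at most $\eps^{-\eu}$ times a constant $K=K(T_0,\Lambda,C_E,a)$, so that \eqref{eq:positivity E ronde} reduces to requiring $K\leq 1/4$. It is enough to choose $a$ depending on $(T_0,\Lambda,C_E)$ large enough -- for instance $a\geq 4\Lambda(1+T_0)^2+C_E^2(1+T_0)^2$ together with a safety margin obtained by solving a one-variable quadratic comparison -- to make this hold. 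Once $a$ is fixed, $A$ is chosen large enough, depending on the same parameters and on $a$, so that $\eps^\eu A/[2(1+T_0)^2]\geq \eps^\eu\Lambda$ uniformly on the admissible $\eps$-range, closing the inequality. Crucially, no smallness assumption on $C_E$ is needed, which is exactly the advantage of having introduced the time-dependent Maxwellian $M$ as a reference state.
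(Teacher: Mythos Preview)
Your treatment of the derivative bounds \eqref{eq:spatial derivative E ronde}--\eqref{eq:higher order E ronde} is correct and coincides with the paper's ``direct computations''. The gap is in the positivity argument.

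After your Young step, the residual $v$-independent error is
\[
\frac{(1+\eps^{1+\eu}a)^2\,C_E^2\,(1+T_0)^2}{4\eps^\eu\bigl(a-4\Lambda(1+T_0)^2\bigr)},
\]
and you claim that bounding $(1+\eps^{1+\eu}a)^2\leq(1+a)^2$ and then taking $a$ large forces the resulting constant $K(a)=\dfrac{(1+a)^2C_E^2(1+T_0)^2}{4\bigl(a-4\Lambda(1+T_0)^2\bigr)}$ to satisfy $K\leq 1/4$. This is false: $K(a)\sim \tfrac{1}{4}C_E^2(1+T_0)^2\,a\to+\infty$ as $a\to\infty$. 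In fact the quadratic comparison you allude to, $(1+a)^2C_E^2(1+T_0)^2\leq a-4\Lambda(1+T_0)^2$, has \emph{no} solution in $a$ once $C_E^2(1+T_0)^2$ is moderately large (the discriminant is $1-4C_E^2(1+T_0)^2\bigl(1+4\Lambda(1+T_0)^2\bigr)$). So your argument actually requires smallness of $C_E$, contradicting your closing remark.

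The paper avoids this by \emph{not} lumping the factor $1+\eps^{1+\eu}a/(1+t)$ into a single coefficient. It applies Young's inequality separately to the two pieces of $-\tfrac12\bigl(1+\eps^{1+\eu}a/(1+t)\bigr)\vec{E_t}\cdot v$: the piece $-\tfrac{C_E}{2}|v|$ produces the residual $-1/(4\eps^\eu)$ and a contribution $-\tfrac{\eps^\eu C_E^2}{4}|v|^2$, both \emph{independent of $a$}; the piece $-\tfrac{\eps^{1+\eu}aC_E}{2(1+t)}|v|$, thanks to its extra $\eps$, yields only a residual $O(\eps^{2+\eu}aC_E^2)$ after being balanced against half of the quadratic term $\tfrac{\eps^\eu a|v|^2}{4(1+t)^2}$. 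One then chooses $a$ large so that $\tfrac{a}{8(1+T_0)^2}-\tfrac{C_E^2}{4}\geq\Lambda$, and afterwards $A$ large so that $\tfrac{A}{2(1+T_0)^2}-aC_E^2\geq\Lambda$. The decoupling of the $1/(4\eps^\eu)$ residual from $a$ is exactly what your single-Young approach loses.
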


\begin{proof}[Proof of Proposition \ref{prop:estimates E ronde}]
The Proposition is rather straightforward. We recall:
$$\mathcal{E}(t,x,v) = \frac{\eps^\eu(2A+a\abs{v}^2)}{4(1+t)^2} - \frac{1}{2}\pa{1+\eps^{1+\eu}\frac{a}{1+t}} \vec{E_t}(x)\cdot v$$
First a mere Cauchy-Schwarz inequality followed by Young inequality raises that for all $t$ in $[0,T_0)$
\begin{equation*}
\begin{split}
\mathcal{E}(t,x,v) &\geq \eps^\eu\frac{2A+a\abs{v}^2}{4(1+t)^{2}} - \frac{C_E}{2}\abs{v} - \frac{\eps^{1+\eu} a C_E}{2(1+t)}\abs{v}
\\&\geq \eps^\eu \cro{\frac{A}{2(1+t)^2}-\eps^{2}aC_E^2} + \eps^\eu\abs{v}^2\cro{\frac{a}{8(1+t)^2}-\frac{C_E^2}{4}} - \frac{1}{4\eps^\eu}\end{split}
\end{equation*}
So we can first choose $a=a_{T_0,\Lambda}$ sufficiently large so that
$$\frac{a}{8(1+T_0)^2}-\frac{C_E^2}{4} \geq \Lambda$$
and then choose $A=A_{T_0,\Lambda}$ sufficiently large so that
$$\frac{A}{2(1+T_0)^{2}}-aC_E^2\geq \Lambda.$$
This yields the positivity property $\eqref{eq:positivity E ronde}$ because if $\eps=1$ we can make $A$ larger so that $1/4$ is also absorbed. The rest of the estimates are direct computations once constant have been fixed. Note that the constants are independent of $t$ since $\frac{1}{1+t} \leq 1$.
\end{proof}

\begin{remark}
Of important note is the fact that $\Lambda$ shall be fixed later independently of $\eps$: the value of $\Lambda$ is determined in Proposition \ref{prop:sobolev dot deterministic}. We shall carefully keep track of the dependencies in $\Lambda$ to ensure that no bad loop can occur.
\end{remark}


\subsection{Known properties of the Boltzmann operator}\label{subsec: Boltzmann operator}

We gather some well-known properties of the linear Boltzmann operator $L$ (see \cite{Cer, CIP, Vil, GMM} for instance). 
For $1\leq  i\leq d$, let us set
\[
\phi_0(v)=1,\quad\phi_i(v)=v_i,\quad \phi_{d+1}(v)=\frac{1}{2}(|v|^2-d).
\]
The operator $L$ (which is time-dependent) is a closed self-adjoint operator in $L^{2}_v$ with kernel
\begin{equation}\label{eq:ker L}
\mbox{Ker}\left(L\right) = \mbox{Span}\left\{1,v,\abs{v}^2\right\}\sqrt{M} =  \mbox{Span}\left\{\phi_0(v),\dots,\phi_{d+1}(v)\right\}\sqrt{M} ,
\end{equation}
and  $(\phi_0\sqrt{M},\cdot,\phi_{d+1}\sqrt{M})$ is an orthogonal basis of $\mbox{Ker}\left(L\right)$ in $L^2_v$. We denote by $\pi_L$ the orthogonal projection onto $\mbox{Ker}\left(L\right)$ in $L^2_v$:
\begin{equation}\label{piL}
\pi_L(f) = \sum\limits_{i=0}^{d+1} \pa{\int_{\mathbb{R}^d} f(v_*)\bar{\phi}_i(v_*)\sqrt{M(t,v_*)}\:dv_*} \bar{\phi}_i(v)\sqrt{M(t,v)},
\end{equation}
where we have used the normalized family
\[
\bar{\phi}_0=\phi_0,\;\bar{\phi}_1=\phi_1,\dotsc,\;\bar{\phi}_d=\phi_d,\;\;\bar{\phi}_{d+1}=\sqrt{\frac{d}{2}}\phi_{d+1}.
\]
We set $\pi_L^\bot = \mbox{Id} - \pi_L$. The projection $\pi_L(f(x,\cdot))(v)$ of $f(x,v)$ onto the kernel of $L$ is called the fluid part whereas $\pi_L^\bot(f)$ is called the microscopic part.
\par The operator $L$ can be written under the following form
\begin{equation}\label{decomposition L}
L = -\nu(v) + K,
\end{equation}
where $\nu(v)$ is the collision frequency
$$\nu(v) = \int_{\mathbb{R}^d\times\mathbb{S}^{d-1}} b\left(\mbox{cos}\:\theta\right)\abs{v-v_*}^\gamma M_*\:d\sigma dv_*$$
and $K$ is a bounded and compact operator in $L^2_v$. We give a series of estimates on the operators above that have been proved in the case $a=A=0$ in references we gave above. We solely empasize that the constants do not depend on $t$.

\begin{prop}\label{prop:properties L} Let $s$ be in $\N$, $\abs{l}+\abs{j}\leq s$ and $f$ be in $H^s_{x,v}$.
\\The collision frequency is strictly positive
\begin{equation}\label{positivity nu}
\forall v \in \R^d,\quad 0 <\nu_{0,s,\Lambda}\langle v \rangle^{\gamma} \leq \nu(v) \leq \nu_{1,s,\Lambda}\langle v \rangle^{\gamma}.
\end{equation}
The operator $L$ acts on the $v$-variable and has a spectral gap $\lambda_{0,\Lambda} >0$ in $L^2_{x,v}$
\begin{equation}\label{spectral gap L}
\langle \partial_l^0L(f),\partial_l^0f\rangle_{L^2_{x,v}} \leq -\lambda_{0,\Lambda} \norm{\pi_L^\bot(\partial_l^0 f)}_{L^2_\gamma}^2.
\end{equation}
There exists $\lambda_{s,\Lambda},\: C_{s,\Lambda} >0$ such that, if $\abs{j}\geq 1$, then
\begin{equation}\label{estimate Hs L}
\langle \partial^j_l L[f], \partial^j_lf \rangle_{L^2_{x,v}} \leq -\lambda_{s,\Lambda}\norm{\partial^j_l f}^2_{L^2_\gamma} + C_{s,\Lambda}\norm{f}^2_{H^{s-1}_{x,v}}.
\end{equation}
At last we have the following estimates on scalar products: for $0\leq \abs{l}\leq s-1$ and any $\eta_0 >0$ there exists $C_{\Lambda,\eta_0}$ such that:
\begin{equation}\label{estimate scalar product L}
\langle \partial_{l+\delta_i}^0 L[f],\partial_l^{\delta_i}f \rangle_{L^2_{x,v}}
 = \frac{\lambda_{0,\Lambda}C_{\Lambda,\eta_0}}{\eps}\norm{\pi_L^\bot\pa{\partial_{l+\delta_i}^0 f}}_{L^2_\gamma} 
 + \eps \lambda_{0,\Lambda}\eta_0\norm{\partial^{\delta_i}_{l}f}^2_{L^2_\gamma}.
\end{equation}
\end{prop}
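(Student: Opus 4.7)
The plan is to reduce each estimate to the corresponding known result for the linearized Boltzmann operator around a Maxwellian, and to verify that the constants are uniform in $t\in[0,T_0)$ and in $\eps\leq 1$. The key observation is that, by $\eqref{eq:M}$ and the choice of $a=a_{T_0,\Lambda}$, $A=A_{T_0,\Lambda}$ made in Proposition~\ref{prop:estimates E ronde}, $M(t,\cdot)$ is a Maxwellian whose mass $e^{-\eps^{1+\eu}A/(1+t)}$ and inverse temperature $1+\eps^{1+\eu}a/(1+t)$ both remain in fixed compact subintervals of $(0,+\infty)$, uniformly in $t$ and $\eps\leq 1$. Consequently $M(t,v)$ is uniformly equivalent to a fixed Maxwellian, and every constant entering the classical estimates for $L$ around $\mu$ can be chosen to depend only on $\Lambda$, $T_0$ and $s$.

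For $\eqref{positivity nu}$, I would plug $M(t,v_\ast)$ into the definition of $\nu$ and invoke Grad's asymptotic computation $\nu(v)\asymp\langle v\rangle^\gamma$ (see \cite{CIP}), whose constants depend only on the mass and temperature of $M(t,\cdot)$ and are therefore $t$-independent by the previous paragraph. For the spectral gap $\eqref{spectral gap L}$, the crucial point is that $L$ acts only on the velocity variable, so the pure spatial derivative $\partial_l^0$ commutes with $L$, giving $\partial_l^0 L[f]=L[\partial_l^0 f]$. Applying the Baranger-Mouhot spectral gap \cite{BarMou} (respectively \cite{Mou} under the weaker hypothesis on $b$) pointwise in $(t,x)$ to the velocity profile $\partial_l^0 f(t,x,\cdot)$ and integrating over $\T^d$ yields the estimate with $\lambda_{0,\Lambda}>0$ uniform in $t$.

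For $\eqref{estimate Hs L}$, I would split $L=-\nu+K$ and apply Leibniz to $\partial_l^j$ with $|j|\geq 1$. The diagonal contribution $\langle -\nu(v)\partial_l^j f,\partial_l^j f\rangle_{L^2_{x,v}}=-\|\partial_l^j f\|^2_{L^2_\nu}$ provides, via $\eqref{positivity nu}$, a coercive term bounded above by $-\nu_{0,s,\Lambda}\|\partial_l^j f\|^2_{L^2_\gamma}$. All other terms carry at least one derivative on $\nu$ or on the Maxwellian-weighted integral kernel of $K$; since $\partial_v^\alpha\nu(v)=O(\langle v\rangle^\gamma)$ for every multi-index $\alpha$ and since differentiating the kernel of $K$ in $v$ preserves its Gaussian decay, these remainders are bounded above by $C_{s,\Lambda}\|f\|_{H^{s-1}_{x,v}}\|\partial_l^j f\|_{L^2_\gamma}$ and absorbed via Young's inequality at the cost of reducing $\nu_{0,s,\Lambda}$ to some $\lambda_{s,\Lambda}>0$. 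The diagonal $K$-contribution is handled separately by combining $\eqref{spectral gap L}$ with the bound $\|\pi_L\partial_l^j f\|_{L^2_v}\leq C\|f\|_{H^{s-1}_{x,v}}$, which holds because, $|j|\geq 1$, at least one $v$-derivative necessarily falls on the Gaussian basis $\bar\phi_i\sqrt{M}$ inside the projection $\eqref{piL}$.

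For $\eqref{estimate scalar product L}$, I would again use the commutation $[\partial_{l+\delta_i}^0,L]=0$ together with $L\pi_L=0$ to rewrite the scalar product as $\langle L\pi_L^\perp\partial_{l+\delta_i}^0 f,\partial_l^{\delta_i}f\rangle_{L^2_{x,v}}$; continuity of $L$ from $L^2_\gamma$ to $L^2_{-\gamma}$ on $(\mathrm{Ker}\,L)^\perp$, Cauchy-Schwarz in the $L^2_\gamma/L^2_{-\gamma}$ duality, and Young's inequality with weights $\eps^{-1}$ and $\eps$ (matching the hydrodynamic scaling of $\eqref{eq:perturbative BE}$) then deliver the announced bound, with the first norm on the right-hand side squared (the equality sign and missing square in the displayed statement appear to be typographical accidents). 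The main obstacle in the whole proposition is not mathematical but bookkeeping: none of the four estimates is genuinely new, and the only real work consists in tracking every constant and checking that it is $t$-independent and $\eps$-independent, which all boils down to the uniform comparability of $M(t,\cdot)$ with a fixed Maxwellian noted in the first paragraph.
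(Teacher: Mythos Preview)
Your proposal is correct and follows essentially the same route as the paper: reduce to the known results for the linearized operator around the unit Maxwellian $\mu$ (the case $A=a=0$), and check that all constants are uniform in $t$ and $\eps$. The only methodological difference worth noting is that the paper carries out the reduction via an explicit velocity rescaling
\[
L[f](v)=\pa{1+\tfrac{\eps^{1+\eu}a}{1+t}}^{-\frac{d+\gamma}{2}}e^{-\frac{\eps^{1+\eu}A}{1+t}}\,L_\mu[\tilde f]\!\pa{v\sqrt{1+\tfrac{\eps^{1+\eu}a}{1+t}}},
\]
which transfers all four estimates from $L_\mu$ to $L$ in one stroke, whereas you argue more softly by the uniform comparability of $M(t,\cdot)$ with a fixed Maxwellian and then re-derive each estimate. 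Both work; the explicit rescaling is a bit more economical and, crucially, makes immediately visible that the spectral gap constant satisfies $\lambda_{0,\Lambda}=\lambda_{0,0}\cdot h^{-(d+\gamma)/2}e^{-\eps^{1+\eu}A/(1+t)}$, which is \emph{exactly} the prefactor appearing in the scaling identity. This is why the constant in $\eqref{estimate scalar product L}$ factors as $\lambda_{0,\Lambda}\cdot C_{\eta_0}$ rather than a generic $C_{\Lambda,\eta_0}$: the same multiplicative factor governs both the spectral gap and the operator norm of $L$. Your Cauchy--Schwarz/Young argument for $\eqref{estimate scalar product L}$ is identical to the paper's, but your soft comparability argument does not by itself yield that factorization; you would need to observe that the continuity constant of $L$ from $L^2_\gamma$ to $L^2_{-\gamma}$ scales with the same prefactor. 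This matters downstream (Proposition~\ref{prop:estimate Qli}, case $\eps=1$), where $\Lambda$ is taken large, $\lambda_{0,\Lambda}$ degenerates, and one needs the scalar-product remainder to degenerate at the same rate so that it can be absorbed by the spectral-gap term. Your more detailed sketch of $\eqref{estimate Hs L}$ via $L=-\nu+K$ and Leibniz is the standard argument behind the reference the paper cites, so no discrepancy there.
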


Before getting into the proof of Proposition \ref{prop:properties L}, let us emphasize that the multiplication by $\lambda_{0,\Lambda}$ in the scalar product estimate is of core importance for the case $\eps=1$.

\begin{proof}[Proof of Proposition \ref{prop:properties L}]
If we denote by $L_\mu = -\nu_\mu + K_\mu$ the linear operator when $A=a=0$ then the results hold for $L_\mu$, $\nu_\mu$ and $K_\mu$: see for instance \cite{BarMou,Mou} for the spectral gap, \cite[-(H1')+(H2') page 13]{MouNeu} for Sobolev estimates and \cite[Appendix B.2.3 and B.2.5]{Bri3} for the scalar product.
\par The operator $L$ only acts on the velocity variable thus the change of variable  
$$v_*\mapsto \pa{\sqrt{1+\frac{\eps^{1+\eu} a}{1+t}}}^{-1}v_*$$
shows
\begin{equation}\label{link Lmu}
L[f](v) = \pa{1+\frac{\eps^{1+\eu} a}{1+t}}^{-\frac{d+\gamma}{2}}e^{-\frac{\eps^{1+\eu} A}{1+t}}L_\mu[\tilde{f}]\pa{v\sqrt{1+\frac{\eps^{1+\eu} a}{1+t}}}
\end{equation}
where $\tilde{f}(v) = f\pa{\frac{v}{\sqrt{1+\frac{\eps^{1+\eu} a}{1+t}}}}$. As $0\leq (1+t)^{-1} \leq 1$, inequalities $\eqref{positivity nu}-\eqref{spectral gap L}-\eqref{estimate Hs L}$ directly follow from the case $a=A=0$.
\par The scalar product $\eqref{estimate scalar product L}$ is a mere Cauchy-Schwarz inequality with Young inequality with constant $\eta_0$. Let us show that the resulting constant is of the form $C_{\eta_0}\lambda_{0,\Lambda}$. Denoting $h=1+\frac{\eps^{1+\eu} a}{1+t}$, integrating $\eqref{link Lmu}$ against $f$ yields
\begin{equation*}
\begin{split}
\langle L[f],f\rangle_{L^2_{x,v}} &= h^{-\frac{2d+\gamma}{2}}e^{-\frac{\eps^{1+\eu} A}{1+t}}\langle L_\mu[\tilde{f}],\tilde{f}\rangle_{L^2_{x,v}} \leq - h^{-\frac{2d+\gamma}{2}}e^{-\frac{\eps^{1+\eu} A}{1+t}}\norm{\tilde{\pi_{L_\mu}^\bot}\pa{\tilde{f}}}^2_{L^2_\gamma} 
\\&\leq - \lambda_{0,0}h^{-\frac{d+\gamma}{2}}e^{-\frac{\eps^{1+\eu} A}{1+t}}\norm{\pi_{L}^\bot\pa{f}}^2_{L^2_\gamma}.
\end{split}
\end{equation*}
We thus see that $\lambda_{0,\Lambda} = \lambda_{0,0}h^{-\frac{d+\gamma}{2}}e^{-\frac{\eps^{1+\eu} A}{1+t}}$ and $h^{-\frac{d+\gamma}{2}}e^{-\frac{\eps^{1+\eu} A}{1+t}}$ is exactly the quantity appearing in $\eqref{link Lmu}$ so the expected $\eqref{estimate scalar product L}$ follows.
\end{proof}

We conclude the present section with estimates on the bilinear operator $\Gamma[f,g]$.

\begin{prop}\label{prop:estimates Gamma}
Let $s_0 = d+1$ then for any $s\geq s_0$, for any $\abs{j}+\abs{l} \leq s$ and any $\eta_0,\:\eta_0' >0$, the following holds for $f$, $g$ and $h$ in $H^{\abs{j}+\abs{l}}_{x,v}$,
\begin{align*}
\abs{\langle \partial_l^0 \Gamma(g,h), f\rangle_{L^2_{x,v}}} &\leq \frac{\lambda_{0,\Lambda}\eta_0}{\eps}\norm{\pi_L^\bot (f)}^2_{L^2_\gamma}
\\&\quad+ \eps \lambda_{0,\Lambda}C_{s,\eta_0} \pa{\norm{g}^2_{H^s_{x}L^2_v}\norm{h}^2_{H^s_xL^2_{v,\gamma}} + \norm{h}^2_{H^s_{x}L^2_v}\norm{g}^2_{H^s_xL^2_{v,\gamma}}}
\\\abs{\langle \partial_l^j \Gamma(g,h), f\rangle_{L^2_{x,v}}} &\leq \frac{\eta_0'}{\eps}\norm{f}^2_{L^2_\gamma} +  \eps C_{s,\Lambda,\eta_0'} \pa{\norm{g}^2_{H^s_{x,v}}\norm{h}^2_{H^s_\gamma} + \norm{h}^2_{H^s_{x,v}}\norm{g}^2_{H^s_{\gamma}}}.
\end{align*}
\end{prop}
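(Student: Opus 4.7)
The strategy is to reduce both inequalities to the classical trilinear estimates satisfied by $\Gamma_\mu$, the bilinear Boltzmann operator linearized around the global Maxwellian $\mu$, as established in \cite[Appendix~B]{Bri3} or \cite{MouNeu}. Arguing as in the proof of Proposition~\ref{prop:properties L}, the $v$-rescaling by $\sqrt{1+\eps^{1+\eu}a/(1+t)}$ converts $\Gamma[g,h]$ into $\Gamma_\mu[\tilde g,\tilde h]$ (for rescaled functions $\tilde g,\tilde h$), up to a multiplicative prefactor of the form $\pa{1+\eps^{1+\eu}a/(1+t)}^{-(2d+\gamma)/2}e^{-\eps^{1+\eu}A/(1+t)}$. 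This prefactor is uniformly bounded on $[0,T_0]\times(0,1]$, and its dependence on $(t,\eps,\Lambda)$ is exactly what is encoded in the constant $\lambda_{0,\Lambda}$ of \eqref{spectral gap L} or, more generally, in any $\Lambda$-dependent constant.

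For the first inequality, I would exploit the crucial fact that $\Gamma(g_1,h_1)(x,\cdot)$ belongs to $\mathrm{Ker}(L)^\bot$ for every $x$, as a consequence of the mass-momentum-energy conservation laws of $Q$ transferred through the weight $\sqrt{M}$. Since $M$ does not depend on $x$, $\partial_l^0$ commutes trivially with $\Gamma$ and this orthogonality transfers, so that
\[
\langle \partial_l^0 \Gamma(g,h), f\rangle_{L^2_{x,v}} = \langle \partial_l^0 \Gamma(g,h), \pi_L^\bot f\rangle_{L^2_{x,v}}.
\]
Expanding by Leibniz and applying the standard trilinear bound
\[
\abs{\langle \Gamma(g_1,h_1), w\rangle_{L^2_{x,v}}} \leq C_s\,\norm{w}_{L^2_\gamma}\bigl(\norm{g_1}_{H^s_xL^2_v}\norm{h_1}_{H^s_xL^2_{v,\gamma}} + \norm{h_1}_{H^s_xL^2_v}\norm{g_1}_{H^s_xL^2_{v,\gamma}}\bigr)
\]
(valid for $s\geq s_0 = d+1$ thanks to the Sobolev embedding $H^s_x\hookrightarrow L^\infty_x$, needed to control Moser-type products of $x$-derivatives), I would close the estimate by Young's inequality $\abs{AB}\leq \frac{\eta_0\lambda_{0,\Lambda}}{\eps}A^2+\frac{\eps}{4\eta_0\lambda_{0,\Lambda}}B^2$ with $A=\norm{\pi_L^\bot f}_{L^2_\gamma}$, together with $(a+b)^2\leq 2a^2+2b^2$.

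For the second inequality the orthogonality trick is unavailable, as $\partial_l^j\Gamma(g,h)$ with $\abs{j}\geq 1$ is no longer in $\mathrm{Ker}(L)^\bot$. I would use instead the classical differentiation identity
\[
\partial_l^j \Gamma(g,h) = \sum_{\substack{l_1+l_2=l\\j_1+j_2+j_3=j}} c^{l,j}_{l_1,l_2,j_1,j_2,j_3}\,\Gamma_{j_3}(\partial_{l_1}^{j_1}g,\partial_{l_2}^{j_2}h),
\]
where each $\Gamma_{j_3}$ is a Boltzmann-type bilinear operator obtained by differentiating $\sqrt{M}$ and satisfies the same trilinear estimate as $\Gamma$, with constants now depending on $\Lambda$. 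Pairing each term with $f$, applying that bound, and closing with Young's inequality $\abs{AB}\leq\frac{\eta_0'}{\eps}A^2+\frac{\eps}{4\eta_0'}B^2$ taking $A=\norm{f}_{L^2_\gamma}$, yields the second estimate. The main technical point throughout is the careful bookkeeping of the $\eps$- and $t$-dependent prefactors produced by the rescaling so that, in the first inequality, they consistently merge into $\lambda_{0,\Lambda}$ and combine cleanly with the spectral gap of \eqref{spectral gap L} in the subsequent energy estimates.
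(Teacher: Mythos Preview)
Your proposal is correct and follows essentially the same route as the paper: reduce to the classical trilinear estimates for $\Gamma_\mu$ via the $v$-rescaling of Proposition~\ref{prop:properties L}, exploit the orthogonality $\Gamma(g,h)\in\mathrm{Ker}(L)^\bot$ for the pure $x$-derivative case, and then close both inequalities by Young's inequality, tracking the $\lambda_{0,\Lambda}$ prefactor exactly as in the proof of \eqref{estimate scalar product L}. The paper's own proof is terser but invokes the same ingredients (citing \cite[Appendix~A.2]{Bri3} for the base estimates and the argument of Proposition~\ref{prop:properties L} for the $\lambda_{0,\Lambda}$ dependence).
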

\begin{proof}[Proof of Proposition \ref{prop:estimates Gamma}]
 As above, these estimates have been obtained when $A=a=0$ and therefore the same arguments as before extend them to the general case. We refer the reader to \cite[Appendix A.2]{Bri3} for constructive proofs in the case $A=a=0$. We find the standard control
\begin{align*}
\abs{\langle \partial_l^0 \Gamma(g,h), f\rangle_{L^2_{x,v}}} &\leq C_{s,\Lambda} \pa{\norm{g}_{H^s_{x}L^2_v}\norm{h}_{H^s_xL^2_{v,\gamma}} + \norm{h}_{H^s_{x}L^2_v}\norm{g}_{H^s_xL^2_{v,\gamma}}}\norm{\pi_L^\bot (f)}_{L^2_\gamma},
\\\abs{\langle \partial_l^j \Gamma(g,h), f\rangle_{L^2_{x,v}}} &\leq C_{s,\Lambda} \pa{\norm{g}_{H^s_{x,v}}\norm{h}_{H^s_\gamma} + \norm{h}_{H^s_{x,v}}\norm{g}_{H^s_{\gamma}}}\norm{f}_{L^2_\gamma},
\end{align*}
that we complete with Young inequality. The dependency in $\lambda_{0,\Lambda}$ follows exactly from the same argment as in the proof of Proposition \ref{prop:properties L}.
\end{proof}


\subsection{Estimates for each operator}\label{subsec:estimates deterministic}

The perturbative equation \eqref{eq:perturbative BE} that we shall study can be decomposed as the evolution by $6$ different operators:
\begin{equation*}
\begin{split}
\partial_t f &= - \frac{1}{\eps}v\cdot\nabla_x f - \eps \vec{E_t}(x)\cdot\nabla_vf - \eps \mathcal{E}(t,x,v) f + \frac{1}{\eps^2}L[f] + \frac{1}{\eps}\Gamma[f,f] -2 \mathcal{E}(t,x,v)M^{1/2}
\\&:= \sum\limits_{i=1}^6 S_i(t,x,v).
\end{split}
\end{equation*}
We prove a series of Lemmas to estimate each operator in Sobolev norms. To clarify the computations we shall use the convention that $\partial^j_l f =0$ whenever the multi-indexes $j$ or $l$ contains one negative component. Thus any integration by parts can be computed.

\paragraph{Strategy} Our final aim is to get an estimate on the weighted norm (see \eqref{normweight})
\begin{equation}\label{normweight0}
f\mapsto \sum\limits_{\abs{l}\leq s}\norm{\partial^0_l f}^2_{L^2_{x,v}} + \eps^2 \sum\limits_{\underset{\abs{j}\geq 1}{\abs{l}+\abs{j}\leq s}}\norm{\partial_l^j f}^2_{L^2_{x,v}}.
\end{equation}
Standard energy estimates will provide some gain and loss terms. The gain terms are due
\begin{itemize}
\item to the spectral gap estimates \eqref{spectral gap L} and \eqref{estimate Hs L}: they are 
\begin{equation}\label{GainL}
-\frac{\lambda_{0,\Lambda}}{\eps^2} \sum\limits_{\abs{l}\leq s}\norm{\pi_L^\bot(\partial^0_l f)}^2_{L^2_{x,v}},\quad  -\lambda_{s,\Lambda} \sum\limits_{\underset{\abs{j}\geq 1}{\abs{l}+\abs{j}\leq s}}\norm{\partial_l^j f}^2_{L^2_{x,v}};
\end{equation}
\item to the operator $\mathcal{E}$: associated to the derivative $\partial_l^j$, we have a gain (with weight $|v|^2$) which is $-\frac{\eps^{1+\eu} \Lambda}{2}\norm{\partial_l^jf}^2_{L^2_2}$ (see Lemma~\ref{lem:S3} below).
\end{itemize}
In a procedure that is standard for the derivation of hypocoercive estimates, we also introduce a correction by the twisted terms $\eps\langle \partial_{l+\delta_i}^0 f,\partial_l^{\delta_i}f \rangle_{L^2_{x,v}}$ in \eqref{normweight0}. Note those terms are pondered with a weight $\eps$. Those terms will provide the gain term $-\norm{\partial_{l+\delta_i}^0 f}^2_{L^2_{x,v}}$ (see Lemma~\ref{lem:S2} below). Combining those latter terms with the terms of the second sum in \eqref{GainL}, we obtain (up to the terms of order $0$) a gain of almost a full $H^s_{x,v}$-norm, having no weight $\eps$. This is why the occurrence of a term $C\|f\|_{H^s_{x,v}}^2$ in the forthcoming estimates (Lemma~\ref{lem:S1} to \ref{lem:S6}) is admissible, a control on the size of the constant $C$ being possibly necessary to ensure a good control when all estimates are gathered (which is done step by step in Proposition~\ref{prop:estimate Qli}, Proposition~\ref{prop:estimate djl}, Proposition~\ref{prop:sobolev dot deterministic}). In Proposition~\ref{prop:sobolev dot deterministic}, we also study the evolution of the global moments of $f^\eps$ (in combination with a Poincar\'e-Wirtinger inequality), in order to recover the term of order $0$ that is lacking in our estimates.


\begin{lemma}\label{lem:S1}
Let $s$ be in $\N$ and for $f$ in $H^s_{x,v}$ define $S_1(t,x,v) = - \frac{1}{\eps}v\cdot\nabla_x f$.
\\Then for any $\eta_1 >0$, there exists $C_{\eta_1} >0$ such that for any multi-indexes $l$, $j$ such that $\abs{l}+\abs{j} \leq s$,
$$\abs{\langle \partial_l^j S_1,\partial_l^j f \rangle_{L^2_{x,v}}} \leq \left\{\begin{array}{ll} \frac{\eta_1}{\eps^2}\norm{\partial_l^j f}^2_{L^2_{x,v}} + \frac{1}{\eta_1}\sum\limits_{k=1}^d\norm{\partial^{j-\delta_k}_{l+\delta_k}f}^2_{L^2_{x,v}}\quad & \mbox{if}\quad  \abs{j}\geq 1 \\ 0 & \mbox{if}\quad j=0. \end{array}\right.$$
We have moreover
$$\langle \partial_{l+\delta_i}^0 S_1,\partial_l^{\delta_i}f \rangle_{L^2_{x,v}} = -\frac{1}{2\eps}\norm{\partial_{l+\delta_i}^0 f}^2_{L^2_{x,v}}.$$
\end{lemma}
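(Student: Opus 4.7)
The plan is to split the statement into three cases and rely on two basic facts: (i) $v$ is independent of $x$, so every pure $x$-derivative commutes with $v\cdot\nabla_x$; (ii) integration by parts is free of boundary terms on the torus and licit in $v$ thanks to Sobolev decay, with $\nabla_x\cdot v=0$.

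First, for $j=0$, writing $\partial_l^0 S_1 = -\eps^{-1}v\cdot\nabla_x\partial_l^0 f$ by commutation, one has
\[
\langle \partial_l^0 S_1,\partial_l^0 f\rangle_{L^2_{x,v}} = -\frac{1}{2\eps}\int_{\T^d\times\R^d} v\cdot\nabla_x\bigl(|\partial_l^0 f|^2\bigr)\,dx\,dv = 0.
\]
For $|j|\geq 1$, I expand via the Leibniz identity $\partial_v^j(v_k\phi) = v_k\partial_v^j\phi + j_k\partial_v^{j-\delta_k}\phi$:
\[
\partial_l^j(v\cdot\nabla_x f) = \sum_{k=1}^d\bigl[v_k\partial_{l+\delta_k}^j f + j_k\partial_{l+\delta_k}^{j-\delta_k} f\bigr].
\]
The terms with $v_k$ outside the derivative contribute zero against $\partial_l^j f$ by exactly the IBP argument of the $j=0$ case. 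Each commutator term $-\eps^{-1}j_k\int\partial_{l+\delta_k}^{j-\delta_k}f\cdot\partial_l^j f\,dx\,dv$ is controlled by Cauchy--Schwarz followed by Young's inequality at scale $\eta_1/\eps$, producing $\frac{\eta_1}{\eps^2}\|\partial_l^j f\|_{L^2_{x,v}}^2$ plus a term bounded by $\frac{C_{s,d}}{\eta_1}\sum_k\|\partial_{l+\delta_k}^{j-\delta_k}f\|_{L^2_{x,v}}^2$, where $j_k^2\leq s^2$; absorbing $C_{s,d}$ into a redefinition of $\eta_1$ gives the claimed bound.

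For the twisted scalar product, set $g=\partial_l^0 f$. The claim reduces to the identity $I := \langle v\cdot\nabla_x\partial_{x_i}g,\partial_{v_i}g\rangle_{L^2_{x,v}} = \tfrac{1}{2}\|\partial_{x_i}g\|_{L^2_{x,v}}^2$. Using the commutation $\partial_{x_i}\circ(v\cdot\nabla_x)=(v\cdot\nabla_x)\circ\partial_{x_i}$, one IBP in $x_i$ and one IBP in $v_i$ successively yield $I = \langle\partial_{v_i}(v\cdot\nabla_x g),\partial_{x_i}g\rangle_{L^2_{x,v}}$. Invoking the key commutator $[\partial_{v_i},v\cdot\nabla_x]=\partial_{x_i}$, this equals
\[
\|\partial_{x_i}g\|_{L^2_{x,v}}^2 + \langle v\cdot\nabla_x\partial_{v_i}g,\partial_{x_i}g\rangle_{L^2_{x,v}}.
\]
Finally, skew-adjointness of $v\cdot\nabla_x$ on the torus combined with $[\partial_{x_i},v\cdot\nabla_x]=0$ shows that the last scalar product equals $-I$; hence $2I = \|\partial_{x_i}g\|_{L^2_{x,v}}^2$, as desired.

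The only mildly delicate point is the twisted identity: it rests on exactly the hypocoercive commutator $[\partial_{v_i},v\cdot\nabla_x]=\partial_{x_i}$ highlighted in the introduction, combined with the skew-adjointness of free transport. Everything else reduces to Leibniz expansions and integration by parts, licit thanks to the torus geometry and $v$-decay of Sobolev functions.
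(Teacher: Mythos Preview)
Your proof is correct and follows essentially the same route as the paper: Leibniz expansion of $\partial_l^j(v\cdot\nabla_x f)$, integration by parts to kill the top-order term $v_k\partial^j_{l+\delta_k}f$, and Cauchy--Schwarz plus Young at scale $\eta_1/\eps$ on the commutator remainders. Your treatment of the twisted identity via the commutator $[\partial_{v_i},v\cdot\nabla_x]=\partial_{x_i}$ and skew-adjointness is more detailed than the paper's, which simply records it as ``direct integration by parts'', but the content is the same.
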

\begin{proof}[Proof of Lemma \ref{lem:S1}]
By direct computations
\begin{align*}
\langle \partial_l^j\pa{v\cdot \nabla_x f}, \partial_l^j f \rangle_{L^2_{x,v}} &= \sum\limits_{j_1+j_2 = j}\sum\limits_{k=1}^d \int_{\T^d \times \R^d} \pa{\partial^{j_1}_0 v_k} \pa{\partial^{j_2}_{l+\delta_k} f}\partial^j_lf\:dxdv
\\&= \sum\limits_{k=1}^d \int_{\T^d \times \R^d} v_k \pa{\partial^{j}_{l+\delta_k} f}\partial^j_lf + \sum\limits_{k=1}^d \int_{\T^d \times \R^d}  \pa{\partial^{j-\delta_k}_{l+\delta_k} f}\partial^j_lf
\\&= \sum\limits_{k=1}^d \int_{\T^d \times \R^d}  \pa{\partial^{j-\delta_k}_{l+\delta_k} f}\partial^j_lf\:dxdv.
\end{align*}
We used the property $\partial^{j_1}_0(v_k) = 0$ if ($\abs{j_1} \geq 2$) or ($\abs{j_1}=1$ and $j_1\neq \delta_k$). The first result then follows from Cauchy-Schwarz and Young inequalities: for any $\eta_1>0$,
$$\int_{\T^d \times \R^d}  \pa{\partial^{j-\delta_k}_{l+\delta_k} f}\partial^j_lf\:dxdv \leq \frac{\eta_1}{\eps} \int_{\T^d \times \R^d}  \pa{\partial^{j-\delta_k}_{l+\delta_k} f}^2\:dxdv + \frac{\eps}{\eta_1}\int_{\T^d \times \R^d}  \pa{\partial^j_lf}^2\:dxdv.$$
The second equality comes from direct integration by parts.
\end{proof}

\begin{lemma}\label{lem:S2}
Let $s$ be in $\N$ and for $f$ in $H^s_{x,v}$ define $S_2(t,x,v) = - \eps\vec{E_t}(x)\cdot \nabla_v f$.
\\Then for any $\eta_2 >0$, there exists $C_{\eta_2} >0$ such that for any multi-indexes $l$, $j$ satisfying $\abs{l}+\abs{j} \leq s$, we have 
\[
\abs{\langle \partial_l^j S_2,\partial_l^j f \rangle_{L^2_{x,v}}}=0\quad \mbox{if}\quad  l=0,
\]
and
\begin{multline*}
\abs{\langle \partial_l^j S_2,\partial_l^j f \rangle_{L^2_{x,v}}} \\
\leq
\eps^{1+\eu} \eta_2 \Lambda\norm{\partial_l^j f}^2_{L^2_{x,v}} + \eps^{1-\eu}\frac{C_{\eta_2}}{\Lambda }\pa{\sum\limits_{1\leq i,k \leq d}\norm{\partial^{j+\delta_k}_{l-\delta_i}f}^2_{L^2_{x,v}} + \norm{f}^2_{H^{\abs{j}+\abs{l}-1}_{x,v}}},
\end{multline*}
if $|l|>0$, where $\Lambda$ is given by Proposition $\eqref{prop:estimates E ronde}$. We have moreover
$$\abs{\langle \partial_{l+\delta_i}^0 S_2,\partial_l^{\delta_i}f \rangle_{L^2_{x,v}}} \leq \eps^{2+\eu}\eta_2 \Lambda\norm{\partial_{l}^{\delta_i}f}^2_{L^2_{x,v}} + \frac{C_{\eta_2}}{\eps^\eu\Lambda}\sum\limits_{k=1}^d\norm{\partial_{l}^{\delta_k}f}^2_{L^2_{x,v}}+  \frac{C_{\eta_2}}{\eps^\eu}\norm{f}_{H^{\abs{l}}_{x,v}}^2.$$
\end{lemma}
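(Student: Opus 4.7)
The proof splits into three cases matching the three claims.

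For $l=0$, since $\vec{E_t}(x)$ is independent of $v$, the identity $\partial_0^j S_2 = -\eps\sum_{k=1}^d E_{t,k}(x)\,\partial_0^{j+\delta_k}f$ reduces the inner product to $-\eps\sum_k\int_{\T^d\times\R^d} E_{t,k}\,\partial_{v_k}(\partial_0^j f)\cdot\partial_0^j f\,dxdv$. Writing the integrand as $\tfrac{1}{2}E_{t,k}\,\partial_{v_k}[(\partial_0^j f)^2]$ and integrating by parts in $v_k$ kills it (no boundary contribution on $\R^d$).

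For $\abs{l}\geq 1$ I apply Leibniz' formula
\[
\partial_l^j(E_{t,k}\,\partial_{v_k}f) = \sum_{\alpha\leq l}\binom{l}{\alpha}\,(\partial_\alpha^0 E_{t,k})\,\partial_{l-\alpha}^{j+\delta_k}f,
\]
observing that the $\alpha=0$ contribution still vanishes by the same velocity integration by parts. For each $\abs{\alpha}\geq 1$, the $W^{s,\infty}_x$-control $\abs{\partial_\alpha^0 E_{t,k}}\leq C_E$ from \eqref{HypE}, together with Cauchy-Schwarz and the weighted Young inequality $ab\leq\mu a^2+b^2/(4\mu)$ with $\mu\sim\eps^\eu\eta_2\Lambda/C_E$, produces $\eps^{1+\eu}\eta_2\Lambda\,\norm{\partial_l^jf}^2_{L^2_{x,v}}$ on one side and $\eps^{1-\eu}C_{\eta_2}/\Lambda$ times $\norm{\partial^{j+\delta_k}_{l-\alpha}f}^2_{L^2_{x,v}}$ on the other. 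The $\abs{\alpha}=1$ terms have top order $\abs{l}+\abs{j}$ and build the sum $\sum_{i,k}\norm{\partial^{j+\delta_k}_{l-\delta_i}f}^2_{L^2_{x,v}}$; the $\abs{\alpha}\geq 2$ terms are of order at most $\abs{l}+\abs{j}-1$ and fit into $\norm{f}^2_{H^{\abs{j}+\abs{l}-1}_{x,v}}$.

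The twisted scalar product is where I expect the main obstacle. Its $\alpha=0$ Leibniz piece is $-\eps\sum_k\int E_{t,k}\,\partial_{l+\delta_i}^{\delta_k}f\cdot\partial_l^{\delta_i}f\,dxdv$, and a direct Cauchy-Schwarz would cost a derivative of order $\abs{l}+2$, exceeding the Sobolev range allowed by the statement. The trick I plan to use is a chain of three integrations by parts: first in $x_i$ (since $\partial_{l+\delta_i}^{\delta_k}f=\partial_{x_i}\partial_l^{\delta_k}f$), then in $v_i$ on the resulting $\partial_{l+\delta_i}^{\delta_i}f=\partial_{v_i}\partial_{l+\delta_i}^0 f$, and finally in $v_k$. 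This chain feeds the original integral back into itself with the opposite sign and yields the self-consistent identity
\[
\int E_{t,k}\,\partial_{l+\delta_i}^{\delta_k}f\,\partial_l^{\delta_i}f\,dxdv = -\tfrac{1}{2}\int(\partial_{x_i}E_{t,k})\,\partial_l^{\delta_i}f\,\partial_l^{\delta_k}f\,dxdv.
\]
Cauchy-Schwarz with $\abs{\partial_{x_i}E_{t,k}}\leq C_E$ and Young with parameter $\mu\sim\eps^{1+\eu}\eta_2\Lambda/C_E$ then extract precisely $\eps^{2+\eu}\eta_2\Lambda\,\norm{\partial_l^{\delta_i}f}^2 + C_{\eta_2}(\eps^\eu\Lambda)^{-1}\sum_k\norm{\partial_l^{\delta_k}f}^2$. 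The remaining $\abs{\alpha}\geq 1$ Leibniz terms are handled by direct Cauchy-Schwarz/Young, with the $\alpha=\delta_i$ contribution naturally landing in $\sum_k\norm{\partial_l^{\delta_k}f}^2$ and the lower-order ones absorbed into $C_{\eta_2}\eps^{-\eu}\norm{f}^2_{H^{\abs{l}}_{x,v}}$.
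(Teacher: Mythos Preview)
Your proposal is correct and follows essentially the same approach as the paper. The paper also uses the Leibniz expansion with the $\alpha=0$ term vanishing by integration by parts in $v_k$, and derives the same self-consistent identity for the twisted scalar product via integration by parts in $x_i$, then $v_i$, then $v_k$, followed by Cauchy--Schwarz and Young with the same $\eps$-weighted parameters you chose.
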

\begin{proof}[Proof of Lemma \ref{lem:S2}]
Here again direct computations give
\begin{align*}
\abs{\langle \partial_l^j \pa{\vec{E_t}(x)\cdot \nabla_v f},\partial_l^j f \rangle_{L^2_{x,v}}} &= \abs{\sum\limits_{k=1}^d \sum\limits_{l_1+l_2 = l}\int_{\T^d\times\R^d}\partial_{l_1}^0E_k(x)\partial^{j+\delta_k}_{l_2}f \partial^j_l f \:dxdv}
\\&= \abs{\sum\limits_{k=1}^d \sum\limits_{\underset{\abs{l_2} < \abs{l}}{l_1+l_2 = l}}\int_{\T^d\times\R^d}\partial^0_{l_1}E_k(x)\partial_{l_2}^{j+\delta_k}f \partial^j_l f \:dxdv}
\\&\leq\norm{\vec{E_t}}_{W^{s,\infty}_x}\sum\limits_{k=1}^d \sum\limits_{\abs{l_2} < \abs{l}}\int_{\T^d\times\R^d}\abs{\partial_{l_2}^{j+\delta_k}f}\abs{\partial^j_l f} \:dxdv
\end{align*} 
and here again combining Cauchy-Schwarz and Young inequality with constant $\eps^\eu\eta_2$ yields the expected result.

Let us look at the second estimate. We have
$$\langle \partial_{l+\delta_i}^0 S_2,\partial_l^{\delta_i}f \rangle_{L^2_{x,v}} =-\eps \sum\limits_{l_1+l_2 = l+\delta_i}\sum\limits_{k=1}^d\int_{\T^d\times\R^d} \partial_{l_1}^0E_k \partial^{\delta_k}_{l_2}f\partial_l^{\delta_i}f \:dxdv.$$
The higher derivative appears when $l_2=l+\delta_i$ and by integration by parts we see
\begin{align*}
\int_{\T^d\times\R^d} E_k \partial^{\delta_k}_{l+\delta_i}f\partial_l^{\delta_i}f\:dxdv  &= -\int_{\T^d\times\R^d} \partial_{\delta_i}^0E_k \partial^{\delta_k}_{l}f\partial_l^{\delta_i}f - \int_{\T^d\times\R^d} E_k \partial^{\delta_k}_{l}f\partial_{l+\delta_i}^{\delta_i}f
\\&= -\int_{\T^d\times\R^d} \partial_{\delta_i}^0E_k \partial^{\delta_k}_{l}f\partial_l^{\delta_i}f - \int_{\T^d\times\R^d} E_k \partial^{\delta_i}_{l}f\partial_{l+\delta_i}^{\delta_k}f
\end{align*}
which implies
$$\int_{\T^d\times\R^d} E_k \partial^{\delta_k}_{l+\delta_i}f\partial_l^{\delta_i}f \:dxdv = -\frac{1}{2}\int_{\T^d\times\R^d}\partial_{\delta_i}^0E_k \partial^{\delta_k}_{l}f\partial_l^{\delta_i}f \:dxdv$$
and therefore the result follows from Cauchy-Schwarz and Young inequalities with constant $\eps^{1+\eu}\eta_2$.
\end{proof}

\begin{lemma}\label{lem:S3}
Let $s$ be in $\N$ and for $f$ in $H^s_{x,v}$ define $S_3(t,x,v) = - \eps\mathcal{E}(t,x,v) f$, where we recall that $\mathcal{E}$ is given by $\eqref{eq:E ronde}$.
\\Then there exists $C_3 >0$ such that for any multi-indexes $l$, $j$ such that $\abs{l}+\abs{j} \leq s$,
$$\langle \partial_l^jS_3,\partial_l^jf \rangle _{L^2_{x,v}} \leq \left\{\begin{array}{l} -\frac{\eps^{1+\eu} \Lambda}{2}\norm{\partial_l^jf}^2_{L^2_2} +\eps^{1-\eu}\frac{\mathbf{1}_{\eps <1}}{4}\norm{\partial_l^jf}^2_{L^2_{x,v}} + \eps^{1-\eu}\frac{C_3}{\Lambda}\norm{f}^2_{H^{\abs{j}+\abs{l}-1}_{x,v}} \\ -\eps^{1+\eu}\Lambda\norm{f}^2_{L^2_2} + \frac{\eps^{1-\eu}}{4}\mathbf{1}_{\eps <1}\norm{f}^2_{L^2_{x,v}} \quad \mbox{if}\quad  \abs{j}+\abs{l}= 0 \end{array}\right.$$
where $\Lambda$ is given by Proposition $\eqref{prop:estimates E ronde}$. Moreover, for any $\eta_3 >0$, there exists $C_{\eta_3} >0$ such that
\begin{equation*}
\begin{split}
\abs{\langle \partial_{l+\delta_i}^0 S_3,\partial_l^{\delta_i}f \rangle_{L^2_{x,v}}} \leq & \eps^{2+\eu}\eta_3 \Lambda\norm{\partial_l^{\delta_i}f}^2_{L^2_2} +\eps^{2-e}\Lambda C_{\eta_3}\norm{\partial_{l+\delta_i}^0f}^2_{L^2_{x,v}} 
\\&+ \eps^{1-\eu}\frac{\mathbf{1}_{\eps <1}}{4}\pa{\norm{\partial_{l+\delta_i}^0f}^2_{L^2_{x,v}}+\norm{\partial_{l}^{\delta_i}f}^2_{L^2_{x,v}}}+ \frac{C_{,\eta_3}}{\Lambda \eps^\eu}\norm{f}^2_{H^{\abs{l}}_{x,v}}.
\end{split}
\end{equation*}
\end{lemma}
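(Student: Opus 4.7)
My plan is to prove Lemma \ref{lem:S3} by a Leibniz expansion of $\partial_l^j(\mathcal{E}f)$, applying the pointwise estimates on $\mathcal{E}$ from Proposition~\ref{prop:estimates E ronde} and standard Cauchy-Schwarz/Young inequalities. First I would write
\begin{equation*}
\langle \partial_l^j S_3,\partial_l^j f\rangle_{L^2_{x,v}} = -\eps\int \mathcal{E}(\partial_l^j f)^2\,dxdv -\eps\!\!\sum_{\substack{l_1+l_2=l,\,j_1+j_2=j\\ (j_1,l_1)\neq(0,0)}}\binom{l}{l_1}\binom{j}{j_1}\int (\partial^{j_1}_{l_1}\mathcal{E})(\partial^{j_2}_{l_2}f)\partial_l^j f\,dxdv.
\end{equation*}
The diagonal contribution is controlled directly by the positivity bound \eqref{eq:positivity E ronde}, which yields the negative gain $-\eps^{1+\eu}\Lambda\|\partial_l^j f\|_{L^2_2}^2$ plus the small $\eps^{1-\eu}\mathbf{1}_{\eps<1}$ correction. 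This alone gives the second assertion (for $|j|+|l|=0$), where no cross terms are present.

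For the cross terms, I would distinguish three sub-cases based on the indices $(j_1,l_1)$: if $|j_1|=0$ and $|l_1|\geq 1$, \eqref{eq:spatial derivative E ronde} gives $|\partial^0_{l_1}\mathcal{E}|\leq C|v|$; if $|j_1|\geq 1$ and $|j_1|+|l_1|\geq 2$, \eqref{eq:2nd derivative E ronde} gives boundedness of $\partial^{j_1}_{l_1}\mathcal{E}$; the special case $|j_1|=1,|l_1|=0$ requires a direct computation yielding $|\partial^{\delta_k}_0\mathcal{E}|\leq C\eps^\eu |v|+C_E$. In each situation a weighted Young inequality of the form $|v|\,|ab|\leq \eta\langle v\rangle^2 b^2 + a^2/(4\eta)$ allows me to absorb the linear $|v|$ factor into half of the main gain $-\eps^{1+\eu}\Lambda\|\partial_l^j f\|^2_{L^2_2}$ (keeping $-\eps^{1+\eu}\Lambda/2$), while the remaining term $\partial^{j_2}_{l_2} f$ always has total order $|j_2|+|l_2|\leq |j|+|l|-1$ and thus contributes to the $\|f\|^2_{H^{|j|+|l|-1}_{x,v}}$ sink; the choice of Young constants produces the prefactor $\eps^{1-\eu}/\Lambda$.

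For the mixed scalar product, I would repeat the Leibniz expansion of $\partial^0_{l+\delta_i}(\mathcal{E}f)$ tested against $\partial^{\delta_i}_l f$. The cross terms $(\partial^0_{l_1}\mathcal{E})(\partial^0_{l_2}f)$ with $|l_1|\geq 1$ are estimated by $|\partial^0_{l_1}\mathcal{E}|\leq C|v|$ together with a weighted Young inequality, delivering an $\eps^{2+\eu}\eta_3\Lambda\|\partial_l^{\delta_i}f\|^2_{L^2_2}$ contribution and a lower-order $\partial^0_{l_2}f$ term (with $|l_2|\leq|l|$) that is placed into the $\frac{C_{\eta_3}}{\eps^\eu\Lambda}\|f\|^2_{H^{|l|}_{x,v}}$ sink. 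For the main term $-\eps\int \mathcal{E}\,\partial^0_{l+\delta_i}f\,\partial_l^{\delta_i} f$, I would split $\mathcal{E}=\tilde{\mathcal{E}}+\mathcal{E}_E$ where $\tilde{\mathcal{E}}$ collects the $\eps^\eu$-weighted (quadratic in $v$) terms and $\mathcal{E}_E=-\tfrac12\vec{E_t}\cdot v$ is the pure force piece. The $\tilde{\mathcal{E}}$-contribution is bounded via Cauchy-Schwarz (using $|\tilde{\mathcal{E}}|\leq C\eps^\eu\langle v\rangle^2$) and a Young inequality whose weight is chosen to balance the $\eps^{2+\eu}\eta_3\Lambda$ coefficient on the $L^2_2$ side against the $\eps^{2-\eu}\Lambda C_{\eta_3}$ coefficient on the unweighted $\partial^0_{l+\delta_i}f$ side (this two-sided choice is possible precisely because of the $\eps^\eu$ smallness of $\tilde{\mathcal{E}}$). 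The $\mathcal{E}_E$-contribution is handled similarly, the excess being sent either into the $\mathbf{1}_{\eps<1}$ term (when $\eps<1$, via a Young inequality with weight $\eps^{-\eu}$) or into the lower-order $H^{|l|}_{x,v}$ sink after an integration by parts in $x_i$ converts $\partial^0_{l+\delta_i}f$ into a term involving $\partial^0_l f$ multiplied by $\partial^0_{\delta_i}\mathcal{E}_E$.

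The principal technical obstacle is the last step: the asymmetric matching of $\eps$-powers on the two sides of the scalar product estimate (coefficients $\eps^{2+\eu}$ versus $\eps^{2-\eu}$) is only possible because of the $\eps^\eu$ factor in the quadratic part of $\mathcal{E}$, and any excess contribution from the non-small force term $\vec{E_t}\cdot v$ has to be carefully routed either into the $\eps^{1-\eu}\mathbf{1}_{\eps<1}$ correction or into the lower-order sink with the large constant $\frac{C_{\eta_3}}{\eps^\eu\Lambda}$. Tracking all these constants without producing a term that cannot be absorbed at the next stage of the hypocoercivity argument (Proposition~\ref{prop:sobolev dot deterministic}) is the main delicacy.
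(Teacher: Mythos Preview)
Your treatment of the first inequality is essentially identical to the paper's: Leibniz expansion, diagonal term via the positivity bound \eqref{eq:positivity E ronde}, cross terms via the pointwise bounds \eqref{eq:spatial derivative E ronde}--\eqref{eq:higher order E ronde} and a weighted Young inequality absorbing half of the $L^2_2$ gain. Your observation that the case $|j_1|=1$, $l_1=0$ is not literally covered by Proposition~\ref{prop:estimates E ronde} and needs a one-line direct computation is a good catch; the paper simply subsumes it under the bound $C_s(1+|v|)$.

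For the mixed scalar product your plan is correct but more elaborate than necessary, and the integration-by-parts suggestion is problematic. The paper does \emph{not} split $\mathcal{E}=\tilde{\mathcal{E}}+\mathcal{E}_E$: it simply observes that the same Young inequality $C_E|v|\leq \eps^\eu|v|^2+\tfrac{C}{\eps^\eu}$ that gave the lower bound \eqref{eq:positivity E ronde} also yields the pointwise upper bound
\[
|\mathcal{E}(t,x,v)|\leq \eps^\eu C_{s,\Lambda}(1+|v|^2)+\tfrac{\mathbf{1}_{\eps<1}}{4\eps^\eu},
\]
and then applies a single Young inequality with parameter $\eta_3$ to the term $\eps\int\mathcal{E}\,\partial_{l+\delta_i}^0 f\,\partial_l^{\delta_i}f$. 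This handles both regimes $\eps<1$ and $\eps=1$ at once (for $\eps=1$ the indicator vanishes and one just has $|\mathcal{E}|\leq C_\Lambda(1+|v|^2)$). Your proposed integration by parts in $x_i$ for the $\mathcal{E}_E$ piece does not work: it moves $\partial_{x_i}$ from $\partial^0_{l+\delta_i}f$ onto the other two factors, and the term where it lands on $\partial_l^{\delta_i}f$ produces $\partial_{l+\delta_i}^{\delta_i}f$, which has \emph{one more} derivative than you started with, not one fewer. Fortunately this device is never needed---the direct pointwise bound on $|\mathcal{E}|$ suffices, and you should simply drop the splitting and the integration by parts.
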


\begin{proof}[Proof of Lemma \ref{lem:S3}]
The inequality for $\abs{j}=\abs{l}=0$ is a direct consequence of Proposition \ref{prop:estimates E ronde} and more precisely $\eqref{eq:positivity E ronde}$. When $\abs{j}+\abs{l}>0$  we compute
$$\langle \partial_l^j\pa{\mathcal{E}f},\partial_l^jf \rangle _{L^2_{x,v}} = \sum\limits_{j_1+j_2 = j}\sum\limits_{l_1+l_2 = l} \int_{\T^d\times\R^d} \partial^{j_1}_{l_1} \mathcal{E}\partial^{j_2}_{l_2}f \partial^j_l f \:dxdv.$$
Proposition \ref{prop:estimates E ronde} tells us that most of the derivatives of $\mathcal{E}$ vanish: when ($\abs{j_1}\geq 2$ and $\abs{j_1}+\abs{l_1}\geq 3)$. We therefore decompose the sum into three different parts:
\begin{equation*}
\begin{split}
\sum\limits_{j_1+j_2 = j}\sum\limits_{l_1+l_2 = l} \int_{\T^d\times\R^d} \partial^{j_1}_{l_1} \mathcal{E}\partial^{j_2}_{l_2}f \partial^j_l f \:dxdv =& \int_{\T^d\times\R^d}\mathcal{E}\pa{\partial^{j}_{l}f}^2
\\&+ \sum\limits_{\underset{\abs{l_1}\geq 1}{l_1+l_2 = l}} \int_{\T^d\times\R^d} \partial^{0}_{l_1} \mathcal{E}\partial^{j}_{l_2}f \partial^j_l f
\\&+ \sum\limits_{\underset{\abs{j_1}=1}{j_1+j_2 = j}}\sum\limits_{l_1+l_2 = l} \int_{\T^d\times\R^d} \partial^{j_1}_{l_1} \mathcal{E}\partial^{j_2}_{l_2}f \partial^j_l f 
\\&+ \sum\limits_{\underset{\abs{j_1}=2}{j_1+j_2 = j}} \int_{\T^d\times\R^d} \partial^{j_1}_{0} \mathcal{E}\partial^{j_2}_{l}f \partial^j_l f.
\end{split}
\end{equation*}
Proposition \ref{prop:estimates E ronde} gives us the estimate of the first term, see $\eqref{eq:positivity E ronde}$. In the second and third terms $\abs{\partial^{j_1}_{l_1}\mathcal{E}}$ is bounded by $C_{s}(1+\abs{v})$, see $\eqref{eq:spatial derivative E ronde}- \eqref{eq:2nd derivative E ronde}$. Finally, in the fourth term we have $\abs{\partial^{j_1}_{0} \mathcal{E}}$ bounded by $C_{s}$. Hence using Cauchy-Schwarz and Young inequality with $\eta >0$:
\begin{equation*}
\begin{split}
\langle \partial_l^j\pa{\mathcal{E}f},\partial_l^jf \rangle _{L^2_{x,v}} \geq& \int_{\T^d\times\R^d} \eps^\eu \Lambda\pa{1+\abs{v}^2}\pa{1-2\eta C_{s}}\pa{\partial^{j}_{l}f}^2\:dxdv
\\& - \frac{\mathbf{1}_{\eps <1}}{4\eps^\eu}\norm{\partial_l^jf}^2_{L^2_{x,v}}- \frac{C_{s,\Lambda,\eta}}{\eps^\eu}\norm{f}^2_{H^{\abs{j}+\abs{l}-1}_{x,v}}.
\end{split}
\end{equation*}
We choose $\eta$ sufficiently small and the result follows.

The second estimate is derived in the same spirit. We have
\begin{align*}
\langle \partial_{l+\delta_i}^0 S_2,\partial_l^{\delta_i}f \rangle_{L^2_{x,v}} = \sum\limits_{l_1+l_2 = l+\delta_i}\int_{\T^d\times\R^d} \partial_{l_1}^{0} \mathcal{E} \partial^{0}_{l_2}f\partial_l^{\delta_i}f \:dxdv.
\end{align*}
When $\abs{l_1} \geq 1$ then $\abs{l_2} \leq l$ and $\abs{\partial^0_{l_1}\mathcal{E}} \leq C_{s}\Lambda\abs{v} \leq C_{s}\Lambda\abs{v}$, by Proposition \ref{prop:estimates E ronde}. Therefore using Cauchy-Schwarz and Young inequality with constant $\eps^{1+\eu} \eta_3 >0$ we have
\begin{equation*}
\begin{split}
\abs{\sum\limits_{\underset{\abs{l_1}\geq 1}{l_1+l_2 = l+\delta_i}}\int_{\T^d\times\R^d} \partial_{l_1}^{0} \mathcal{E} \partial^{0}_{l_2}f\partial_l^{\delta_i}f \:dxdv} \leq& \int_{\T^d\times\R^d}\eps^{1+\eu} \eta_3 \Lambda(1+\abs{v}^2)\pa{\partial_l^{\delta_i}f}^2 dxdv 
\\&+ \frac{\Lambda C_{s,\eta_3}}{\eps^{1+\eu}}\norm{f}^2_{H^{\abs{l}}_{x,v}}.
\end{split}
\end{equation*}
At last, when $l_1=0$ we use Proposition \ref{prop:estimates E ronde} to bound $\abs{\mathcal{E}} \leq \eps^\eu C_{s}\Lambda(1+\abs{v}^2)+\frac{\mathbf{1}_{\eps <1}}{4\eps^\eu}$ and get with the Young inequality
\begin{equation*}
\begin{split}
\abs{\int_{\T^d\times\R^d} \mathcal{E} \partial^{0}_{l+\delta_i}f\partial_l^{\delta_i}f \:dxdv} \leq& \eps^{1+\eu}\Lambda \eta_3\int_{\T^d\times\R^d} (1+\abs{v}^2)\pa{\partial_l^{\delta_i}f}^2 dxdv 
\\&+\Lambda C_{s,\eta_3} \int_{\T^d\times\R^d}(1+\abs{v}^2)\pa{\partial_{l+\delta_i}^{0}f}^2 dxdv
\\&+ \frac{\mathbf{1}_{\eps <1}}{4\eps^\eu}\pa{\norm{\partial_{l+\delta_i}^0f}^2_{L^2_{x,v}}+\norm{\partial_{l}^{\delta_i}f}^2_{L^2_{x,v}}}.
\end{split}
\end{equation*}
This concludes the proof.
\end{proof}

It remains to estimate the last operator $S_6$, which is a mere multiplicative operator.

\begin{lemma}\label{lem:S6}
Let $s$ be in $\N$ and for $f$ in $H^s_{x,v}$ define $S_6(t,x,v) = - \mathcal{E}(t,x,v) M^{1/2}$, where we recall that $\mathcal{E}$ is given by $\eqref{eq:E ronde}$.
\\Then for any $\eta_4>0$, there exists $C_{\Lambda,\eta_4} >0$ such that for any multi-indexes $l$, $j$ such that $\abs{l}+\abs{j} \leq s$, we have
$$\langle \partial_l^jS_6,\partial_l^jf \rangle _{L^2_{x,v}} \leq \left\{\begin{array}{ll} \eta_4\norm{\pi_L\pa{\partial_l^{0} f}}^2_{L^2_{x,v}} + C_{\Lambda,\eta_4}\quad & \mbox{if}\quad j=0 \\ \frac{\eta_4}{\eps^2}\norm{\partial_l^{j} f}^2_{L^2_\gamma}+\eps^2C_{\Lambda,\eta_4} \quad &\mbox{if}\quad  \abs{j} \geq 1. \end{array}\right.$$
Moreover, we have
$$\abs{\langle \partial_{l+\delta_i}^0 S_6,\partial_l^{\delta_i}f \rangle_{L^2_{x,v}}} \leq \frac{\eta_4\lambda_{0,\Lambda}}{\eps}\norm{\partial_l^{\delta_i} f}^2_{L^2_\gamma}+ \eps C_{\Lambda,\eta_4}.$$
\end{lemma}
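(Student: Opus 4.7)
The plan is to exploit the following algebraic observation: the function $\mathcal{E}(t,x,v) M^{1/2}(v)$ lies pointwise in $x$ (for each $t$) inside $\mathrm{Ker}(L)$. Indeed, unpacking the definition \eqref{eq:E ronde}, one may write
\[
\mathcal{E}(t,x,v) M^{1/2}(v) = \alpha(t)\, M^{1/2} + \beta(t)\, |v|^2 M^{1/2} - \frac{1}{2}\pa{1+\eps^{1+\eu}\tfrac{a}{1+t}} \vec{E_t}(x) \cdot v\, M^{1/2},
\]
which, via \eqref{eq:ker L}, is a linear combination (with coefficients depending only on $(t,x)$) of the generators $\phi_0 \sqrt{M}, \phi_i \sqrt{M}, \phi_{d+1} \sqrt{M}$ of $\mathrm{Ker}(L)$. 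Since the pure $x$-derivatives $\partial_l^0$ touch only the coefficient $\vec{E_t}(x)$ and leave the velocity structure untouched, the same property persists for $\partial_l^0 S_6 = -(\partial_l^0 \mathcal{E})\, M^{1/2}$.

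For $j=0$, the orthogonality $\langle \partial_l^0 S_6, \pi_L^\bot(\partial_l^0 f)\rangle_{L^2_{x,v}} = 0$ is the decisive point: a direct Cauchy--Schwarz inequality followed by Young's inequality with constant $\eta_4$ then yield
\[
\langle \partial_l^0 S_6, \partial_l^0 f\rangle_{L^2_{x,v}} = \langle \partial_l^0 S_6, \pi_L(\partial_l^0 f)\rangle_{L^2_{x,v}} \leq \eta_4 \norm{\pi_L(\partial_l^0 f)}^2_{L^2_{x,v}} + C_{\Lambda,\eta_4},
\]
the constant $C_{\Lambda,\eta_4}$ bounding $\norm{\partial_l^0 S_6}_{L^2_{x,v}}^2$ uniformly in $t$, since the polynomial growth in $v$ of $\partial_l^0 \mathcal{E}$ is absorbed by the Gaussian decay of $M^{1/2}$ (and the dependence on $\Lambda$ enters through the constants $A_{T_0,\Lambda}, a_{T_0,\Lambda}$ fixed in Proposition \ref{prop:estimates E ronde}, together with $C_E$).

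For $|j|\geq 1$, the $v$-derivatives destroy the kernel structure, but $\partial_l^j(\mathcal{E} M^{1/2})$ remains a polynomial times $M^{1/2}$ and is therefore uniformly bounded in $L^2_{-\gamma}$. Cauchy--Schwarz with the splitting $\langle v\rangle^{-\gamma/2}\cdot \langle v\rangle^{\gamma/2}$ and Young's inequality of weight $\eta_4/\eps^2$ then deliver the required bound $\frac{\eta_4}{\eps^2}\norm{\partial_l^j f}^2_{L^2_\gamma} + \eps^2 C_{\Lambda,\eta_4}$. The mixed scalar product is handled in the same spirit: rather than integrating by parts in $v$ to exploit the (still valid) kernel property of $\partial_{l+\delta_i}^0 S_6$, I would apply directly Cauchy--Schwarz followed by Young's inequality with weight $\eta_4 \lambda_{0,\Lambda}/\eps$, which gives the correct $\eps$-scaling
\[
\abs{\langle \partial_{l+\delta_i}^0 S_6, \partial_l^{\delta_i} f\rangle_{L^2_{x,v}}} \leq \frac{\eta_4 \lambda_{0,\Lambda}}{\eps} \norm{\partial_l^{\delta_i} f}^2_{L^2_\gamma} + \eps\, C_{\Lambda, \eta_4}.
\]
There is no real computational obstacle in this lemma; the only nontrivial insight is the kernel property of $\mathcal{E} M^{1/2}$, which is precisely what lets us retain only the \emph{ungained} component $\pi_L(\partial_l^0 f)$ on the right-hand side of the $j=0$ estimate --- a feature that will prove essential when all energy estimates are gathered in Proposition \ref{prop:sobolev dot deterministic}, where the $\pi_L(\partial_l^0 f)$ contribution will eventually be absorbed via the Poincar\'e--Wirtinger inequality applied to the global moments of $f$.
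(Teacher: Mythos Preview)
Your proof is correct and follows essentially the same approach as the paper: the key observation that $\mathcal{E}\sqrt{M}$ and its pure $x$-derivatives lie in $\mathrm{Ker}(L)$ is used to reduce the $j=0$ case to the fluid part $\pi_L(\partial_l^0 f)$, and the remaining cases are handled by Cauchy--Schwarz (exploiting that $\partial_l^j(\mathcal{E}M^{1/2})$ is a polynomial times a Maxwellian) followed by Young's inequality with the appropriate $\eps$-weight. The only cosmetic difference is that you insert the weight $\langle v\rangle^{\pm\gamma/2}$ explicitly in Cauchy--Schwarz, whereas the paper bounds by $\norm{\partial_l^j f}_{L^2_{x,v}}$ first and then uses $\norm{\cdot}_{L^2_{x,v}}\leq\norm{\cdot}_{L^2_\gamma}$.
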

\begin{proof}[Proof of Lemma \ref{lem:S6}]
We can use the estimates on $\mathcal{E}$ derived in Proposition \ref{prop:estimates E ronde}, that we multiply by the Maxwellian $\sqrt{M}$. Looking at the kernel of $L$ given by $\eqref{eq:ker L}$ we see that $\mathcal{E}\sqrt{M}$ belongs to $\mbox{Ker}(L)$ and so does $\partial_l^0\mathcal{E}\sqrt{M}$ for any multi-index $l$. Therefore by Cauchy-Schwarz inequality
\begin{align*}
\abs{\langle \partial_{l}^0 \pa{\mathcal{E}M^{1/2}},\partial_l^{0} f \rangle_{L^2_{x,v}}} &= \abs{\langle \partial_{l}^0 \pa{\mathcal{E}M^{1/2}},\pi_L\pa{\partial_l^{0} f} \rangle_{L^2_{x,v}}}
\\&\leq\norm{\partial_{l}^0 \pa{\mathcal{E}M^{1/2}}}_{L^2_{x,v}}\norm{\pi_L\pa{\partial_l^{0} f}}_{L^2_{x,v}} 
\\&\leq C_{s,\Lambda}\norm{\pi_L\pa{\partial_l^{0} f}}_{L^2_{x,v}}.
\end{align*}
When there are $v$ derivatives we still have $\partial_l^j\pa{\mathcal{E}M^{1/2}}$ that is a polynomial times a Maxwellian and therefore belongs to $L^2_{x,v}$. Thus
$$\abs{\langle \partial_{l}^j \pa{\mathcal{E}M^{1/2}},\partial_l^{j} f \rangle_{L^2_{x,v}}} \leq C_{s,\Lambda}\norm{\partial_l^{j} f}_{L^2_{x,v}}.$$
Also for similar reasons
$$\abs{\langle \partial_{l+\delta_i}^0 \pa{\mathcal{E}M^{1/2}},\partial_l^{\delta_i} f \rangle_{L^2_{x,v}}} \leq C_{s,\Lambda}\norm{\partial_l^{\delta_i} f}_{L^2_{x,v}}.$$
Those three estimates yield the expected results using the Young inequality.
\end{proof}


\section{\textit{A priori} estimates in Sobolev spaces}\label{a priori deterministic}
We provide here Sobolev estimates for the nonlinear perturbed equation $\eqref{eq:perturbative BE}$. We shall work in twisted Sobolev norms that catch the hypocoercivity of the Boltzmann perturbed linear operator. Indeed, as shown by the estimates on the Boltzmann linear operator $L$, we do have a full negative feedback, and a gain of weight, as soon as $\partial_l^j$ includes one velocity derivative. Unfortunately, the negative feedback offered by $L$ on pure spatial derivative only controls the orthogonal part $\pi_L^\bot$. In the exact same spirit as \cite{MouNeu,Bri3}, a small portion of scalar product between spatial and velocity derivative is added to the standard Sobolev norm in order to take advantage of the commutator
$$[v\cdot\nabla_x,\nabla_v] = -\nabla_x.$$
We shall establish \textit{a priori} estimates in Sobolev space to the perturbed equation $\eqref{eq:perturbative BE}$ that we recall here
\begin{equation}\label{eq:perturbative BE apriori}
\begin{split}
\partial_t f &= - \frac{1}{\eps}v\cdot\nabla_x f - \eps \vec{E_t}(x)\cdot\nabla_vf - \eps \mathcal{E}(t,x,v) f + \frac{1}{\eps^2}L[f] + \frac{1}{\eps}\Gamma[f,f] -2 \mathcal{E}(t,x,v)M^{1/2}
\\&:= \sum\limits_{i=1}^6 S_i(t,x,v).
\end{split}
\end{equation}
We gather the estimates derived in Section \ref{sec:estimates operators} to construct and equivalent Sobolev norm of $f$ that can be controlled As $T_0$, $a$ and $A >0$ have been fixed in Proposition \ref{prop:estimates E ronde} we drop the dependencies on the subscripts. Also, as we shall always work with derivatives of order less than a given $s$, we drop the dependencies on $s$. Note however that a lot of different parameters are involved and so, to avoid any loop in their later choice, we will index the constants with these parameters, even if it complicates the reading: the important dependencies are $\Lambda$ and $\eta_i$.
\par We shall address the velocity derivatives and the pure spatial derivatives at different orders in $\eps$, in the spirit of \cite{Bri3}. In what follows we shall use the notation
\begin{equation}\label{eq:Sobolev eps}
\forall f \in H^s_{x,v},\quad \norm{f}^2_{\mathcal{H}^{s}_{\eps}} =  \sum\limits_{\abs{l}\leq s}\norm{\partial_l^0 f}^2_{L^2_{x,v}} + \eps^2 \sum\limits_{\underset{\abs{j}\geq 1}{\abs{j}+ \abs{l}\leq s}}\norm{\partial_l^jf}^2_{L^2_{x,v}}.
\end{equation}


\subsection{Estimates for spatial derivatives}\label{subsec:x derivative deterministic}

As mentioned at the beginning of the present section the pure $x$-derivatives in Sobolev spaces must be handled with the help of the transport operator. We thus define
\begin{equation}\label{eq:Qli}
\forall l \in \N^d,\:\forall 1\leq i\leq d,\quad Q_{l,i}(f) = p \norm{\partial_{l+\delta_i}^0f}^2_{L^2_{x,v}}+ q \eps^2\norm{\partial_l^{\delta_i} f}^2_{L^2_{x,v}} +  \eps r\langle \partial_{l+\delta_i}^0f , \partial_l^{\delta_i} f\rangle_{L^2_{x,v}}.
\end{equation}
The numbers $p$, $q$ and $r$ are constants that we shall define later and select to ensure that $Q_{l,i}(f)$ is a norm equivalent to its standard Sobolev counterpart. Before getting a full Sobolev estimate, we first study the term $Q_{l,i}$. The crucial idea being that the terms $\norm{f}^2_{L^2_{x,v}}$ arising from $S_2$ will be controlled by the fluctuation of the characteristics (\textit{i.e.} the gain due to $S_3$), rather than by the negative feedback of the Boltzmann operator, whilst the source term $S_6$ will find itself controlled by the latter. In what follows our Propositions are divided into two different cases: $\eps=1$ and $\eps<1$. The difference here is that, in the case $\eps =1$, we must keep the negative feedback brought by the fluctuation, whereas for small $\eps$ we can discard it (see Remark~\ref{rem:difference eps=1}).

\begin{prop}\label{prop:estimate Qli}
Let $s$ be in $\N$ and $l$ be a multi-index such that $\abs{l}=s$. There exist $0<\eps_s\leq 1$ for which we have the following results.
\par\textbf{Case $\mathbf{0<\boldsymbol\eps \leq \eps_s}$.} There exists $p$, $q$, $r$ and $C_0>0$, such that
$$\quad Q_{l,i}(\cdot) \sim \norm{\partial_{l+\delta_i}^0\cdot}^2_{L^2_{x,v}}+ \eps^2\norm{\partial_l^{\delta_i} \cdot}^2_{L^2_{x,v}}$$
and if $f$ is a solution to the perturbative equation $\eqref{eq:perturbative BE apriori}$, then 
\begin{equation*}
\begin{split}
\forall t \in [0,T_0), \quad\frac{d}{dt}Q_{l,i}(f) \leq & - 2\pa{\frac{1}{\eps^2}\norm{\pi_L^\bot\pa{\partial_{l+\delta_i}^0 f}}^2_{L^2_\gamma} +\norm{\partial_{l+\delta_i}^0f}^2_{L^2_{x,v}}+  \norm{\partial_{l}^{\delta_i} f}^2_{L^2_\gamma}}
\\&+  \eps^{1-\eu}C_0\sum\limits_{1\leq j,k\leq d}\norm{\partial^{\delta_k}_{l+\delta_i-\delta_j}f}^2_{L^2_\gamma} + C_0\eps^2\sum\limits_{1\leq j,k\leq d}\norm{\partial^{\delta_i + \delta_k}_{l-\delta_j}f}^2_{L^2_\gamma}
\\&+ C_0\cro{\norm{f}^2_{\mathcal{H}^{s+1}_{\eps}}\norm{f}^2_{H^{s+1}_\gamma} + \norm{f}^2_{H^s_{x,v}}+ 1}.
\end{split}
\end{equation*}

\textbf{Case $\mathbf{\boldsymbol\eps = 1}$.} There exists $p$, $q$, $r$ and $C_0>0$ such that
$$\quad Q_{l,i}(\cdot) \sim \norm{\partial_{l+\delta_i}^0\cdot}^2_{L^2_{x,v}}+ \norm{\partial_l^{\delta_i} \cdot}^2_{L^2_{x,v}}$$
and if $f$ is a solution to the perturbative equation $\eqref{eq:perturbative BE apriori}$, then for any $\Lambda > 1$,
\begin{equation*}
\begin{split}
\forall t \in [0,T_0), \quad\frac{d}{dt}Q_{l,i}(f) \leq & -2\lambda_{s,\Lambda}\pa{\norm{\pi_L^\bot\pa{\partial_{l+\delta_i}^0 f}}^2_{L^2_\gamma} +\norm{\partial_{l+\delta_i}^0f}^2_{L^2_{x,v}}+  \norm{\partial_{l}^{\delta_i} f}^2_{L^2_\gamma}}
\\&-\Lambda\norm{\partial_{l+\delta_i}^0f}^2_{L^2_2}- \Lambda\norm{\partial_{l}^{\delta_i}f}^2_{L^2_2} 
\\&+  C_0\sum\limits_{1\leq j,k\leq d}\norm{\partial^{\delta_k}_{l+\delta_i-\delta_j}f}^2_{L^2_2} + C_0\sum\limits_{1\leq j,k\leq d}\norm{\partial^{\delta_i + \delta_k}_{l-\delta_j}f}^2_{L^2_2}
\\&+ C_{0,\Lambda}\cro{\norm{f}^2_{\mathcal{H}^{s+1}_{\eps}}\norm{f}^2_{H^{s+1}_\gamma} + \norm{f}^2_{H^s_{x,v}}+ 1}.
\end{split}
\end{equation*}
All the constants involved depend explicitly on $T_0$, $s$ and $E$.
\end{prop}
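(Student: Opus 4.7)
The proof is an exercise in hypocoercive energy estimates in the spirit of \cite{MouNeu,Bri3}. The key observation is that \eqref{spectral gap L} only dissipates the microscopic part $\pi_L^\bot(\partial_{l+\delta_i}^0 f)$ when $j=0$, so the full control of $\norm{\partial_{l+\delta_i}^0 f}^2_{L^2_{x,v}}$ has to be extracted from the cross term $\eps r\langle \partial_{l+\delta_i}^0 f,\partial_l^{\delta_i}f\rangle_{L^2_{x,v}}$: through the commutator $[v\cdot\nabla_x,\nabla_v]=-\nabla_x$, Lemma~\ref{lem:S1} turns this cross term into a gain of order $-r\norm{\partial_{l+\delta_i}^0 f}^2_{L^2_{x,v}}$ along the transport. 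I first pick $p,q,r>0$ with $r<\sqrt{2pq}$, which by Young's inequality ensures that $Q_{l,i}(\cdot)$ is equivalent to $\norm{\partial_{l+\delta_i}^0 \cdot}^2_{L^2_{x,v}}+\eps^2\norm{\partial_l^{\delta_i} \cdot}^2_{L^2_{x,v}}$ uniformly in $\eps\in(0,1]$.

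I then compute $\frac{d}{dt}Q_{l,i}(f)$ by substituting $\partial_t f=\sum_{k=1}^6 S_k$ into the four scalar products produced by the differentiation. For each operator $S_k$, the lemmas of Section~\ref{subsec:estimates deterministic} together with \eqref{spectral gap L}--\eqref{estimate scalar product L} and Proposition~\ref{prop:estimates Gamma} yield three contributions, weighted respectively by $2p$, $2q\eps^2$ and $\eps r$ (the last one collecting both $\langle \partial_{l+\delta_i}^0 S_k,\partial_l^{\delta_i}f\rangle$ and $\langle \partial_{l+\delta_i}^0 f,\partial_l^{\delta_i}S_k\rangle$). The three structural gains coming out of the computation are: $-\frac{2p\lambda_{0,\Lambda}}{\eps^2}\norm{\pi_L^\bot(\partial_{l+\delta_i}^0 f)}^2_{L^2_\gamma}-2q\lambda_{s,\Lambda}\norm{\partial_l^{\delta_i}f}^2_{L^2_\gamma}$ from $L/\eps^2$ via \eqref{spectral gap L}--\eqref{estimate Hs L}; the hypocoercive gain $-r\norm{\partial_{l+\delta_i}^0 f}^2_{L^2_{x,v}}$ from $S_1$; and the weighted dissipation $-\eps^{1+\eu}\Lambda\bigl(\norm{\partial_{l+\delta_i}^0 f}^2_{L^2_2}+\norm{\partial_l^{\delta_i}f}^2_{L^2_2}\bigr)$ from $S_3$ via \eqref{eq:positivity E ronde}. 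The losses split into scalar products of $L$ on mixed derivatives (which fit into a small fraction of the $L^2_\gamma$-gains thanks to \eqref{estimate scalar product L}), bilinear $\norm{f}^2_{\mathcal{H}^{s+1}_\eps}\norm{f}^2_{H^{s+1}_\gamma}$ terms from $\Gamma$, neighbouring Sobolev derivatives produced by integrations by parts on $S_1$ and by $S_2$, lower-order $\norm{f}^2_{H^s_{x,v}}$ remainders, and a constant source from $S_6$.

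The constants are then tuned in cascade. I first choose the $\eta_i$ of the lemmas small enough that the $\pi_L^\bot/\eps^2$ and $L^2_\gamma$ loss terms are absorbed into small fractions of $2p\lambda_{0,\Lambda}/\eps^2$ and $2q\lambda_{s,\Lambda}$. Next, $p$ and $q$ are fixed large relative to $r$, so that a factor at least $2$ remains in front of each of the three structural gains, while preserving the equivalence condition $r<\sqrt{2pq}$. For $\eps\leq\eps_s$, the $\mathbf{1}_{\eps<1}\eps^{1-\eu}$ and $\eps^{1-\eu}C_{\eta_2}/\Lambda$ prefactors appearing in Lemmas~\ref{lem:S2}--\ref{lem:S3} let the corresponding losses be absorbed on the right-hand side without invoking the $S_3$ weighted gain, and $\eps_s$ is picked small enough for this. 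For $\eps=1$, on the contrary, the $S_3$ weighted dissipation has to be retained, and $\Lambda$ is selected last, large enough to dominate the $C_{\eta_2}/\Lambda$ and $\Lambda C_{\eta_3}$ losses generated by $S_2$ and by the mixed term of $S_3$. The main obstacle is precisely this compatibility issue: the mixed-term estimates of Lemmas~\ref{lem:S2}--\ref{lem:S6} simultaneously generate $L^2_2$-weighted leaks, $L^2_{x,v}$ leaks on pure $x$-derivatives, and lower-order $\eps^{-\eu}\norm{f}^2_{H^{s}_{x,v}}$ remainders, each of which must be reabsorbed by a different structural gain; making the three absorptions coexist fixes the ordering $\eta_i\to(p,q,r)\to\Lambda$ in the choice of parameters and rules out any circular dependency.
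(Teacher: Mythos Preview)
Your overall strategy is the paper's: differentiate $Q_{l,i}$, feed in the six $S_k$ via the lemmas of Section~\ref{subsec:estimates deterministic} and Propositions~\ref{prop:properties L}--\ref{prop:estimates Gamma}, identify the three structural gains you list, and absorb the losses by tuning constants. However, the cascade you describe, ``$\eta_i\to(p,q,r)\to\Lambda$'', does not close as written. The transport loss from Lemma~\ref{lem:S1} on the $q$-block is $\frac{q\eps^2}{\eta_1}\norm{\partial_{l+\delta_i}^0 f}^2_{L^2_{x,v}}$, which must be beaten by the commutator gain $-r\norm{\partial_{l+\delta_i}^0 f}^2_{L^2_{x,v}}$; this forces $r\gtrsim q/\eta_1$, contradicting your ``$p$ and $q$ large relative to $r$''. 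Likewise the cross-$L$ loss $C_{\eta_0}r$ in \eqref{estimate scalar product L} forces $p\gtrsim C_{\eta_0}r$, so $p\gg r\gg q$. The paper therefore \emph{interleaves} the choices (for $\eps<\eps_s$: fix $\Lambda=1$, then $q$ large, then $\eta_1$ small, then $r$ large, then $\eta_0,\eta_2,\eta_3$ small, then $p$ large, then $\eta_4$ small, finally $\eps$ small), and this interleaving is exactly what prevents the circularity you flag at the end.

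For $\eps=1$ there is a second imprecision: $\Lambda$ is \emph{not} fixed in this proposition. The statement reads ``for any $\Lambda>1$'', and the whole point of the paper's argument is to arrange that $p,q,r,\eta_1,\eta_2,\eta_3$ (hence the coefficient $C_0$ in front of the problematic sums $\sum\norm{\partial^{\delta_k}_{l+\delta_i-\delta_j}f}^2_{L^2_2}$) are chosen \emph{independent of $\Lambda$}, while the gain $-\Lambda\norm{\cdot}^2_{L^2_2}$ keeps its explicit $\Lambda$; the actual absorption by taking $\Lambda$ large happens only later, in Proposition~\ref{prop:sobolev dot deterministic}. Your sentence ``$\Lambda$ is selected last, large enough to dominate the $\Lambda C_{\eta_3}$ losses'' cannot work since both sides scale linearly in $\Lambda$; what happens instead is that $\eta_3$ is chosen small so that $C_{\eta_3}r\leq p/4$, making that balance $\Lambda$-free.
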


\begin{proof}[Proof of Proposition \ref{prop:estimate Qli}]
We recall that $f$ is solution to
$$\partial_t f  = \sum\limits_{j=1}^6 S_j(t,x,v)$$
which directly implies that
$$\frac{d}{dt}Q_{l,i}(f) = \sum\limits_{j=1}^6 p\langle \partial_{l+\delta_i}^0S_j, \partial_{l+\delta_i}^0 f \rangle_{L^2_{x,v}} + \eps^2 q\langle \partial_{l}^{\delta_i}S_j, \partial_{l}^{\delta_i} f \rangle_{L^2_{x,v}} + \eps r\langle \partial_{l+\delta_i}^0S_j, \partial_{l}^{\delta_i} f \rangle_{L^2_{x,v}}.$$
We therefore directly apply Propositions \ref{prop:properties L} and \ref{prop:estimates Gamma} to control $S_4$ and $S_5$, whereas we use Lemmas \ref{lem:S1}, \ref{lem:S2}, \ref{lem:S3} and \ref{lem:S6} for the other terms. It yields
\begin{equation}\label{eq:Qli general}
\begin{split}
\frac{1}{2}\frac{d}{dt}Q_{l,i}(f) \leq & \frac{C_{\eta_0}r +  p\eta_0 - p}{\eps^2}\lambda_{0,\Lambda}\norm{\pi_L^\bot\pa{\partial_{l+\delta_i}^0 f}}^2_{L^2_\gamma} 
\\&+\cro{\frac{q\eps^2}{\eta_1}+\eta_4p+ \eps^{1-\eu}\frac{1_{\eps <1}}{4}p+\eps\frac{1_{\eps <1}}{4}r - r}\norm{\partial_{l+\delta_i}^0f}^2_{L^2_{x,v}}
\\&+ \cro{q(\eta_1+\eta'_0+\eta_4+\eps^{3-e}\frac{1_{\eps <1}}{4})+r(\eta_0+\eta_4)\lambda_{0,\Lambda}+\eps^{2-e}\frac{1_{\eps <1}}{4}r - q\lambda_{s,\Lambda}} \norm{\partial_{l}^{\delta_i} f}^2_{L^2_\gamma}
\\&+ \eps^{1+\eu}\Lambda\cro{C_{\eta_3}r + \eta_2 p-\frac{p}{2}}\norm{\partial_{l+\delta_i}^0f}^2_{L^2_2}
\\&+ \eps^{3+e}\Lambda\cro{\eta_2 q + (\eta_2 + \eta_3) r  - \frac{q}{2}}\norm{\partial_{l}^{\delta_i}f}^2_{L^2_2} 
\\&+ \eps^{1-\eu}\cro{pC_{\eta_2}+ d \frac{C_{\eta_2}}{\Lambda}r}\sum\limits_{1\leq j,k\leq d}\norm{\partial^{\delta_k}_{l+\delta_i-\delta_j}f}^2_{L^2_{x,v}} 
\\&+ \cro{\eps^2 q C_{\eta_2}}\sum\limits_{1\leq j,k\leq d}\norm{\partial^{\delta_i + \delta_k}_{l-\delta_j}f}^2_{L^2_{x,v}}
\\&+ C_{p,q,r,\Lambda,\eta}\cro{\norm{f}^2_{\mathcal{H}^{s+1}_{\eps}}\norm{f}^2_{H^{s+1}_\gamma} + \norm{f}^2_{H^s_{x,v}}+ 1}.
\end{split}
\end{equation}
We firstly emphasize that the twice indexed sums are a cruder estimate than the one we actually derived in the Lemmas: we added some terms that were formerly absent, but we think it provides a better reading. We secondly emphasize that the estimate on the bilinear term $S_5$ in Proposition \ref{prop:estimates Gamma} gives a control of the form $\norm{f}^2_{H^{s+1}_{x,v}}\norm{f}^2_{H^{s+1}_\gamma}$ which translates into a control of the form $\norm{f}^2_{\mathcal{H}^{s+1}_{\eps}}\norm{f}^2_{H^{s+1}_\gamma}$ when multiplying by powers of $\eps$. We lastly used $\norm{\pi_L\pa{\partial^0_l f}}_{L^2_{x,v}} \leq \norm{\partial^0_l f}_{L^2_{x,v}}$ when applying Lemma \ref{lem:S6}.
\par In what follows, we recall that $0<\eps \leq 1$. We shall now choose the constants carefully, which is why we indexed all generic constant by their dependencies in order to avoid any loop.
\begin{remark}\label{rem:difference eps=1}
The choices are different for $\eps =1$ or any $\eps<1$ because the control of the specific term $\mathcal{T}:=\sum\limits_{1\leq j,k\leq d}\norm{\partial^{\delta_k}_{l+\delta_i-\delta_j}f}^2_{L^2_{x,v}}$ will be achieved in two different ways. For $\eps=1$, the negative feedback of the fluctuation $\norm{\partial^{\delta_i}_lf}_{L^2_2}$ can control these terms, taking $\Lambda$ sufficiently large. Such an approach does not work for general values of $\eps$ (because we control $v$-derivatives with a degenerate weight $\eps^2$). In the general case therefore, the term $\mathcal{T}$ will be absorbed by the negative feedback that the linear Boltzmann operator provides, and this latter approach requires a sufficiently small value of $\eps$.
\par Note that this distinction is quite artificial, and is due to our choice of Sobolev norm with coefficient $\eps^2 \norm{\partial_{l}^{\delta_i}\cdot}^2_{L^2_{x,v}}$. Working with this weighted norm facilitates various computations and estimates, but \cite{Bri3} showed that a finer norm, which is not degenerating when $\eps$ tends to $0$, can catch the hypocoercivity of the Boltzmann linear operator. With more technicalities, we think we could use the latter norm to avoid this splitting into two regimes, and always control the problematic term with the negative feedback generated by the fluctuations for any $\eps$.
\end{remark}
We start with the case $\eps <1$ and fix the following quantities:
\begin{enumerate}
\item $\Lambda=1$;
\item $q$ sufficiently large such that $-q\lambda_{s,1} \leq -5$;
\item $\eta_1$ small enough such that $q\eta_1 \leq 1$;
\item $r$ sufficiently large such that $\frac{q}{\eta_1} - \frac{3r}{4} \leq -4$;
\item $\eta_0=\eta_0'$  small enough such that 
	\begin{itemize}
	\item $2\eta_0 - \lambda_{s,1} \leq -\frac{\lambda_{s,1}}{2}$,
	\item $2\eta_0q + 2\eta_0\lambda_{0,1}r \leq -1$;
	\end{itemize}
\item $\eta_2$ small enough such that $\eta_2 \leq 1/8$ and $\eta_2 r \leq q/8$;
\item $\eta_3$ small enough such that $\eta_3 r \leq q /8$;
\item $p$ sufficiently large such that
	\begin{itemize}
	\item $C_{1,\eta_0} r + p \eta_0 - p \leq -1$,
	\item $C_{\eta_3}r + \eta_2 p-\frac{p}{2} \leq 0$,
	\item $r^2\leq pq$ and $q\leq p$ so that $Q_{l,i}(\cdot) \sim \norm{\partial_{l+\delta_i}^0\cdot}^2_{L^2_{x,v}}+ \eps^2\norm{\partial_l^{\delta_i} \cdot}^2_{L^2_{x,v}}$
	\end{itemize}
\item $\eta_4 \leq \eta_0$ small enough such that $\eta_4 p \leq 1$ - note that point this allows to use point (5) above;
\item At last we need $\eps$ small enough such that $\eps^{1-\eu}\frac{p}{4}\leq 1,\:\eps^{3-\eu}\frac{q}{4},\:\eps^{2-\eu}\frac{r}{4} \leq 1$.
\end{enumerate}
Such a choice yields exactly the expected result for $0<\eps<1$.

\begin{remark}\label{rem:e<1}
One clearly sees here that, if $\eu$ were too large,  $\eu\geq 1$, then one could not make $\eps^{1-\eu}\frac{p}{4}$ small as desired. In other terms, in that case of large coefficient $\eu$, the amplitude of the evolution of the $v$ characteristics would not be compensated by the gain due, via hypocoercive estimates, to the free transport in $x$.
\end{remark}

Now let us deal with the particular case $\eps=1$. The crucial step will be to fix the constants $p$, $r$ and $\eta_2$ to ensure that the term
$\sum\limits_{1\leq j,k\leq d}\norm{\partial^{\delta_k}_{l+\delta_i-\delta_j}f}^2_{L^2_{x,v}}$ has a multiplicative constant independent of $\Lambda$. Then $\Lambda$ should precisely be chosen large enough to absorb these contributions. In order to achieve this goal we transform $\eqref{eq:Qli general}$ by estimating
$$q\eta_1 \norm{\partial_{l}^{\delta_i} f}^2_{L^2_\gamma} \leq Cq\eta_1 \norm{\partial_{l}^{\delta_i} f}^2_{L^2_2} \quad\mbox{and}\quad 2r\eta_0\lambda_{0,\Lambda} \norm{\partial_{l}^{\delta_i} f}^2_{L^2_\gamma} \leq Cr \eta_0  \norm{\partial_{l}^{\delta_i} f}^2_{L^2_2},$$
since $\lambda_{0,\Lambda} \leq \lambda_{0,0}$ from the proof of Proposition \ref{prop:properties L}. We infer, for $\Lambda \geq 1$, the estimate
\begin{equation*}
\begin{split}
\frac{d}{dt}Q_{l,i}(f) \leq & \cro{C_{\eta_0}r + p\eta_0 - p}\lambda_{0,\Lambda}\norm{\pi_L^\bot\pa{\partial_{l+\delta_i}^0 f}}^2_{L^2_\gamma} +\cro{\frac{q}{\eta_1}+\eta_4 p - r}\norm{\partial_{l+\delta_i}^0f}^2_{L^2_{x,v}}
\\&+ \cro{2q\eta'_0 - q\lambda_{s,\Lambda}} \norm{\partial_{l}^{\delta_i} f}^2_{L^2_\gamma}
\\&+ \Lambda\cro{C_{\eta_3}r + \eta_2 p-\frac{p}{2}}\norm{\partial_{l+\delta_i}^0f}^2_{L^2_2}
\\&+ \Lambda\cro{(C\eta_1+\eta_2+\eta_4) q + (C\eta_0+\eta_2 + \eta_3+\eta_4) r  - \frac{q}{2}}\norm{\partial_{l}^{\delta_i}f}^2_{L^2_2} 
\\&+ \cro{ pC_{\eta_2}+ d C_{\eta_2}r}\sum\limits_{1\leq j,k\leq d}\norm{\partial^{\delta_k}_{l+\delta_i-\delta_j}f}^2_{L^2_{x,v}} + q C_{\eta_2}\sum\limits_{1\leq j,k\leq d}\norm{\partial^{\delta_i + \delta_k}_{l-\delta_j}f}^2_{L^2_{x,v}}
\\&+ C_{p,q,r,\Lambda,\eta}\cro{\norm{f}^2_{\mathcal{H}^{s+1}_{\eps}}\norm{f}^2_{H^{s+1}_\gamma} + \norm{f}^2_{H^s_{x,v}}+ 1}.
\end{split}
\end{equation*}
We can now choose our different constants in the following way:
\begin{enumerate}
\item $q = 8$;
\item $\eta_1$ small enough such that $Cq\eta_1 \leq 1$;
\item $r$ sufficiently large such that $\frac{q}{\eta_1} - r \leq -3$;
\item $\eta_0$, $\eta_2$ and $\eta_3$ small enough such that 
	\begin{itemize}
	\item $\eta_0 \leq \frac{1}{4}$,
	\item $C\eta_1 + 2\eta_2 \leq 1$,
	\item $(C\eta_0 + 2\eta_2 + \eta_3)r \leq 1$,
	\item $\eta_2 \leq \frac{1}{4}$
	\end{itemize}
\item $\eta_0' = \eta_0'(\Lambda)$ small enough such that $3\eta_0' \leq -\lambda_{s,\Lambda}$
\item  $p$ sufficiently large such that
	\begin{itemize}
	\item $C_{\eta_0} r +2 p\eta_0 - p \leq -1$,
	\item $C_{\eta_3}r + \eta_2 p-\frac{p}{2} \leq -1$,
	\item $r^2\leq pq$ and $q\leq p$ so that $Q_{l,i}(\cdot) \sim \norm{\partial_{l+\delta_i}^0\cdot}^2_{L^2_{x,v}}+ \eps^2\norm{\partial_l^{\delta_i} \cdot}^2_{L^2_{x,v}}$.
	\end{itemize}
\item At last $\eta_4 \leq \eta_2$ small enough such that $\eta_4 p \leq 1$ - note that point this allows to use point (5) above;.
\end{enumerate}
These choices yield the expected result, emphasizing that we manage to choose $p$, $q$, $r$ and $C_{\eta_2}$ independently of $\Lambda$.
\end{proof}


\subsection{Estimates for velocity derivatives}\label{subsec:v derivative deterministic}

We now turn to the terms that include velocity derivatives for which the linear Boltzmann operator provides a full negative feedback (this is the second term in \eqref{GainL}).

\begin{prop}\label{prop:estimate djl}
Let $s$ be in $\N^*$ and $l$ and $j$ be multi-indexes with  $\abs{l}+\abs{j}=s+1$ with $\abs{j}\geq 2$. There exist $0<\eps_s\leq 1$
\par\textbf{Case $\mathbf{0<\boldsymbol\eps \leq \eps_s}$.} There exists $\lambda_{s,1}$, $C_1 >0$ such that if $f$ is a solution to the perturbative equation $\eqref{eq:perturbative BE apriori}$ then
\begin{equation*}
\begin{split}
\forall t \in [0,T_0),\quad \frac{d}{dt}\norm{\partial_l^jf}^2_{L^2_{x,v}} \leq& -\frac{\lambda_{s,1}}{\eps^2}\norm{\partial^j_lf}^2_{L^2_\gamma} + C_1\sum\limits_{1\leq i,k\leq d}\norm{\partial^{j+\delta_k}_{l-\delta_i}f}^2_{L^2_\gamma} 
\\&+ C_1\sum\limits_{k=1}^d\norm{\partial^{j-\delta_k}_{l+\delta_k}f}^2_{L^2_{x,v}}
\\&+ \frac{C_1}{\eps^2}\cro{\norm{f}^2_{\mathcal{H}^{s+1}_{\eps}}\norm{f}^2_{H^{s+1}_\gamma} + \norm{f}^2_{H^s_{x,v}}+ 1}.
\end{split}
\end{equation*}

\textbf{Case $\mathbf{\boldsymbol\eps = 1}$.} There exists $C_1 >0$ such that if $f$ is a solution to the perturbative equation $\eqref{eq:perturbative BE apriori}$ then for any $\Lambda \geq 1$,
\begin{equation*}
\begin{split}
\forall t \in [0,T_0),\quad \frac{d}{dt}\norm{\partial_l^jf}^2_{L^2_{x,v}} \leq& -\lambda_{s,\Lambda}\norm{\partial^j_lf}^2_{L^2_\gamma} + C_1\sum\limits_{k=1}^d\norm{\partial^{j-\delta_k}_{l+\delta_k}f}^2_{L^2_{x,v}}
\\&-\Lambda\norm{\partial_l^jf}^2_{L^2_2} +C_1\sum\limits_{1\leq i,k\leq d}\norm{\partial^{j + \delta_k}_{l-\delta_i}f}^2_{L^2_2}
\\&+ C_{1,\Lambda}\cro{\norm{f}^2_{\mathcal{H}^{s+1}_{\eps}}\norm{f}^2_{H^{s+1}_\gamma} + \norm{f}^2_{H^s_{x,v}}+ 1}.
\end{split}
\end{equation*}
All the constants depend explicitly on $s$, $E$ and $T_0$.
\end{prop}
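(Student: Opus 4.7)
The starting point is to differentiate $\norm{\partial_l^j f}^2_{L^2_{x,v}}$ along the evolution \eqref{eq:perturbative BE apriori}. Since $\abs{j}\geq 2$ and $\abs{l}+\abs{j}=s+1$, one obtains
\[
\frac{1}{2}\frac{d}{dt}\norm{\partial_l^j f}^2_{L^2_{x,v}}=\sum_{i=1}^{6}\langle \partial_l^j S_i,\partial_l^j f\rangle_{L^2_{x,v}},
\]
and the plan is to apply the six estimates of Section~\ref{sec:estimates operators} to each term, so that the two negative feedback contributions -- the spectral gap $-\frac{\lambda_{s,\Lambda}}{\eps^2}\norm{\partial_l^j f}^2_{L^2_\gamma}$ coming from $L[f]/\eps^2$ (Proposition~\ref{prop:properties L}, used with $\abs{j}\geq 1$) and the quadratic coercivity $-\eps^{1+\eu}\Lambda\norm{\partial_l^j f}^2_{L^2_2}$ coming from $\eps\mathcal{E}f$ (Lemma~\ref{lem:S3}) -- dominate all remaining error terms.

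Applying Lemma~\ref{lem:S1} to $S_1$ yields a loss $\frac{\eta_1}{\eps^2}\norm{\partial_l^j f}^2_{L^2_{x,v}}$ together with the unavoidable commutator term $\frac{1}{\eta_1}\sum_k\norm{\partial^{j-\delta_k}_{l+\delta_k}f}^2_{L^2_{x,v}}$, which we keep as such since it appears in the statement. Lemma~\ref{lem:S2} on $S_2$ vanishes when $l=0$, otherwise produces $\eps^{1+\eu}\eta_2\Lambda\norm{\partial_l^j f}^2_{L^2_{x,v}}$ and the second kept term $\eps^{1-\eu}\frac{C_{\eta_2}}{\Lambda}\sum_{i,k}\norm{\partial^{j+\delta_k}_{l-\delta_i}f}^2_{L^2_{x,v}}$, plus a controllable remainder in $\norm{f}^2_{H^{s}_{x,v}}$. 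Lemma~\ref{lem:S3} furnishes the gain $-\frac{\eps^{1+\eu}\Lambda}{2}\norm{\partial_l^j f}^2_{L^2_2}$ together with a small loss $\frac{\eps^{1-\eu}}{4}\mathbf{1}_{\eps<1}\norm{\partial_l^j f}^2_{L^2_{x,v}}$ and an $\norm{f}^2_{H^{s}_{x,v}}$ remainder. Proposition~\ref{prop:estimates Gamma} on $\Gamma[f,f]/\eps$ contributes $\frac{\eta_0'}{\eps^2}\norm{\partial_l^j f}^2_{L^2_\gamma}+C_{\eta_0'}\bigl(\norm{f}^2_{H^{s+1}_{x,v}}\norm{f}^2_{H^{s+1}_\gamma}\bigr)$, and the last factor is dominated by $\frac{1}{\eps^2}\norm{f}^2_{\mathcal{H}^{s+1}_\eps}\norm{f}^2_{H^{s+1}_\gamma}$ since $\abs{j}\geq 1$ forces an $\eps^2$ weight in $\mathcal{H}^{s+1}_\eps$. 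Finally Lemma~\ref{lem:S6} bounds $S_6$ by $\frac{\eta_4}{\eps^2}\norm{\partial_l^j f}^2_{L^2_\gamma}+\eps^2 C_{\Lambda,\eta_4}$, and the constant fits into the ``$+1$'' of the statement since $\eps\leq 1$.

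In the regime $0<\eps\leq \eps_s$, I discard the fluctuation gain (it has a degenerating prefactor $\eps^{1+\eu}$) and use only the spectral gap of $L$. Choosing $\eta_1,\eta_0',\eta_4$ small enough so that $\eta_1+\eta_0'+\eta_4\leq \lambda_{s,1}/2$, and using $\norm{\partial_l^j f}^2_{L^2_{x,v}}\leq\norm{\partial_l^j f}^2_{L^2_\gamma}$, the losses at scale $1/\eps^2$ are absorbed into $-\lambda_{s,1}/\eps^2\norm{\partial_l^j f}^2_{L^2_\gamma}$. The remaining losses at scale $\eps^{1+\eu}$ or $\eps^{1-\eu}$ are of order $\eps^{3-\eu}/\eps^2$ after comparison with the spectral gap, and since $\eu\in(0,1)$ gives $3-\eu>0$, a smallness condition on $\eps$ absorbs them too; the parameter $\Lambda$ can simply be fixed to $\Lambda=1$ here. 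In the regime $\eps=1$, I instead retain both gains. The losses $\eps^{1+\eu}\eta_2\Lambda\norm{\partial_l^j f}^2_{L^2_{x,v}}$ and $\frac{\eps^{1-\eu}}{4}\norm{\partial_l^j f}^2_{L^2_{x,v}}$ are absorbed into $-\frac{\Lambda}{2}\norm{\partial_l^j f}^2_{L^2_2}$ by choosing $\eta_2$ small and $\Lambda$ large, while $\eta_1,\eta_0',\eta_4$ are chosen small as above so that the spectral gap dominates the losses at scale $1$; crucially $\eta_2$ is chosen before $\Lambda$, exactly as in the proof of Proposition~\ref{prop:estimate Qli}, so that no parameter loop occurs. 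The main obstacle is precisely this bookkeeping: the cross-derivative terms $\sum_k\norm{\partial^{j-\delta_k}_{l+\delta_k}f}^2$ and $\sum_{i,k}\norm{\partial^{j+\delta_k}_{l-\delta_i}f}^2$ have mixed weights ($L^2_{x,v}$ versus $L^2_\gamma$ or $L^2_2$), but since $\gamma\geq 0$ and $\langle v\rangle^\gamma\leq\langle v\rangle^2$, every $L^2_{x,v}$ norm can be upgraded to the weighted norm appearing in the conclusion, which closes the proof after collecting all remainders into the final term $\frac{C_1}{\eps^2}\bigl[\norm{f}^2_{\mathcal{H}^{s+1}_\eps}\norm{f}^2_{H^{s+1}_\gamma}+\norm{f}^2_{H^s_{x,v}}+1\bigr]$.
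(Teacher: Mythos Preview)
Your proof follows essentially the same route as the paper's: differentiate $\norm{\partial_l^j f}^2_{L^2_{x,v}}$, apply the estimates of Section~\ref{sec:estimates operators} term by term, and tune the $\eta$'s (and $\eps$) to close.

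One point in the case $\eps=1$ needs sharpening. The constant $C_1$ multiplying $\sum_k\norm{\partial^{j-\delta_k}_{l+\delta_k}f}^2_{L^2_{x,v}}$ must be \emph{independent of $\Lambda$}, since the statement holds ``for any $\Lambda\geq 1$'' and Proposition~\ref{prop:sobolev dot deterministic} will later choose $\Lambda$ large precisely to absorb those terms. But $C_1$ contains $1/\eta_1$, and you propose to absorb the $\eta_1$-loss into the spectral gap $\lambda_{s,\Lambda}$; that forces $\eta_1$ to depend on $\Lambda$, which breaks the independence. The paper's device here is to first upgrade $\eta_1\norm{\partial_l^j f}^2_{L^2_\gamma}\leq \eta_1\norm{\partial_l^j f}^2_{L^2_2}$ and absorb it into the fluctuation gain $-\Lambda\norm{\partial_l^j f}^2_{L^2_2}$ instead (valid for any $\Lambda\geq 1$ once $\eta_1$ is fixed small), leaving only $\eta_0,\eta_0'$ to be tuned against $\lambda_{s,\Lambda}$; those go into $C_{1,\Lambda}$, which is allowed to depend on $\Lambda$. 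Also note that $\frac{\eps^{1-\eu}}{4}\mathbf{1}_{\eps<1}$ vanishes identically at $\eps=1$, so there is nothing to absorb there and no need for ``$\Lambda$ large''.
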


\begin{proof}[Proof of Proposition \ref{prop:estimate djl}]
As for $Q_{l,i}$, we recall that $f$ is solution to
$$\partial_t f  = \sum\limits_{j=1}^6 S_j(t,x,v)$$
so we directly apply  Propositions \ref{prop:properties L} and \ref{prop:estimates Gamma} to control $S_4$ and $S_5$, whereas we use Lemmas \ref{lem:S1}, \ref{lem:S2}, \ref{lem:S3} and \ref{lem:S6} for the other terms. We have
\begin{equation}\label{eq:djl general}
\begin{split}
\frac{1}{2}\frac{d}{dt}\norm{\partial_l^jf}^2_{L^2_{x,v}} \leq & \frac{\eta_0+\eta_0'+\eta_1+\eta_4+\eps^{3-e}\frac{\mathbf{1}_{\eps <1}}{4}-\lambda_{s,\Lambda}}{\eps^2}\norm{\partial_l^jf}^2_{L^2_\gamma} 
\\&+ \cro{\eps^{1+\eu}\eta_2\Lambda - \eps^{1+\eu} \Lambda}\norm{\partial_l^jf}^2_{L^2_2}
\\&+ \frac{1}{\eta_1}\sum\limits_{k=1}^d\norm{\partial^{j-\delta_k}_{l+\delta_k}f}^2_{L^2_{x,v}} +\eps^{1-\eu} C_{\eta_2}\sum\limits_{1\leq i,k\leq d}\norm{\partial^{j+\delta_k}_{l-\delta_i}f}^2_{L^2_{x,v}} 
\\&+ \frac{C_{1,\Lambda}}{\eps^2}\cro{\norm{f}^2_{\mathcal{H}^{s+1}_{\eps}}\norm{f}^2_{H^{s+1}_\gamma} + \norm{f}^2_{H^s_{x,v}}+ 1}.
\end{split}
\end{equation}

Here again our choice of constant will differ if $\eps=1$. First let us consider the general case $0<\eps <1$. We take
\begin{enumerate}
\item $\Lambda=1$;
\item $\eta_0=\eta_0'=\eta_1=\eta_4$ small enough such that 
	\begin{itemize}
	\item $\eta_0+\eta_0'+\eta_1-\lambda_{s,1} \leq -\lambda_{s,1}/2$,
	\item $\eta_2 \leq 1/2$;
	\end{itemize}
\item $\eps$ sufficiently small such that $\frac{\eps^{3-e}}{4}\leq \frac{\lambda_{s,1}}{4}.$
\end{enumerate}
and these choices lead to the expected estimate.

The particular case $\eps=1$ is dealt with differently and we modify $\eqref{eq:djl general}$ by estimating
$$\eta_1\norm{\partial_l^jf}^2_{L^2_\gamma} \leq \eta_1\norm{\partial_l^jf}^2_{L^2_2},$$
to obtain
\begin{equation*}
\begin{split}
\frac{1}{2}\frac{d}{dt}\norm{\partial_l^jf}^2_{L^2_{x,v}} \leq & \cro{\eta_0+\eta_0'-\lambda_{s,\Lambda}}\norm{\partial_l^jf}^2_{L^2_\gamma} + \cro{2\eta_1+\eta_2\Lambda - \Lambda}\norm{\partial_l^jf}^2_{L^2_2}
\\&+ \frac{1}{\eta_1}\sum\limits_{1\leq k\leq d}\norm{\partial^{j-\delta_k}_{l+\delta_k}f}^2_{L^2_{x,v}} + C_{\eta_2}\sum\limits_{1\leq i,k\leq d}\norm{\partial^{j + \delta_k}_{l-\delta_i}f}^2_{L^2_{x,v}}
\\&+ C_{1,\Lambda,\eta}\cro{\norm{f}^2_{\mathcal{H}^{s+1}_{\eps}}\norm{f}^2_{H^{s+1}_\gamma} + \norm{f}^2_{H^s_{x,v}}+ 1}.
\end{split}
\end{equation*}
Taking $\Lambda \geq 1$ we can choose the constants $\eta_1$ and $\eta_2$ independently of $\Lambda$ in the same manner as for the general case and obtain the expected result.
\end{proof}


\subsection{Estimates for the full Sobolev norm}\label{subsec:full sobolev deterministic}

We now gather the previous estimates to establish a full control over the twisted Sobolev norm. We start with the Sobolev norm corresponding to a fixed total number of derivatives, and then, layer by layer, deal with the norm that estimates all the derivatives.

\begin{prop}\label{prop:sobolev dot deterministic}
Let $s\geq s_0$ (where $s_0$ is given in Proposition~\ref{prop:estimates Gamma}) and $1\leq s' \leq s$. Then there exists $\Lambda>0$, $\lambda_{s'}>0$, $C_{s'}>0$ and $0<\eps_s {s'}\leq 1$ such that for ($\eps =1$ or $0<\eps\leq \eps_{s'}$) there exists a functional $F_{s'}$ such that
$$F_{s'} \sim \sum\limits_{\abs{l}={s'}}\norm{\partial^0_l \cdot}^2_{L^2_{x,v}} + \eps^2 \sum\limits_{\underset{\abs{j}\geq 1}{\abs{l}+\abs{j}={s'}}}\norm{\partial_l^j \cdot}^2_{L^2_{x,v}}$$
and if $f$ is a solution to the perturbative equation $\eqref{eq:perturbative BE apriori}$ then for any $\Lambda >0$, for all $t\in[0,T_0)$,
\begin{equation*}
\begin{split}
\frac{d}{dt}F_{s'}(f) \leq &-\lambda\pa{ \frac{1}{\eps^2}\sum\limits_{\abs{l}={s'}}\norm{\pi_L^\bot\pa{\partial^0_l f}}^2_{L^2_\gamma} +  \sum\limits_{\underset{\abs{j}\geq 1}{\abs{l}+\abs{j}={s'}}}\norm{\partial_l^j f}^2_{L^2_\gamma}+\sum\limits_{\abs{l}={s'}}\norm{\partial^0_l f}^2_{L^2_{x,v}}}
\\&+ C_{s'}\cro{\norm{f}^2_{\mathcal{H}^{s}_{\eps}}\norm{f}^2_{H^{s}_\gamma} + \norm{f}^2_{H^{s'-1}_{x,v}}+ 1}.
\end{split}
\end{equation*}
All the constants depend on $s$, $E$ and $T_0$.
\end{prop}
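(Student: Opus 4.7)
The plan is to construct $F_{s'}$ as a weighted combination of the hypocoercive building blocks of the previous two propositions. Specifically, I would set
\[
F_{s'}(f) := \sum_{\abs{l}=s'-1}\sum_{i=1}^d Q_{l,i}(f) + \eps^2 \sum_{k=2}^{s'} \beta_k \sum_{\substack{\abs{l}+\abs{j}=s' \\ \abs{j}=k}} \norm{\partial_l^j f}^2_{L^2_{x,v}},
\]
with positive constants $\beta_k$ to be fixed. The norm equivalence follows from the equivalence $Q_{l,i}(\cdot) \sim \norm{\partial_{l+\delta_i}^0 \cdot}^2_{L^2_{x,v}} + \eps^2 \norm{\partial_l^{\delta_i}\cdot}^2_{L^2_{x,v}}$ established in Proposition~\ref{prop:estimate Qli}: summing $Q_{l,i}$ over $\abs{l}=s'-1$ and $i=1,\dots,d$ with bounded combinatorial multiplicity recovers all pure $x$-derivatives of total order $s'$ and all mixed derivatives of total order $s'$ carrying exactly one $v$-derivative, while the last piece handles the higher velocity layers $\abs{j}\geq 2$.

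Differentiating $F_{s'}$ along a solution of $\eqref{eq:perturbative BE apriori}$ and inserting Propositions~\ref{prop:estimate Qli}--\ref{prop:estimate djl} produces a sum of gains and losses which I would organise by the layer index $k=\abs{j}$. The gains inherited from $Q_{l,i}$ already supply $-\frac{2}{\eps^2}\norm{\pi_L^\bot(\partial^0_{l+\delta_i} f)}^2_{L^2_\gamma}$ and $-2\norm{\partial_{l+\delta_i}^0 f}^2_{L^2_{x,v}}$ at the purely spatial layer, plus $-2\norm{\partial_l^{\delta_i} f}^2_{L^2_\gamma}$ at layer $1$ (with an analogous $\Lambda$-modulated $L^2_2$ gain in the $\eps=1$ case). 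The contribution of $\eps^2 \beta_k\norm{\partial_l^j f}^2$ yields $-\beta_k \lambda_{s,1}\norm{\partial_l^j f}^2_{L^2_\gamma}$ at each layer $k\geq 2$ (or $-\Lambda\beta_k\norm{\partial_l^j f}^2_{L^2_2}$ when $\eps=1$). The losses split into three kinds: the $\eps^{1-\eu}C_0$- and $\eps^2 C_0$-weighted losses from $Q_{l,i}$ feeding into layers $1$ and $2$, and the downward/upward cross-layer couplings from Proposition~\ref{prop:estimate djl} interconnecting consecutive layers while preserving total order $s'$.

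The central task is to absorb every loss into the gain at the matching layer without creating a vicious feedback loop. In the regime $0 < \eps \leq \eps_{s'}$, I would take $\beta_k = 1$ for all $k$ and exploit two smallness mechanisms at once: the explicit factors $\eps^{1-\eu}$ and $\eps^2$ in the $Q_{l,i}$ losses (with $\eu<1$, see Remark~\ref{rem:e<1}), and the fact that the $\eps^2$ prefactor in the definition of $F_{s'}$ combined with the $1/\eps^2$ enhancement of the $\norm{\partial_l^j f}^2$ gains renders every inter-layer coupling subleading, provided $\eps_{s'}$ is chosen small enough to beat the $s'$- and $d$-dependent combinatorial multiplicities. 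In the regime $\eps = 1$ the $1/\eps^2$ factor is gone, but the $\Lambda$-modulated $L^2_2$ gains stemming from the fluctuation of the Maxwellian play the same role: with $\beta_k = 1$ and $C_0, C_1$ now fixed, one simply chooses $\Lambda$ large enough to dominate these constants times the combinatorial multiplicity. The principal technical obstacle is the simultaneous absorption across all layers $2\leq k\leq s'$: since both the layer-$2$ loss of $Q_{l,i}$ and each upward coupling in Proposition~\ref{prop:estimate djl} feed into the same target layer, one must check that a single choice of $\eps_{s'}$ (respectively $\Lambda$) meets every threshold; this is granted because the cross-layer constants are layer-independent and the $\beta_k$ are taken uniformly equal. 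The residual source terms $\norm{f}^2_{\mathcal{H}^{s'}_\eps}\norm{f}^2_{H^{s'}_\gamma}$ and $\norm{f}^2_{H^{s'-1}_{x,v}}$ coming out of Propositions~\ref{prop:estimate Qli}--\ref{prop:estimate djl} are then bounded above by $\norm{f}^2_{\mathcal{H}^s_\eps}\norm{f}^2_{H^s_\gamma}$ and $\norm{f}^2_{H^{s'-1}_{x,v}}$ using $s'\leq s$, yielding the final form of the estimate.
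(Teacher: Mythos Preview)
Your approach is correct and essentially follows the paper's strategy: define $F_{s'}$ as the sum of the $Q_{l,i}$ blocks plus $\eps^2$-weighted $\|\partial_l^j f\|^2_{L^2_{x,v}}$ for $|j|\geq 2$, then split into the two regimes and absorb the cross-layer losses using smallness of $\eps$ (resp.\ largeness of $\Lambda$). Your indexing $|l|=s'-1$ in the sum over $Q_{l,i}$ is the consistent one (the paper writes $|l|=s'$, which appears to be a slip given the definition of $Q_{l,i}$).

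The only substantive difference lies in the $\eps=1$ case. You take $\beta_k=1$ uniformly and route \emph{all} cross-layer couplings, including the downward term $C_1\sum_k\|\partial^{j-\delta_k}_{l+\delta_k}f\|^2_{L^2_{x,v}}$ from Proposition~\ref{prop:estimate djl}, into the $L^2_2$ budget (via $\|\cdot\|_{L^2_{x,v}}\leq\|\cdot\|_{L^2_2}$), and then choose $\Lambda$ large once and for all. The paper instead picks hierarchically small weights $B_{j,l}=B_{j,l}(\Lambda)<1$ so that this downward term is absorbed by the $L^2_\gamma$ gain $-\lambda_{s,\Lambda}B_{j',l'}\|\cdot\|^2_{L^2_\gamma}$ at the layer below, reserving $\Lambda$ only for the $L^2_2$ upward couplings; the constraint $B_{j,l}<1$ is what keeps the constant in front of the $L^2_2$ losses independent of $\Lambda$ and avoids a circularity. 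Your version is arguably cleaner and avoids the hierarchical bookkeeping altogether; the paper's version separates the two absorption mechanisms more explicitly, at the cost of $\Lambda$-dependent weights. Both lead to the stated estimate.
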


%
\begin{proof}[Proof of Proposition \ref{prop:sobolev dot deterministic}]
We present the proof in two different cases: $\eps$ sufficiently small first and then $\eps=1$. The technicalities are identical but the absorbtion mechanisms are different as explained in Remark \ref{rem:difference eps=1}. Consider some constant $B_{j,l}>0$ to be fixed later, and define the functional
$$F_{s'}(f) = \sum\limits_{\abs{l}={s'}}\sum\limits_{i=1}^dQ_{l,i}(f) + \eps^2\sum\limits_{\underset{\abs{j}\geq 2}{\abs{l}+\abs{j}={s'}}} B_{j,l}\norm{\partial_l^jf}^2_{L^2_{x,v}}.$$
By Proposition \ref{prop:estimate Qli}, we know that, for any $B>0$:
$$F_s \sim \sum\limits_{\abs{l}=s'}\norm{\partial^0_l \cdot}^2_{L^2_{x,v}} + \eps^2 \sum\limits_{\underset{\abs{j}\geq 1}{\abs{l}+\abs{j}={s'}}}\norm{\partial_l^j \cdot}^2_{L^2_{x,v}}.$$

\textbf{Case $\eps$ sufficiently small.} Using directly Proposition \ref{prop:estimate Qli} and Proposition \ref{prop:estimate djl} we see that, for $B_{j,l}=B$ sufficiently large, we have a constant $C_B>0$ independent of $\eps$ such that
\begin{equation*}
\begin{split}
\frac{d}{dt}F_{s'}(f) \leq& -\pa{\frac{1}{\eps^2}\sum\limits_{\abs{l}=s'}\norm{\pi_L^\bot\pa{\partial^0_l f}}^2_{L^2_\gamma} +  \sum\limits_{\underset{\abs{j}\geq 1}{\abs{l}+\abs{j}=s'}}\norm{\partial_l^j f}^2_{L^2_\gamma}+\sum\limits_{\abs{l}=s'}\norm{\partial^0_l f}^2_{L^2_{x,v}}}
\\&+\eps C_B\sum\limits_{\abs{l}+\abs{j}=s'}\sum\limits_{1\leq i,k \leq d} \norm{\partial^{j+\delta_k}_{l-\delta_i}f}^2_{L^2_\gamma}+\norm{\partial^{j-\delta_k}_{l+\delta_j}f}^2_{L^2_{x,v}}
\\&+C_{1,B}\cro{\norm{f}^2_{\mathcal{H}^{s}_{\eps}}\norm{f}^2_{H^{s}_\gamma} + \norm{f}^2_{H^{s'-1}_{x,v}}+ 1}.
\end{split}
\end{equation*}
Therefore taking $\eps$ sufficiently small allows to absorb the second line with the ne\-ga\-ti\-ve first term. 
This is the expected result.

\textbf{Case $\eps=1$.} The proof is exactly the same. Gathering Proposition \ref{prop:estimate Qli} and Proposition \ref{prop:estimate djl} with $B_{j,l}<1$, we infer that there exists $C>0$ independent of $\Lambda$ and $B_{j,l}$ such that
\begin{equation*}
\begin{split}
\frac{d}{dt}F_{s'}(f) \leq& -\lambda_{s',\Lambda}\Big(\sum\limits_{\abs{l}=s'}\norm{\pi_L^\bot\pa{\partial^0_l f}}^2_{L^2_\gamma}+\sum\limits_{\abs{l}=s'}\norm{\partial^0_l f}^2_{L^2_{x,v}}
\\&\quad\quad\quad\quad\quad+  \sum\limits_{\underset{\abs{j}\geq 2}{\abs{l}+\abs{j}=s'}}B_{j,l}\norm{\partial_l^j f}^2_{L^2_\gamma} + \sum\limits_{\abs{l}=s'-1}\norm{\partial^{\delta_i}_lf}^2_{L^2_\gamma}\Big)
\\&+C_1\sum\limits_{\underset{\abs{j}\geq 2}{\abs{j}+\abs{l}=s'}}B_{j,l}\sum\limits_{k=1}^d\norm{\partial^{j-\delta_k}_{l+\delta_k}f}^2_{L^2_\gamma}
\\&-\Lambda\sum\limits_{\abs{j}+\abs{l}=s'}\norm{\partial_l^jf}^2_2
\\&+C\pa{\sum\limits_{\abs{l}+\abs{j}=s'}\sum\limits_{1\leq i,k \leq d} \norm{\partial^{j+\delta_k}_{l-\delta_i}f}^2_{L^2_2}+\norm{\partial^{j-\delta_k}_{l+\delta_j}f}^2_{L^2_2}}
\\&+C\cro{\norm{f}^2_{\mathcal{H}^{s}_{\eps}}\norm{f}^2_{H^{s}_\gamma} + \norm{f}^2_{H^{s'-1}_{x,v}}+ 1}.
\end{split}
\end{equation*}
We can then choose $B_{j,l}=B_{j,l}(\Lambda)$ sufficiently small hierarchically to absorb 
$$C_1\sum\limits_{\underset{\abs{j}\geq 2}{\abs{j}+\abs{l}=s'}}B_{j,l}\sum\limits_{k=1}^d\norm{\partial^{j-\delta_k}_{l+\delta_k}f}^2_{L^2_\gamma}$$
 inside the full negative terms on the first line of the estimate. As $C$ is independent of $\Lambda$ (due to the fact that we could choose the $B_{j,l}<1$) we can at last fix $\Lambda$ sufficiently large such that the $L^2_2$-norms give a negative contribution. This concludes the proof.
\end{proof}

We finally have all the tools to perform a full $H^s_{x,v}$ estimate. The following proposition indeed shows that our choice of perturbative regime compensate the modification of the characteristics due to the presence of the external force even on the fluid part of the solution $\pi_L(f)$.

\begin{prop}\label{prop:sobolev deterministic}
Let $s\geq s_0$ (where $s_0$ is given in Proposition~\ref{prop:estimates Gamma}). Then there exists $\Lambda>0$, $C>0$ and $0<\eps_s \leq 1$ such that, for $\eps =1$ or $0<\eps\leq \eps_s$, there exists a functional 
\begin{equation}\label{normweight}
\norm{\cdot}^2_{\mathcal{H}^s_{x,v}} \sim \sum\limits_{\abs{l}\leq s}\norm{\partial^0_l \cdot}^2_{L^2_{x,v}} + \eps^2 \sum\limits_{\underset{\abs{j}\geq 1}{\abs{l}+\abs{j}\leq s}}\norm{\partial_l^j \cdot}^2_{L^2_{x,v}}
\end{equation}
such that, if $f$ is a solution to the perturbative equation $\eqref{eq:perturbative BE apriori}$ with initial data $f_\mathrm{in}$ satisfying
$$\abs{\int_{\T^d\times\R^d}\pa{\begin{array}{c} 1 \\ v \\ \abs{v}^2\end{array}}f_\mathrm{in}(x,v)\sqrt{M}|_{t=0}\:dxdv} \leq C_\mathrm{in} \eps^{\eu},$$
then
\begin{equation}\label{eq:final estimate}
\begin{split}
\forall t\in [0,T_0),\quad \frac{d}{dt}\norm{f}^2_{\mathcal{H}^s_{x,v}} \leq &-(\lambda-C_s\norm{f}^2_{\mathcal{H}^{s}_{\eps}})\norm{f}^2_{H^s_\gamma} + C_s.
\end{split}
\end{equation}
All the constants depend on $s$, $E$ and $T_0$, but are independent of $\eps$.
\end{prop}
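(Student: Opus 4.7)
The plan is to glue together, layer by layer, the partial estimates obtained in Proposition~\ref{prop:sobolev dot deterministic}, adding one dedicated block for the pure $L^2_{x,v}$ level. Concretely, I define
$$
\|f\|_{\mathcal{H}^s_{x,v}}^2 \;=\; \alpha_0\|f\|_{L^2_{x,v}}^2 \;+\; \sum_{s'=1}^{s}\alpha_{s'}F_{s'}(f),
$$
choose the weights $\alpha_0\gg\alpha_1\gg\cdots\gg\alpha_s=1$ hierarchically so that, when summing the estimates from Proposition~\ref{prop:sobolev dot deterministic}, the residual terms $C_{s'}\|f\|_{H^{s'-1}_{x,v}}^2$ are absorbed by the negative contribution $-\alpha_{s'-1}\lambda_{s'-1}\sum_{|l|=s'-1}\|\partial^0_lf\|_{L^2_{x,v}}^2$ coming from the layer below (and similarly for the microscopic spatial part and the $v$-derivatives in $L^2_\gamma$). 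After summing for $s'=1,\dots,s$, one obtains $-\lambda$ times all $x$-derivatives of order $\geq 1$ in $L^2_{x,v}$ together with the dissipation $\tfrac{1}{\eps^2}\sum_{|l|\leq s}\|\pi_L^\bot\partial^0_lf\|^2_{L^2_\gamma}$ and the full $H^s_\gamma$ norm for derivatives containing at least one $v$-derivative; the bilinear term from $\Gamma$ collects into the factor $C_s\|f\|^2_{\mathcal{H}^s_\eps}\|f\|^2_{H^s_\gamma}$.

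It remains to add the $s'=0$ layer. I compute $\tfrac{d}{dt}\|f\|_{L^2_{x,v}}^2$ directly using Lemmas~\ref{lem:S1}-\ref{lem:S6} at $j=l=0$: the transport $S_1$ and the velocity-transport $S_2$ integrate to zero, $S_3$ produces a non-negative $\mathcal{E}$-contribution, $S_4$ yields the spectral gap $-\lambda_{0,\Lambda}\eps^{-2}\|\pi_L^\bot f\|^2_{L^2_\gamma}$, $S_5$ is bilinear (absorbed as above), and $S_6$ is a bounded source. What is missing at this stage is control of $\|\pi_L f\|_{L^2_{x,v}}^2$, which the spectral gap does not see. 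I recover it by noting that $\pi_L f$ is determined by the macroscopic densities $\rho_f(x),u_f(x),\theta_f(x)$, and applying Poincaré--Wirtinger on $\T^d$:
$$
\|\rho_f\|_{L^2_x}^2 \leq C\bigl(\|\nabla_x\rho_f\|_{L^2_x}^2 + |\langle\rho_f\rangle|^2\bigr),
$$
and analogously for $u_f,\theta_f$. Since $\pi_L$ is a contraction, $\|\nabla_x\pi_L f\|^2_{L^2_{x,v}}\leq\|\nabla_xf\|^2_{L^2_{x,v}}$, which is absorbed by the $s'=1$ dissipation layer already built above.

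The remaining pieces are the global means $\langle\rho_f\rangle,\langle u_f\rangle,\langle\theta_f\rangle$, which equal (up to constants) the three global moments $\int f\bar\phi_i\sqrt{M}\,dxdv$, $i=0,\dots,d+1$. The hypothesis ensures $|\text{moments}(f_{\mathrm{in}})|\leq C_{\mathrm{in}}\eps^{\eu}$, and testing equation \eqref{eq:perturbative BE apriori} against $\bar\phi_i\sqrt{M}$ yields an evolution equation whose right-hand side is of order $\eps$ times a bounded expression, thanks to the $\eps$-prefactor in front of $\vec E_t\cdot\nabla_v$, the cancellation $\int_{\T^d}\vec E_t\,dx=0$, and the vanishing integrals of $S_1,S_4$. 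A straightforward Grönwall estimate on $[0,T_0]$ then keeps the moments of order $\eps^{\eu}$, so that $|\text{moments}|^2$ is absorbed into the residual constant $C_s$. The main obstacle is precisely this $L^2$ layer: one must simultaneously propagate the Grönwall bound on the global moments, ensure that the Poincaré--Wirtinger reconstruction of $\pi_L f$ is closed by a dissipation at level $1$ produced by Proposition~\ref{prop:sobolev dot deterministic} (which requires the $\Lambda$ used there to be compatible with the one chosen for the zero-order $\mathcal{E}$-coercivity), and verify that $\alpha_0$ can be picked large enough that the zero-order $\pi_L^\bot$-gain dominates the leftover terms from $S_6$ and the Poincaré--Wirtinger constants, without breaking the hierarchy built in the first paragraph. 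Once this compatibility is fixed, summing all the layers delivers \eqref{eq:final estimate}.
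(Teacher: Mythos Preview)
Your overall architecture is the same as the paper's (hierarchical combination of the $F_{s'}$, Poincar\'e--Wirtinger to recover $\pi_L f$, control of the global moments via the evolution of the conserved quantities), but there is a genuine gap at the $s'=1$ level.

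If you use Proposition~\ref{prop:sobolev dot deterministic} as a black box at $s'=1$, the residual term is $\alpha_1 C_1\|f\|^2_{L^2_{x,v}}$ with $C_1$ a \emph{fixed} constant. Splitting $\|f\|^2_{L^2_{x,v}}=\|\pi_L^\bot f\|^2+\|\pi_L f\|^2$ and applying Poincar\'e--Wirtinger to the second piece produces $\alpha_1 C_1 C_\pi\|\nabla_x f\|^2_{L^2_{x,v}}$. This term has to be absorbed by the dissipation $-\alpha_1\lambda\|\nabla_x f\|^2_{L^2_{x,v}}$ coming from the \emph{same} $s'=1$ layer; the weight $\alpha_1$ cancels and you are left with the requirement $C_1 C_\pi<\lambda$, which you have no free parameter to enforce. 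Making $\alpha_0$ large does not touch this term, and the $\Lambda$ inside Proposition~\ref{prop:sobolev dot deterministic} has already been fixed (to $1$ for small $\eps$, and absorbed into $C_{0,\Lambda}$ for $\eps=1$), so your parenthetical remark about ``compatibility of $\Lambda$'' cannot be implemented from outside the proposition.

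The paper resolves this by \emph{re-opening} the proof at the $H^1$ level: it builds a dedicated functional $Q_1(f)=p\|\nabla_x f\|^2+q\eps^2\|\nabla_v f\|^2+\eps r\langle\nabla_x f,\nabla_v f\rangle$ and repeats the computation of Proposition~\ref{prop:estimate Qli}, but inserts the Poincar\'e bound $\|\pi_L f\|^2\leq C_\pi'(\|\nabla_x f\|^2+1)$ \emph{before} choosing $p,q,r$ and $\Lambda$. The point is that the $\|f\|^2_{L^2_{x,v}}$ terms feeding back into $\|\nabla_x f\|^2$ all carry an explicit factor $\eps^{1-\eu}$ or $1/\Lambda$ (from Lemmas~\ref{lem:S2} and~\ref{lem:S3}), so this feedback can be made small by taking $\eps$ small or $\Lambda$ large, and only then are the hypocoercivity constants fixed. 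The residual of $Q_1$ is then $C_s\|\pi_L^\bot f\|^2_{L^2_{x,v}}$ rather than $C_s\|f\|^2_{L^2_{x,v}}$, which is exactly what the $L^2$ spectral-gap layer can absorb.

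A secondary remark: for the global moments, the paper integrates the \emph{original} equation on $F$ against $1,v,|v|^2$ (so that $Q(F,F)$ drops out and the test functions are time-independent), rather than the perturbative equation against $\bar\phi_i\sqrt{M}$. Your route works too but is needlessly heavier, since $\sqrt{M}$ is time-dependent and you would have to handle $\partial_t(\bar\phi_i\sqrt{M})$ in the pairing.
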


\begin{proof}[Proof of Proposition \ref{prop:sobolev deterministic}]
As we shall see, the proof follows directly from Proposition \ref{prop:sobolev dot deterministic} apart from $s=1$ which has to be treated more carefully due to the lack of full negative return on the $L^2_{x,v}$-norm. Fixing $s$ in $\N^*$ we use Proposition \ref{prop:sobolev dot deterministic} to construct
\begin{multline*}
\frac{d}{dt}\sum\limits_{s'=2}^s \alpha_{s'}F_s(f) \\
\leq -\lambda \sum\limits_{s'=1}^s \pa{ \frac{1}{\eps^2}\sum\limits_{\abs{l}=s'}\norm{\pi_L^\bot\pa{\partial^0_l f}}^2_{L^2_\gamma} +  \sum\limits_{\underset{\abs{j}\geq 1}{\abs{l}+\abs{j}=s'}}\norm{\partial_l^j f}^2_{L^2_\gamma}+\sum\limits_{\abs{l}=s}\norm{\partial^0_l f}^2_{L^2_{x,v}}}
\\+ C \sum\limits_{1\leq s' \leq s}\alpha_{s'}\norm{f}^2_{H^{s'-1}_{x,v}}+ C_s\cro{\norm{f}^2_{\mathcal{H}^{s}_{\eps}}\norm{f}^2_{H^{s}_\gamma} +1},
\end{multline*}
where $\alpha_{s'}$ are constants that we choose sufficiently small hierarchically (from $s$ to $1$) in order to absorb the first positive term at rank $s'$ on the right-hand side by the negative feedback at rank $s'-1$. We infer for all $t$ in $[0,T_0)$,
\begin{multline}\label{eq:start deterministic}
\frac{d}{dt}\sum\limits_{s'=2}^s \alpha_{s'}F_s(f)\\
 \leq -\lambda \sum\limits_{s'=1}^s \pa{ \frac{1}{\eps^2}\sum\limits_{\abs{l}=s'}\norm{\pi_L^\bot\pa{\partial^0_l f}}^2_{L^2_\gamma} +  \sum\limits_{\underset{\abs{j}\geq 1}{\abs{l}+\abs{j}=s'}}\norm{\partial_l^j f}^2_{L^2_\gamma}+\sum\limits_{\abs{l}=s}\norm{\partial^0_l f}^2_{L^2_{x,v}}}
\\+  \norm{f}^2_{H^1_{x,v}} + C_s\cro{\norm{f}^2_{\mathcal{H}^{s}_{\eps}}\norm{f}^2_{H^{s}_\gamma} +1}
\end{multline}
and we have from Proposition \ref{prop:sobolev dot deterministic}:
$$\sum\limits_{1\leq s'\leq s} \alpha_{s'}F_s(f)\sim \sum\limits_{1\leq s' \leq s}\pa{\sum\limits_{\abs{l}\leq s'}\norm{\partial^0_l \cdot}^2_{L^2_{x,v}} + \eps^2 \sum\limits_{\underset{\abs{j}\geq 1}{\abs{l}+\abs{j}\leq s'}}\norm{\partial_l^j \cdot}^2_{L^2_{x,v}}}$$
so that
\begin{multline}\label{eq:start deterministic2}
\frac{d}{dt}\sum\limits_{s'=2}^s \alpha_{s'}F_s(f) \leq -\left(\frac{\lambda}{2}-C_{s}\norm{f}^2_{\mathcal{H}^{s}_{\eps}}\right)\norm{f}^2_{H^s_\gamma} 
+ C_s \norm{f}^2_{H^1_{x,v}} + C_s.
\end{multline}

\bigskip
\textbf{Control of fluid part by spatial derivatives.} A key property of our proof will be to recover the full coercivity on the $L^2_{x,v}$-norm by controlling $\norm{\pi_L(f)}^2_{L^2_{x,v}}$ by $\norm{\nabla_x f}^2_{L^2_{x,v}}$ which is fully coercive. First, as the eigenfunction of $L$ are polynomials times Maxwellian, see $\eqref{eq:ker L}$ and $\eqref{piL}$, we easily obtain (we refer to \cite[equation (3.3)]{Bri3} for a direct proof) that there exists $c_\pi,\:C_\pi >0$ such that
\begin{equation}\label{eq:Cpi}
\forall 0\leq \abs{j}+\abs{l} \leq s, \quad c_\pi \norm{\partial_l^j\pi_L^\bot(f)}^2_{L^2_{x,v}}\leq  \norm{\pi_L\pa{\partial_l^0f}}^2_{L^2_\gamma} \leq C_\pi \norm{\pi_L\pa{\partial_l^0f}}^2_{L^2_{x,v}}.
\end{equation}
So the only part left to estimate is $\norm{\pi_L(f)}^2_{L^2_{x,v}}$. Coming back to \eqref{piL}, we have, for a constant $C$ depending on $a,A,\eu$ and $T_0$,
\[
\norm{\pi_L(f)}^2_{L^2_{x,v}}\leq C\sum\limits_{i=0}^{d+1} \int_{\T^d} |\dual{f}{\bar{\phi}_i\sqrt{M}}|^2 dx.
\]
We use the Poincar\'e-Wirtinger inequality to obtain, for possibly a different constant $C$,
\[
\norm{\pi_L(f)}^2_{L^2_{x,v}}
\leq 
C\sum\limits_{i=0}^{d+1} \left[\int_{\T^d} |\nabla_x\dual{f}{\bar{\phi}_i\sqrt{M}}|^2 dx
+\left|\int_{\T^d}\dual{f}{\bar{\phi}_i\sqrt{M}} dx\right|^2\right],
\]
which gives in turn
\begin{equation}\label{eq:poincare0}
\norm{\pi_L(f)}^2_{L^2_{x,v}}\leq C\norm{\nabla_x(\pi_L(f))}^2_{L^2_{x,v}}+C\int_{\R^d}\left|\int_{\T^d}\pi_L(f) dx\right|^2 dv.
\end{equation}
For the classical Boltzmann equation the preservation of mass, momentum and e\-ner\-gy gives the cancellation $ \int_{\T^d}\pi_L(f) dx=0$, which is no longer satisfied here however.

\par We come back to the original equation $\eqref{eq:BE force}$ on $F = M + \eps \sqrt{M}\:f$:
$$\partial_t F + \frac{1}{\eps}v\cdot\nabla_x F  + \eps \vec{E_t}(x)\cdot \nabla_v F = \frac{1}{\eps}Q(F,F).$$
We multiply this equation by $1$, $v$ and $\abs{v}^2$ and we integrate over $\T^d\times\R^d$. This yields
\begin{align}
\frac{d}{dt}\int_{\T^d\times\R^d} F(t,x_*,v_*)dx_*dv_* &= 0,\label{Moment0}
\\ \frac{d}{dt}\int_{\T^d\times\R^d} v_*F(t,x_*,v_*)dx_*dv_* &= \eps \int_{\T^d\times\R^d}\vec{E_t}(x_*)F(t,x_*,v_*)dx_*dv_*,\label{Moment1}
\\\frac{d}{dt}\int_{\T^d\times\R^d} \abs{v_*}^2F(t,x_*,v_*)dx_*dv_* &= 2\eps \int_{\T^d\times\R^d}\vec{E_t}(x_*)\cdot v_* F(t,x_*,v_*)dx_*dv_*.\label{Moment2}
\end{align}
Note that we used that $Q(f,g)$ is orthogonal to $\mbox{Ker}(L)$ in $L^2_{x,v}$. 
We insert the relation $F = M + \eps \sqrt{M}\:f$ in the previous estimates. We compute and bound for $\delta=0$ or $\delta=2$,
\begin{align}
\abs{\int_{\T^d\times\R^d}\abs{v_*}^\delta\pa{M(0,x_*,v_*) - M(t,x_*,v_*)}} &= \frac{e^{-\eps^{1+\eu}\frac{A}{1+t}}}{(1+\eps^{1+\eu}\frac{a}{1+t})^{\frac{2d+\delta}{4}}} - \frac{e^{-\eps^{1+\eu} A}}{(1+\eps^{1+\eu} a)^{\frac{2d+\delta}{4}}}\nonumber
\\ &\leq \eps^{1+\eu}C_{a,A}.\label{GlobaldtMOK}
\end{align}
From \eqref{Moment0}, \eqref{GlobaldtMOK} and the smallness hypothesis \eqref{HypSmallGlobalMoments}, we deduce that
\begin{equation}\label{Moment0OK}
\sup_{t\in[0,T_0]}\left|\int_{\T^d\times\R^d} f_*(t) \sqrt{M}_*(t)\:dx_*dv_*\right|\leq C\eps^\eu.
\end{equation}
We use the cancellation condition in \eqref{HypE} and \eqref{HypSmallGlobalMoments} to get 
\begin{multline}\label{Moment1OK}
\sup_{t\in[0,T_0]}\left|\int_{\T^d\times\R^d} v_* f_*(t) \sqrt{M}_*(t)\:dx_*dv_*\right|\\
\leq C\eps^\eu
+\eps T_0 C_E \sup_{t\in[0,T_0]}\int_{\T^d}\left|\int_{\R^d} v_* f_*(t) \sqrt{M}_*(t) dv_*\right| dx.
\end{multline}
We use the bounds
\begin{multline*}
\eps T_0 C_E \sup_{t\in[0,T_0]}\int_{\T^d}\left|\int_{\R^d} v_* f_*(t) \sqrt{M}_*(t) dv_*\right| dx\\
\leq \eps T_0 C_E\sup_{t\in[0,T_0]}\norm{\pi_L(f(t))}_{L^2_{x,v}}
\leq C\eps^\eu+\eps^{2(1-\eu)}\sup_{t\in[0,T_0]}\norm{\pi_L(f(t))}_{L^2_{x,v}}^2
\end{multline*}
to obtain
\begin{equation}\label{Moment1OKOK}
\sup_{t\in[0,T_0]}\left|\int_{\T^d\times\R^d} v_* f_*(t) \sqrt{M}_*(t)\:dx_*dv_*\right|\leq C\eps^\eu
+\eps^{2(1-\eu)}\sup_{t\in[0,T_0]}\norm{\pi_L(f(t))}_{L^2_{x,v}}^2.
\end{equation}
A similar procedure gives
\begin{equation}\label{Moment2OK}
\sup_{t\in[0,T_0]}\left|\int_{\T^d\times\R^d} |v_*|^2 f_*(t) \sqrt{M}_*(t)\:dx_*dv_*\right|\leq C\eps^\eu
+\eps^{2(1-\eu)}\sup_{t\in[0,T_0]}\norm{\pi_L(f(t))}_{L^2_{x,v}}^2.
\end{equation}
The identities \eqref{Moment0OK}-\eqref{Moment1OKOK}-\eqref{Moment2OK} above imply
\begin{equation}\label{piGlobalt}
\sup_{t\in[0,T_0]}\int_{\R^d}\left|\int_{\T^d}\pi_L(f(t)) dx\right|^2 dv\leq C \eps^{2\eu}+2\eps^{2(1-\eu)}\sup_{t\in[0,T_0]}\norm{\pi_L(f(t))}_{L^2_{x,v}}^2.
\end{equation}
We combine \eqref{piGlobalt} with \eqref{eq:poincare0} to obtain
\[
\sup_{t\in[0,T_0]}\norm{\pi_L(f(t))}^2_{L^2_{x,v}} \leq C\norm{\nabla_x(\pi_L(f(t)))}^2_{L^2_{x,v}}+C\eps^{2\eu}+2\eps^{2(1-\eu)}\sup_{t\in[0,T_0]}\norm{\pi_L(f(t))}_{L^2_{x,v}}^2.
\]
For $\eps$ small enough, this gives
\begin{equation}\label{eq:control piL poincare}
\norm{\pi_L(f)}^2_{L^2_\gamma} \leq C_\pi C\norm{\nabla_x(\pi_L(f(t)))}^2_{L^2_{x,v}}+C_\pi C\eps^{2\eu}\leq C_\pi'\pa{\norm{\nabla_x(f(t))}^2_{L^2_{x,v}}+1},
\end{equation}
where $C_\pi$ was defined in $\eqref{eq:Cpi}$.

\bigskip
\textbf{Evolution of the $H^1$-norm.} Let us now look at the evolution of the full $H^1_{x,v}$-norm. We take $p$, $q$ and $r$ and define
$$Q_1(f) = p\norm{\nabla_x f}^2_{L^2_{x,v}} + q\eps^2\norm{\nabla_v f}^2_{L^2_{x,v}} +\eps r \langle \nabla_x f ,\nabla_v f \rangle_{L^2_{x,v}}.$$
Using the estimates given in Section \ref{sec:estimates operators} exactly as for \eqref{eq:Qli general} but keeping explicit the dependencies $C_\Lambda=\frac{C}{\Lambda}$ we find 
\begin{equation*}
\begin{split}
\frac{1}{2}\frac{d}{dt}Q_{1}(f) \leq & \frac{C_{\eta_0}r +  p\eta_0 - p}{\eps^2}\lambda_{0,\Lambda}\norm{\pi_L^\bot\pa{\nabla_x f}}^2_{L^2_\gamma} 
\\&+\cro{\frac{q\eps^2}{\eta_1}+\eta_4p+ \eps^{1-\eu}\frac{1_{\eps <1}}{4}p+\eps\frac{1_{\eps <1}}{4}r - r}\norm{\nabla_x f}^2_{L^2_{x,v}}
\\&+ \cro{q(\eta_1+\eta'_0+\eta_4+\eps^{3-e}\frac{1_{\eps <1}}{4})+r(\eta_0+\eta_4)\lambda_{0,\Lambda}+\eps^{2-e}\frac{1_{\eps <1}}{4}r - q\lambda_{s,\Lambda}} \norm{\nabla_v f}^2_{L^2_\gamma}
\\&+ \eps^{1+\eu}\Lambda\cro{C_{\eta_3}r + \eta_2 p-\frac{p}{2}}\norm{\nabla_x f}^2_{L^2_2}
\\&+ \eps^{3+e}\Lambda\cro{\eta_2 q + (\eta_2 + \eta_3) r  - \frac{q}{2}}\norm{\nabla_v f}^2_{L^2_2} 
\\&+ \eps^{1-\eu}\cro{pC_{\eta_2}+ d \frac{C_{\eta_2}}{\Lambda}r}\norm{\nabla_v f}^2_{L^2_{x,v}} 
\\&+\mathbf{\cro{q\frac{C_1C_\pi'}{\Lambda} + 3p\eps^{1-\eu}\frac{C_2C_\pi'}{\Lambda} + 2r\eps^{1-\eu}\frac{C_3C_\pi'}{\Lambda}}\norm{\nabla_x f}^2_{L^2_{x,v}}}
\\&+ C_{p,q,r,\Lambda,\eta}\cro{\norm{f}^2_{\mathcal{H}^{s}_{\eps}}\norm{f}^2_{H^{s}_\gamma} + \norm{\pi_L^\bot f}^2_{L^2_{x,v}}+ 1}.
\end{split}
\end{equation*}
We solely decomposed the $\norm{f}^2_{L^2_{x,v}}$ appearing in \eqref{eq:Qli general} into $\pi_L(f)$ - which we controlled thanks to \eqref{eq:control piL poincare} - and $\pi_L^\bot(f)$. To clarify we emphasized in bold the newly added terms.
For $\eps < 1$ we see that we can make exactly the same choices for the constants as in Proposition \ref{prop:estimate Qli} with the following two modifications which have no impact on the proof:
\begin{itemize}
\item{(8)} $p$ sufficiently large such that $C_{1,\eta_0}r + qC_{1} +p \eta_0 - p\lambda_{0,1} \leq -2$ - instead of $C_{1,\eta_0}r +p \eta_0 - p \leq -1$
\item{(10)} $\eps$ small enough as before plus $3p\eps^{1-\eu}\frac{C_2C_\pi'}{\Lambda} + 2r\eps^{1-\eu}\frac{C_3C_\pi'}{\lambda} \leq \lambda_{0,1}$.
\end{itemize}
In the case $\eps=1$ it is easier since the choices made in Proposition \ref{prop:estimate Qli} leave $\Lambda\geq 1$ free so we can fix all the constant in the same way - recall that $p$, $q$ and $r$ are independant of $\Lambda$ - and then choose $\Lambda$ sufficiently large such that
$$q\frac{C_1C_\pi'}{\Lambda} + 3p\eps^{1-\eu}\frac{C_2C_\pi'}{\Lambda} + 2r\eps^{1-\eu}\frac{C_3C_\pi'}{\Lambda} \leq -\frac{1}{2}\cro{ \frac{q\eps^2}{\eta_1}+\eta_4p+ \eps^{1-\eu}\frac{1_{\eps <1}}{4}p+\eps\frac{1_{\eps <1}}{4}r - r}.$$
In all cases we can find $p$, $q$ and $r$ such that $Q_1 \sim \norm{\nabla_x \cdot}^2_{L^2_{x,v}}+\eps^2\norm{\nabla_v \cdot}^2_{L^2_{x,v}}$ and
\begin{equation}\label{eq:control Q1}
\begin{split}
\frac{1}{2}\frac{d}{dt}Q_1(f) \leq& -\lambda\cro{\frac{1}{\eps^2}\norm{\pi_L^\bot(\nabla_xf)}^2_{L^2_\gamma} + \norm{\nabla_v f}^2_{L^2_\gamma}+\norm{\nabla_x f}^2_{L^2_{x,v}}} 
\\&+ C_s\norm{f}^2_{\mathcal{H}^s_\eps}\norm{f}^2_{\mathcal{H}^s_\gamma} + C_s\norm{\pi_L^\bot(f)}^2_{L^2_{x,v}} + C_s.
\end{split}
\end{equation}

\par At last, the evolution of the full $L^2_{x,v}$-norm is derived as before using the estimates given in Section \ref{sec:estimates operators}. One bounds
\begin{equation*}
\begin{split}
\frac{d}{dt}\norm{f}^2_{L^2_{x,v}} \leq& -\frac{\lambda_{0,\Lambda}}{\eps^2}\pa{1-2\eta_0}\norm{\pi_L^\bot(f)}^2_{L^2_\gamma} - \eps^{1+\eu}\Lambda\norm{f}^2_{L^2_2} + \eps^{1-\eu}\frac{\mathbf{1}_{\eps<1}}{4}\norm{f}^2_{L^2_{x,v}} 
\\&+ \lambda_{0,\Lambda} \norm{f}^2_{\mathcal{H}^{s}_{\eps}}\norm{f}^2_{H^s_\gamma} + 2C_{\Lambda,\eta_0}.
\end{split}
\end{equation*}
Once we fix $\eta_0$ sufficiently small and use the orthogonal projection together with the control of $\pi_L(f)$ by the spatial derivatives \eqref{eq:control piL poincare} the above implies
\begin{equation}\label{eq:control L2}
\begin{split}
\frac{d}{dt}\norm{f}^2_{L^2_{x,v}} \leq& -\cro{\frac{\lambda}{\eps^2}-\eps^{1-\eu}\frac{\mathbf{1}_{\eps<1}}{4}}\norm{\pi_L^\bot(f)}^2_{L^2_\gamma}- \eps^{1+\eu}\Lambda\norm{f}^2_{L^2_2} + \eps^{1-\eu}\frac{\mathbf{1}_{\eps<1}}{4}\norm{\nabla_x f}_{L^2_{x,v}}^2
\\&+ C_s \norm{f}^2_{\mathcal{H}^s_{\eps}}\norm{f}^2_{H^s_\gamma} + C_s.
\end{split}
\end{equation}
We add $\alpha\eqref{eq:control Q1}+\eqref{eq:control L2}$ with $\alpha$ sufficently small so that $\alpha C_s\norm{\pi_L^\bot(f)}^2_{L^2_{x,v}}$ from \eqref{eq:control Q1}is absorbed by $\frac{\lambda}{2\eps^2}\norm{\pi_L^\bot(f)}^2_{L^2_{x,v}}$ from \eqref{eq:control L2}. Then we can make $\eps=1$ or $\eps$ sufficiently small and obtain
\begin{equation}\label{eq:control H1}
\frac{1}{2}\frac{d}{dt}\pa{\norm{f}^2_{L^2_{x,v}}+\alpha Q_1(f)} \leq -\lambda'\norm{f}^2_{H^1_\gamma} + C_s\norm{f}^2_{\mathcal{H}^s_\eps}\norm{f}^2_{H^s_\gamma} + C_s.
\end{equation}

\bigskip
\textbf{Conclusion of the proof.} Taking a small parameter $\eta$, the linear combination $\eta\eqref{eq:start deterministic2}+\eqref{eq:control H1}$ shows the existence of $\lambda >0$ and $C_s>0$ such that
$$\frac{1}{2}\frac{d}{dt}\pa{\norm{f}^2_{L^2_{x,v}}+Q_1(f) + \eta\sum\limits_{s'=2}^sF_{s'}(f)}\leq -(\lambda-C_s\norm{f}^2_{\mathcal{H}^{s}_{\eps}})\norm{f}^2_{H^s_\gamma} + C_s$$
which concludes the proof by defining
\begin{equation}\label{eq:Hsxv}
\norm{f}^2_{\mathcal{H}^s_{x,v}} = \norm{f}^2_{L^2_{x,v}}+Q_1(f) + \eta\sum\limits_{s'=2}^sF_{s'}(f).
\end{equation}

\end{proof}

\begin{remark}\label{rem:e>0}
We see in the derivation of the estimate \eqref{GlobaldtMOK}, that we need $\eu>0$, to ensure that, although not conserved, the global quantities associated to the mass, momentum and energy of the perturbation $f$ are controlled in a suitable way.
\end{remark}


\section{Proofs of the main results}\label{sec:proof deterministic}

At last, we have all the tools to give the proof of the main results.


\subsection{Results on the Cauchy theory for the Boltzmann equation}

\begin{proof}[Proof of Theorem \ref{theo:Cauchy deterministic}]
The proof of existence follows from a standard iterative sche\-me:
\begin{multline*}
\partial_t f_{n+1} + \frac{1}{\eps}v\cdot\nabla_x f_{n+1} + \eps \vec{E_t}(x)\cdot\nabla_vf_{n+1} + \eps \mathcal{E}(t,x,v) f_{n+1} \\
= \frac{1}{\eps^2}L[f_{n+1}] + \frac{1}{\eps}\Gamma[f_n,f_{n+1}] 
-2 \mathcal{E}(t,x,v)M^{1/2}.
\end{multline*}
A detailed procedure is given in \cite[Section 6.1]{Bri3} for $\vec{E_t}=0$. This proof is directly applicable here, in combination with our estimates of $S_2$ and $S_3$ (terms involving $\vec{E_t}$). We obtain a uniform bound on a sequence of approximations $\pa{f_n}_{n\in\N}$ in $L^\infty_tH^s_{x,v} \cap L^1_t H^s_\gamma$, and therefore the strong convergence towards $f$ in less regular Sobolev spaces by Rellich's theorem. The uniqueness of the solution is also standard when $\vec{E_t}=0$. When $\vec{E_t}$ is non-trivial, uniqueness directly follows from our \textit{a priori} estimate method applied to the difference $f-g$ of two solutions: the linear parts are estimated in exactly the same way and the bilinear term is controledl when $\norm{f}^2_{H^s_{x,v}} + \norm{g}^2_{H^s_{x,v}}$ is small enough, which is why we obtain the uniqueness only in a perturbative regime.
\par We infer from our \textit{a priori} estimates described in Proposition \ref{prop:sobolev deterministic} that 
$$\forall t\in [0,T_0),\quad \frac{d}{dt}\norm{f}^2_{\mathcal{H}^s_{x,v}} \leq -(\lambda-C_s\norm{f}^2_{\mathcal{H}^s_{\eps}})\norm{f}^2_{H^s_\gamma} + C_s.$$
Coming back to the definition of the $\mathcal{H}^s_{\eps}$-norm given by $\eqref{eq:Sobolev eps}$ we see that it is uniformly equivalent to the $\mathcal{H}^s_{x,v}$-norm. Therefore we have
$$\forall t\in [0,T_0),\quad \frac{d}{dt}\norm{f}^2_{\mathcal{H}^s_{x,v}} \leq -(\lambda-C_s\norm{f}^2_{\mathcal{H}^s_{x,v}})\norm{f}^2_{H^s_\gamma} + C_s.$$
As $\norm{\cdot}_{\mathcal{H}^s_{x,v}} \leq \norm{\cdot}_{H^s_\gamma}$ we directly apply Gr\"onwall lemma which implies that 
$$\forall t\in [0,T_0),\quad\norm{f(t)}_{\mathcal{H}^s_{x,v}} \leq \max\br{\norm{f_\mathrm{in}}_{\mathcal{H}^s_{x,v}}, C_{T_0,E,s}}$$
as long as $\norm{f}^2_{\mathcal{H}^s_{x,v}}(t=0)$ is sufficiently small.

\begin{remark}\label{rem:time decay}
In the specific case where $\norm{\vec{E_t}} \leq \frac{C_E}{(1+t)^\alpha}$ with $\alpha >1$ the second constant $C_s$ behaves like $C_s/(1+t)^\alpha$ ($C_E$ is merely replaced by $C_E/(1+t)^\alpha$ in the computations). Gr\"onwall lemma then gives a polynomial time decay for $f$ for sufficiently small $\eps$.
\end{remark}
\end{proof}

The corollary follows at once.

\begin{proof}[Proof of Corollary \ref{cor:Cauchy deterministic}]
The corollary is a direct consequence of Theorem \ref{theo:Cauchy deterministic} and our definition of fluctuation $M$. Indeed, direct computations show that
$$\exists C_s >0,\:\forall t \geq 0,\quad \norm{\mu - M}_{H^s_{x,v}} \leq \eps^{1+\eu}C_s.$$
Therefore, using the notations of Theorem \ref{theo:Cauchy deterministic}
$$\pa{\norm{\frac{F_\mathrm{in}-\mu}{\eps\sqrt{\mu}}}_{H^s_{x,v}}\leq \frac{\delta_{T_0,E,s}}{2}} \Rightarrow \pa{\norm{\frac{F_\mathrm{in}-M}{\eps\sqrt{M}}}_{H^s_{x,v}} \leq \frac{\delta_{T_0,E,s}}{2} + \eps^\eu C_s}$$
which raises the expected corollary for $\eps$ sufficiently small because $\eu>0$: we construct a solution provided by Theorem \ref{theo:Cauchy deterministic} and this solution remains close to $\mu$ by an additive constant $\eps^\eu C_s$.
\end{proof}


\subsection{Results on the Hydrodynamical limit}

We are left with the computation of the limit equations and convergence issues.

\begin{proof}[Proof of Theorem \ref{theo:hydro lim}]
Let $T_0 >0$ and $F_\eps$ be the solution built in Corollary \ref{cor:Cauchy deterministic} on $[0,T_0]$ and define $f_\eps = \eps^{-1}\frac{F-\mu}{\sqrt{\mu}}$. The sequence $\pa{f_\eps}_{\eps>0}$ is uniformly bounded in $L^\infty_{[0,T_0]}H^s_{x,v}$ and solves
\begin{equation}\label{eq:BEeps}
BE_\eps(f_\eps) = \eps^2\mathcal{E}(t,x,v)\mu^{1/2}+\eps^3\cro{\vec{E_t}(x)\cdot\nabla_vf +\mathcal{E}(t,x,v) f}
\end{equation}
where $BE_\eps(f)$ is the standard Boltzmann equation operator given by
$$BE_\eps(f) = \eps^2\partial_t f + \eps v\cdot\nabla_x f -L[f] - \eps\Gamma[f,f]$$
and we recall that
$$\mathcal{E}(t,x,v) = \eps^\eu\frac{2A+a\abs{v}^2}{4(1+t)^{2}} - \frac{1}{2}\pa{1+\eps^{1+\eu}\frac{a}{1+t}} \vec{E_t}(x)\cdot v.$$
Having $\pa{f_\eps}_{\eps>0}$ uniformly bounded in $L^\infty_{[0,T_0]}H^s_{x,v}$ means that up to a subsequence $f_\eps$ converges weakly-* in this space towards $\bar{f}$. The choice of $s\geq s_0$ allows us to take the weak-* limit in $BE_\eps(f_\eps)$ as it is now standard in the field \cite{BarUka,StRay}, \cite[Section 8]{Bri3} and obtain
$$\lim\limits_{\eps\to 0} BE_\eps(\bar{f}) = L[\bar{f}].$$
The right-hand side of $\eqref{eq:BEeps}$ is linear in $f_\eps$ and it therefore converges weakly-* towards $0$. In the limit one must have
$$L[\bar{f}] = 0 \quad\mbox{so}\quad \bar{f}(t,x,v)= \cro{\rho(t,x) + v\cdot u(t,x) + \frac{\abs{v}^2-d}{2}\theta(t,x)}\sqrt{\mu(v)}.$$

The fluid equations are obtained by integrating in velocity $\eqref{eq:BEeps}$ against $\sqrt{\mu}$, $v\sqrt{\mu}$ and $\frac{\abs{v}^2-(d+2)}{2}\sqrt{\mu}$. The computations on the Boltzmann equation part $BE_\eps(f_\eps)$ have been done and proven rigorously for weaker convergences \cite{BGL,Gol14} and one obtain that
$$\lim\limits_{\eps\to 0 }\frac{1}{\eps}\int_{\R^d} \left(\begin{array}{c}1 \\ v \\ \frac{\abs{v}^2-(d+2)}{2} \end{array} \right)\sqrt{\mu}BE_\eps(f_\eps)dv = \pa{\begin{array}{c} \nabla_x\cdot u(t,x) \\ \nabla_x(\rho+\theta) \\ * \end{array}}$$
and so looking at the right-hand side of $\eqref{eq:BEeps}$ we see that in the limit
$$\nabla_x\cdot u(t,x)=0 \quad\mbox{and}\quad\nabla_x(\rho+\theta)=0$$
which are the incompressibility and Boussinesq relation.
\par Looking at the order $\eps^2$ in $BE_\eps(f_\eps)$ yields the Navier-Stokes-Fourier system in the Leray sense \cite{BGL,Gol14} - that is integrated against test functions with null divergence. It only remains to see what the right-hand side of $\eqref{eq:BEeps}$ becomes in the limit at order $\eps^2$:
\begin{equation*}
\begin{split}
\lim\limits_{\eps\to 0 }\frac{1}{\eps^2}&\int_{\R^d} \left(\begin{array}{c}1 \\ v \\ \frac{\abs{v}^2-(d+2)}{2} \end{array} \right)\cro{\eps^2\mathcal{E}(t,x,v)\mu+\eps^3\cro{\vec{E_t}(x)\cdot\nabla_vf +\mathcal{E}(t,x,v) f}\sqrt{\mu}}dv 
\\&= \lim\limits_{\eps\to 0 }\int_{\R^d} \left(\begin{array}{c}1 \\ v \\ \frac{\abs{v}^2-(d+2)}{2} \end{array} \right)\frac{\vec{E_t}(x)}{2}\cdot v\mu dv = \pa{\begin{array}{c} 0 \\ \frac{1}{2}\vec{E_t}(x) \\ 0 \end{array}}.
\end{split}
\end{equation*}
Therefore, taking the limit of the hydrodynamic quantities when $\eps$ goes to $0$ of $\eps^{-2}\eqref{eq:BEeps}$ yields that $(\rho,u,\theta)$ is a Leray solution to the incompressible Navier-Stokes equation with a force $\eqref{eq:NS force}$ together with the Boussinesq equation $\eqref{eq:Boussinesq}$.

\end{proof}

%
%

%
\bibliographystyle{acm}
\bibliography{bibliography_BE_INS_deterministic_force}

\begin{thebibliography}{10}

\bibitem{AcevesSanchezCesbron2019}
{\sc Aceves-Sanchez, P., and Cesbron, L.}
\newblock Fractional diffusion limit for a fractional
  {V}lasov-{F}okker-{P}lanck equation.
\newblock {\em SIAM J. Math. Anal. 51}, 1 (2019), 469--488.

\bibitem{Arsenio12}
{\sc Ars{\'e}nio, D.}
\newblock From {B}oltzmann's {E}quation to the {I}ncompressible
  {N}avier--{S}tokes--{F}ourier {S}ystem with {L}ong-{R}ange {I}nteractions.
\newblock {\em Arch. Ration. Mech. Anal. 206}, 2 (2012), 367--488.

\bibitem{ArsenioSaintRaymond2019}
{\sc Ars{\'e}nio, D., and Saint-Raymond, L.}
\newblock {\em {F}rom the {V}lasov--{M}axwell--{B}oltzmann {S}ystem to
  {I}ncompressible {V}iscous {E}lectro-magneto-hydrodynamics}, vol.~1 of {\em
  EMS Monographs in Mathematics}.
\newblock European Mathematical Society (EMS), Z\"{u}rich, 2019.

\bibitem{BarMou}
{\sc Baranger, C., and Mouhot, C.}
\newblock Explicit spectral gap estimates for the linearized {B}oltzmann and
  {L}andau operators with hard potentials.
\newblock {\em Rev. Mat. Iberoamericana 21}, 3 (2005), 819--841.

\bibitem{BardosGolseLevermore89}
{\sc Bardos, C., Golse, F., and Levermore, D.}
\newblock Sur les limites asymptotiques de la th\'{e}orie cin\'{e}tique
  conduisant \`a la dynamique des fluides incompressibles.
\newblock {\em C. R. Acad. Sci. Paris S\'{e}r. I Math. 309}, 11 (1989),
  727--732.

\bibitem{BardosGolseLevermore91}
{\sc Bardos, C., Golse, F., and Levermore, D.}
\newblock Fluid dynamic limits of kinetic equations. {I}. {F}ormal derivations.
\newblock {\em J. Statist. Phys. 63}, 1-2 (1991), 323--344.

\bibitem{BGL}
{\sc Bardos, C., Golse, F., and Levermore, D.}
\newblock Fluid dynamic limits of kinetic equations. {I}. {F}ormal derivations.
\newblock {\em J. Statist. Phys. 63}, 1-2 (1991), 323--344.

\bibitem{BardosGolseLevermore93}
{\sc Bardos, C., Golse, F., and Levermore, D.}
\newblock Fluid dynamic limits of kinetic equations. {II}. {C}onvergence proofs
  for the {B}oltzmann equation.
\newblock {\em Comm. Pure Appl. Math. 46}, 5 (1993), 667--753.

\bibitem{BardosGolseLevermore98}
{\sc Bardos, C., Golse, F., and Levermore, D.}
\newblock Acoustic and {S}tokes limits for the {B}oltzmann equation.
\newblock {\em C. R. Acad. Sci. Paris S\'{e}r. I Math. 327}, 3 (1998),
  323--328.

\bibitem{BardosGolseLevermore00}
{\sc Bardos, C., Golse, F., and Levermore, D.}
\newblock The acoustic limit for the {B}oltzmann equation.
\newblock {\em Arch. Ration. Mech. Anal. 153}, 3 (2000), 177--204.

\bibitem{BarUka}
{\sc Bardos, C., and Ukai, S.}
\newblock The classical incompressible {N}avier-{S}tokes limit of the
  {B}oltzmann equation.
\newblock {\em Math. Models Methods Appl. Sci. 1}, 2 (1991), 235--257.

\bibitem{Bri3}
{\sc Briant, M.}
\newblock From the {B}oltzmann equation to the incompressible {N}avier-{S}tokes
  equations on the torus: a quantitative error estimate.
\newblock {\em J. Differential Equations 259}, 11 (2015), 6072--6141.

\bibitem{Cer}
{\sc Cercignani, C.}
\newblock {\em The {B}oltzmann equation and its applications}, vol.~67 of {\em
  Applied Mathematical Sciences}.
\newblock Springer-Verlag, New York, 1988.

\bibitem{CIP}
{\sc Cercignani, C., Illner, R., and Pulvirenti, M.}
\newblock {\em The mathematical theory of dilute gases}, vol.~106 of {\em
  Applied Mathematical Sciences}.
\newblock Springer-Verlag, New York, 1994.

\bibitem{DLY2013}
{\sc Duan, R., Liu, S., Yang, T., and Zhao, H.}
\newblock Stability of the nonrelativistic {V}lasov-{M}axwell-{B}oltzmann
  system for angular non-cutoff potentials.
\newblock {\em Kinet. Relat. Models 6}, 1 (2013), 159--204.

\bibitem{DUYZ2008}
{\sc Duan, R., Ukai, S., Yang, T., and Zhao, H.}
\newblock Optimal decay estimates on the linearized {B}oltzmann equation with
  time dependent force and their applications.
\newblock {\em Comm. Math. Phys. 277}, 1 (2008), 189--236.

\bibitem{DYZ2012}
{\sc Duan, R., Yang, T., and Zhao, H.}
\newblock The {V}lasov-{P}oisson-{B}oltzmann system in the whole space: the
  hard potential case.
\newblock {\em J. Differential Equations 252}, 12 (2012), 6356--6386.

\bibitem{DYZ2013}
{\sc Duan, R., Yang, T., and Zhao, H.}
\newblock The {V}lasov-{P}oisson-{B}oltzmann system for soft potentials.
\newblock {\em Math. Models Methods Appl. Sci. 23}, 6 (2013), 979--1028.

\bibitem{EllPin}
{\sc Ellis, R.~S., and Pinsky, M.~A.}
\newblock The first and second fluid approximations to the linearized
  {B}oltzmann equation.
\newblock {\em J. Math. Pures Appl. (9) 54\/} (1975), 125--156.

\bibitem{Gol14}
{\sc Golse, F.}
\newblock Fluid dynamic limits of the kinetic theory of gases.
\newblock In {\em From particle systems to partial differential equations},
  vol.~75 of {\em Springer Proc. Math. Stat.} Springer, Heidelberg, 2014,
  pp.~3--91.

\bibitem{GolseLevermore02}
{\sc Golse, F., and Levermore, C.~D.}
\newblock Stokes-{F}ourier and acoustic limits for the {B}oltzmann equation:
  convergence proofs.
\newblock {\em Comm. Pure Appl. Math. 55}, 3 (2002), 336--393.

\bibitem{GolSt1}
{\sc Golse, F., and Saint-Raymond, L.}
\newblock The {N}avier-{S}tokes limit of the {B}oltzmann equation for bounded
  collision kernels.
\newblock {\em Invent. Math. 155}, 1 (2004), 81--161.

\bibitem{GolSt2}
{\sc Golse, F., and Saint-Raymond, L.}
\newblock The incompressible {N}avier-{S}tokes limit of the {B}oltzmann
  equation for hard cutoff potentials.
\newblock {\em J. Math. Pures Appl. (9) 91}, 5 (2009), 508--552.

\bibitem{Grad58}
{\sc Grad, H.}
\newblock Principles of the kinetic theory of gases.
\newblock In {\em Handbuch der {P}hysik (herausgegeben von {S}. {F}l\"ugge),
  {B}d. 12, {T}hermodynamik der {G}ase}. Springer-Verlag, Berlin, 1958,
  pp.~205--294.

\bibitem{GMM}
{\sc Gualdani, M.~P., Mischler, S., and Mouhot, C.}
\newblock Factorization of non-symmetric operators and exponential
  {$H$}-theorem.
\newblock {\em M\'{e}m. Soc. Math. Fr. (N.S.)}, 153 (2017), 137.

\bibitem{Guo2002}
{\sc Guo, Y.}
\newblock The {V}lasov-{P}oisson-{B}oltzmann system near {M}axwellians.
\newblock {\em Comm. Pure Appl. Math. 55}, 9 (2002), 1104--1135.

\bibitem{Gu4}
{\sc Guo, Y.}
\newblock Boltzmann diffusive limit beyond the {N}avier-{S}tokes approximation.
\newblock {\em Comm. Pure Appl. Math. 59}, 5 (2006), 626--687.

\bibitem{Kim2014}
{\sc Kim, C.}
\newblock Boltzmann equation with a large potential in a periodic box.
\newblock {\em Comm. Partial Differential Equations 39}, 8 (2014), 1393--1423.

\bibitem{LevermoreMasmoudi10}
{\sc Levermore, D., and Masmoudi, N.}
\newblock From the {B}oltzmann equation to an incompressible
  {N}avier-{S}tokes-{F}ourier system.
\newblock {\em Arch. Ration. Mech. Anal. 196}, 3 (2010), 753--809.

\bibitem{LionsMasmoudi01}
{\sc Lions, P.-L., and Masmoudi, N.}
\newblock From the {B}oltzmann equations to the equations of incompressible
  fluid mechanics. {I}, {II}.
\newblock {\em Arch. Ration. Mech. Anal. 158}, 3 (2001), 173--193, 195--211.

\bibitem{Mou}
{\sc Mouhot, C.}
\newblock Explicit coercivity estimates for the linearized {B}oltzmann and
  {L}andau operators.
\newblock {\em Comm. Partial Differential Equations 31}, 7-9 (2006),
  1321--1348.

\bibitem{MouNeu}
{\sc Mouhot, C., and Neumann, L.}
\newblock Quantitative perturbative study of convergence to equilibrium for
  collisional kinetic models in the torus.
\newblock {\em Nonlinearity 19}, 4 (2006), 969--998.

\bibitem{StRay}
{\sc Saint-Raymond, L.}
\newblock {\em Hydrodynamic limits of the {B}oltzmann equation}, vol.~1971 of
  {\em Lecture Notes in Mathematics}.
\newblock Springer-Verlag, Berlin, 2009.

\bibitem{Villani02}
{\sc Villani, C.}
\newblock Limites hydrodynamiques de l'\'{e}quation de {B}oltzmann (d'apr\`es
  {C}. {B}ardos, {F}. {G}olse, {C}. {D}. {L}evermore, {P}.-{L}. {L}ions, {N}.
  {M}asmoudi, {L}. {S}aint-{R}aymond).
\newblock {\em Ast\'{e}risque}, 282 (2002).
\newblock S\'{e}minaire Bourbaki, Vol. 2000/2001.

\bibitem{Vil}
{\sc Villani, C.}
\newblock A review of mathematical topics in collisional kinetic theory.
\newblock In {\em Handbook of mathematical fluid dynamics, {V}ol. {I}}.
  North-Holland, Amsterdam, 2002, pp.~71--305.

\bibitem{XXZ2013}
{\sc Xiao, Q., Xiong, L., and Zhao, H.}
\newblock The {V}lasov-{P}oisson-{B}oltzmann system with angular cutoff for
  soft potentials.
\newblock {\em J. Differential Equations 255}, 6 (2013), 1196--1232.

\bibitem{XXZ2017}
{\sc Xiao, Q., Xiong, L., and Zhao, H.}
\newblock The {V}lasov-{P}oisson-{B}oltzmann system for the whole range of
  cutoff soft potentials.
\newblock {\em J. Funct. Anal. 272}, 1 (2017), 166--226.

\end{thebibliography}

%
\bigskip
\signmarc
\signarnaud
\signjulien

\end{document}